\def\p{\prime}
\def\pp{{\p\p}}
\def\ppp{{\p\p\p}}
\def\bZ{\mathbb Z}
\def\bR{\mathbb R}
\def\bl{\backslash}
\def\bbl{\mathbbm 1}
\def\ol{\overline}
\def\lm{\lambda}
\def\cC{\mathcal C}
\def\sm{\sigma}
\def\al{\alpha}
\def\ot{\otimes}
\def\t{\times}
\def\Th{\Theta}
\def\de{\delta}
\def\bP{\mathbb P}
\def\cY{\mathcal Y}
\def\cX{\mathcal X}
\def\wt{\widetilde}
\def\ep{\epsilon}
\def\der{\partial}
\def\bE{\mathbb{E}}
\def\EE{\bE}
\def\cE{\mathcal{E}}
\def\cN{\mathcal{N}}
\def\Na{\nabla}
\DeclareMathOperator\Cov{\mathrm{Cov}}
\DeclareMathOperator\Var{\mathrm{Var}}
\DeclareMathOperator\Ent{\mathrm{Ent}}
\DeclareMathOperator\KL{\mathrm{KL}}
\DeclareMathOperator\SKL{\mathrm{SKL}}
\DeclareMathOperator\Unif{\mathrm{Unif}}
\DeclareMathOperator\PC{\mathrm{PC}}
\DeclareMathOperator\br{\mathrm{br}}
\DeclareMathOperator\BSC{\mathrm{BSC}}
\DeclareMathOperator\Ber{\mathrm{Ber}}
\DeclareMathOperator\Col{\mathrm{Col}}
\DeclareMathOperator\TV{\mathrm{TV}}
\DeclareMathOperator\Po{\mathrm{Po}}
\DeclareMathOperator\Capa{\mathrm{Cap}}
\newcommand{\rom}[1]{\textup{\uppercase\expandafter{\romannumeral#1}}}
\newtheorem{thm}{Theorem}
\newtheorem{lemma}[thm]{Lemma}
\newtheorem{prop}[thm]{Proposition}
\newtheorem{coro}[thm]{Corollary}
\newtheorem{claim}[thm]{Claim}
\newtheorem*{reminderInternal}{Reminder: \reminderCurrent}
\theoremstyle{definition}
\newtheorem{defn}[thm]{Definition}
\newtheorem{rmk}[thm]{Remark}
\begin{document}

\title{Non-linear Log-Sobolev inequalities for the Potts semigroup and applications to reconstruction problems}
\author{Yuzhou Gu\thanks{\texttt{yuzhougu@mit.edu}. IDSS, LIDS, and Dept. of EECS, MIT, Cambridge, MA, 02139, USA.}
\and
Yury Polyanskiy\thanks{\texttt{yp@mit.edu}. IDSS, LIDS, and Dept. of EECS, MIT, Cambridge, MA, 02139, USA.}}

\date{}



\maketitle

\begin{abstract}
	Consider the semigroup of random walk on a complete graph, which we call the Potts semigroup.
	Diaconis and Saloff-Coste~\cite{DSC96} computed the maximum of the ratio of the relative entropy and the Dirichlet form obtaining the constant $\alpha_2$ in the $2$-log-Sobolev inequality ($2$-LSI). In this paper, we obtain the best possible \textit{non-linear} inequality relating entropy
	and the Dirichlet form (i.e., $p$-NLSI, $p\ge1$). As an example, we show $\alpha_1 = 1+\frac{1+o(1)}{\log k}$.

	By integrating the $1$-NLSI we obtain the new strong data processing inequality (SDPI), which in
	turn allows us to improve results of Mossel and Peres \cite{MP03} on reconstruction thresholds for Potts models on trees. A special case is the problem of reconstructing color of the root of a $k$-colored tree given knowledge of colors of all the leaves. We show that
	to have a non-trivial reconstruction probability the branching number of the tree should be at least
	$$\frac{\log k}{\log k - \log(k-1)} = (1-o(1))k\log k.$$
	This recovers previous results (of Sly \cite{Sly09b} and Bhatnagar et al.~\cite{BVVW11}) in (slightly) more generality, but more importantly avoids the need for any coloring-specialized arguments.
	Similarly, we improve the state-of-the-art on the weak recovery threshold for the stochastic block model with $k$ balanced groups, for all $k\ge 3$.
	To further show the power of our method, we prove optimal non-reconstruction results for a broadcasting on trees model with Gaussian kernels, closing a gap left open by Eldan et al.~\cite{Eld20}.
	These improvements advocate information-theoretic methods as a useful complement to the conventional techniques originating from the statistical physics.
\end{abstract}

\textbf{\textit{Keywords---}}
{Log-Sobolev inequality,
Potts model,
Broadcasting on trees,
Stochastic block model}


\newpage
\setcounter{tocdepth}{2}
\tableofcontents

\newpage

\section{Introduction}

\textbf{Log-Sobolev inequalities.}
Log-Sobolev inequalities (LSIs) are a class of inequalities bounding the rate of convergence of a Markov semigroup to its stationary distribution. They upper bound certain relative entropy (KL divergence) functions via a multiple of the Dirichlet form.

Let us introduce some standard notions.
Let $\cX$ be a finite alphabet and $K: \cX \t \cX \to [0, 1]$ be a Markov kernel, i.e., for all $x\in \cX$, we have $\sum_{y\in \cX} K(x, y) = 1$. Let $L = K-I$. We consider the semigroup $(T_t)_{t\ge 0}$, where $T_t = \exp(t L)$.
Let $\pi$ be a stationary measure for the semigroup.
For $f, g : \cX \to \bR$, the Dirichlet form is defined by
$$\cE(f, g) := -\bE_\pi[ (Lf) g] = -\sum_{x, y\in \cX} L(x, y) f(y) g(x) \pi(x).$$
For non-zero $f: \cX \to \bR_{\ge 0}$, the relative entropy is defined by
$$\Ent_\pi(f) := \bE_\pi [f \log \frac{f}{\bE_\pi[f]}] = \bE_\pi[f] D( \pi^{(f)} || \pi)$$
where $\pi^{(f)}$ is a distribution defined as $\pi^{(f)}(x) = \frac{f(x)\pi(x)}{\bE_\pi[f]}$,
and $D(P||Q)$ is the Kullback-Leibler divergence\footnote{Throughout this paper, $\log$ means natural logarithm.}
\begin{align*}
	D(P||Q):=\int \log (\frac {dP}{dQ}) dP.
\end{align*}

For $p>1$, we say the semigroup $(T_t)_{t\ge 0}$ admits $p$-log-Sobolev inequality ($p$-LSI), if for some constant $\al_p$, for all non-zero non-negative real functions $f$ on $\cX$, we have
$$\Ent_\pi(f) \le \frac 1{\al_p} \cE(f^{\frac 1p}, f^{1-\frac 1p}).$$
For $p=1$, we define $1$-LSI as
$$\Ent_\pi(f) \le \frac 1{\al_1} \cE(f, \log f).$$
The case $p=2$ is the standard log-Sobolev inequality, originally studied in Gross \cite{Gro75}.
The case $p=1$ is studied also under the name ``modified log-Sobolev inequality'' (e.g.~\cite{BT06}).

The relationship between $1$-LSI and semigroup convergence can be seen from the following identity
\begin{equation}
	\frac{d}{dt}|_{t=0} \Ent_\pi(T_t f) = -\cE(f, \log f).
\end{equation}
Therefore
\begin{equation}
	\Ent_\pi(T_t f) \le \exp(-\al_1 t) \Ent_\pi(f) \label{Eqn1LSCExp}
\end{equation}
which corresponds to a property of $T_t$ to exponentially fast relax to equillibrium (in the sense of relative entropy).

Polyanskiy and Samorodnitsky \cite{PS19} introduced non-linear $p$-log-Sobolev inequalities ($p$-NLSI), a finer description of the relationship between relative entropy and Dirichlet forms.
For $p\ge 1$, we say the semigroup satisfies $p$-LSI if for some non-negative function\footnote{\cite{PS19} requires the
function $\Phi_p$ to be concave. We do not make this assumption initially, however to extend these inequalities to
product semigroups the concavification will be necessary -- see Section~\ref{SecProd}.}. $\Phi_p: \bR_{\ge 0} \to \bR_{\ge 0}$, for all non-zero $f: \cX \to \bR_{\ge 0}$, we have
\begin{align}\label{eq:nlsi}
	\frac{\Ent_\pi(f)}{\bE_\pi[f]} \le \Phi_p(\frac{\cE(f^{\frac 1p}, f^{1-\frac 1p})}{\bE_\pi[f]}),
\end{align}
where for $p=1$, $\cE(f^{\frac 1p}, f^{1-\frac 1p})$ should be replaced with $\cE(f, \log f)$.

Non-linear $p$-log-Sobolev inequalities imply the ordinary $p$-log-Sobolev inequalities for
$$\al_p = \inf_{x>0} \frac{x}{\Phi_p(x)}.$$
When $\Phi_p$ is concave, this can be further simplified to $\al_p = (\Phi_p^\p(0))^{-1}$.

Mossel et al.~\cite{MOS13} proved that for reversible $(K, \pi)$,
\begin{align}\label{EqnpLSIComp}
	\frac{p^2(q-1)}{q^2(p-1)} \al_p \le \al_q \le \al_p
\end{align}
for $1 < q\le p\le 2$.
We discuss some general facts about dependence of $\alpha_p$ and $\Phi_p$ on $p$ in Appendix~\ref{apx:lsi_conc}.


\textbf{Potts semigroup.}
In this paper, we focus on the simplest Markov semigroup, corresponding to the random walk on a complete graph. The Markov kernel is $K(x, y) = \frac 1{k-1} \bbl\{x\ne y\}$, where $k = \#\cX$.\footnote{In the following, we always write $\cX = [k]$.}
We call it the Potts semigroup, because every operator $T_t$ in the semigroup is a ferromagnetic Potts channel. Its
stationary distribution $\pi$ is uniform on $\cX$ and its Dirichlet form is rescaled covariance:
$$ \cE(f,g) = \frac{k}{k-1} \Cov_\pi(f,g)$$

A Potts channel $\PC_\lm$ is a Markov kernel $[k] \t [k]\to [0, 1]$ defined by
\begin{align*}
	\PC_\lm(x, y) = \frac {1-\lm}k\bbl\{x\ne y\} + (\frac 1k + \frac {k-1}k\lm) \bbl\{x=y\}.
\end{align*}
We parametrize them by $\lm$ because $\lm$ is the second largest eigenvalue of $\PC_\lm$.
The valid region of $\lm$ is $[-\frac 1{k-1}, 1]$. When $\lm>0$, we call $\PC_\lm$ a ferromagnetic Potts channel; when $\lm < 0$, we call it an anti-ferromagnetic Potts channel.
One can see that $T_t$ in the Potts semigroup is exactly $\PC_{\exp(-\frac k{k-1} t)}$.

Diaconis and Saloff-Coste \cite{DSC96} computed the $2$-log-Sobolev constant
\begin{equation}\label{eq:2lsi_potts}
	\al_2 = \frac{k-2}{(k-1)\log(k-1)}\,.
\end{equation}They observed that the infimum of the ratio $\frac{\cE(f, f)}{\Ent_\pi[f^2]}$ is achieved at a two-valued function $f$, i.e., $f$ takes exactly two values. In fact, the infimum is achieved at a function $f$ where $f(1) = k-1$ and $f(i) = 1$ for $i\ne 1$.
For $p\ne 2$, it seems hard to give a closed-form expression for $\al_p$.
Goel \cite{Goe04} proved that $$\frac{k}{k-1} \le \al_1 \le (1+\frac 4{\log (k-1)})\frac{k}{k-1},$$ where the upper bound is by using a two-valued function $f$, where $f(1) = k+1$ and $f(i)=1$ for $i\ne 1$.
Bobkov and Tetali \cite{BT06} also discussed bounds on $\al_1$ and $\al_2$, proving that
$$\al_1 \ge \frac{k}{k-1} + \frac 2{\sqrt{k-1}}.$$
These computations lead to the guess that for all $p$, the best possible $p$-LSI constant $\al_p$ for the Potts semigroup is achieved at a two-valued function.
In Section \ref{SecLSI}, we prove that this is true, and in fact true for $p$-NLSIs for the Potts semigroup:
For fixed $\frac{\Ent_\pi(f)}{\bE_\pi[f]}$, the unique function (up to scalar multiplication) of the form $f(1) \ge f(2)
= \cdots = f(k)$ minimizes $\frac{\cE(f^{\frac 1p}, f^{1-\frac 1p})}{\bE_\pi[f]}$. 
As a result we get the sharpest $p$-NLSIs for the Potts semigroup for all $p\ge 1$.

We define a useful function $\psi: [0, 1] \to \bR$ as follows.
\begin{equation}\label{EqnPsi}
  \psi(x) := \log k + x\log x + (1-x) \log \frac{1-x}{k-1}.
\end{equation}
Note that $\psi(x)$ is the KL divergence between $(x, \frac{1-x}{k-1}, \ldots, \frac{1-x}{k-1})$ and $\Unif([k])$.
Simple computation shows that $\psi$ is non-negative, convex, $\psi(\frac 1k)=0$, strictly decreasing on $[0, \frac 1k]$, strictly increasing on $[\frac 1k, 1]$, and takes value in $[0, \log k]$.

We define the following useful functions.
For $p>1$, define $\xi_p: [0, 1]\to \bR$ as
\begin{align}\label{EqnXip}
	\xi_p(x) = \frac k{k-1}(1 - \frac 1k(x^{\frac 1p} + (k-1) (\frac {1-x}{k-1})^{\frac 1p})(x^{1-\frac 1p} + (k-1) (\frac {1-x}{k-1})^{1-\frac 1p})).
\end{align}
Define $\xi_1: [0, 1]\to \bR$ as
\begin{align}\label{EqnXi1}
	\xi_1(x) = \frac 1{k-1}(-\log x - (k-1)\log \frac{1-x}{k-1} + k(x\log x + (1-x) \log \frac{1-x}{k-1})).
\end{align}

For $p\ge 1$, define $b_p: [0, \log k] \to \bR$ as\footnote{In the case $k=2$, $b_p$ differs from \cite{PS19} by
a constant factor due to a different parametrization of the semigroup.}
\begin{equation}\label{eq:bp_def}
	b_p(\psi(x)) = \xi_p(x)
\end{equation}
for $x\in [\frac 1k, 1]$, where $\psi$ is defined in \eqref{EqnPsi}.

\begin{thm}[$p$-NLSI for Potts semigroup]\label{ThmNonLinpLSI}
	Fix $p\ge 1$. The Potts semigroup satisfies $p$-NLSI with $\Phi_p = b_p^{-1}$, where $b_p$ is defined in \eqref{eq:bp_def}. Furthermore, this is the best possible $p$-NLSI.

	In other words, for any $c\in [0, \log k]$, among all functions $f: [k]\to \bR_{\ge 0}$ with $\bE_\pi f = 1$, $\Ent(f) = c$, there is a unique (up to permuting the alphabet) minimizer of $\cE(f^{\frac 1p}, f^{1-\frac 1p})$ ($\cE(f, \log f)$ for $p=1$), and it is of form $(x,\frac{k-x}{k-1},\cdots,\frac{k-x}{k-1})$ with $x\in [1, k]$.

	In particular, we have
	\begin{align}
		\al_p = \inf_{x\in (\frac 1k, 1]} \frac{\xi_p(x)}{\psi(x)}.
	\end{align}

\end{thm}

As a corollary of our $1$-NLSI, we derive the second order behavior of $\al_1$ as $k$ goes to $\infty$.
\begin{prop}\label{Prop1LSCPotts}
	For $k\ge 3$, we have
	\begin{equation}
		\frac k{k-1} (1 + \frac 1{\log k}) \le \al_1 \le \frac k{k-1} (1 + \frac {1+o(1)}{\log k}). \label{Eqn1LSCPotts}
	\end{equation}
\end{prop}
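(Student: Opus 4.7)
The plan is to apply Theorem~\ref{ThmNonLinpLSI}, which gives $\alpha_1 = \inf_{x \in (1/k,\,1]} h(x)$ with $h(x) := \xi_1(x)/\psi(x)$, after first regrouping logarithms in \eqref{EqnXi1} to the compact form
\begin{equation*}
    \xi_1(x) = \frac{kx-1}{k-1}\log\frac{(k-1)x}{1-x}.
\end{equation*}
The two bounds will be proved separately: the upper bound by plugging in a test point, the lower bound by critical-point analysis.

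\emph{Upper bound.} I will take $x = k^{-\epsilon(k)}$ for any $\epsilon(k) \to 0$ with $\epsilon(k)\log k \to \infty$ (e.g., $\epsilon(k) = 1/\log\log k$). A routine asymptotic expansion as $k \to \infty$ gives $\xi_1(x) = (1+o(1))(1-\epsilon)k^{-\epsilon}\log k$ and $\psi(x) = (1+o(1))\,k^{-\epsilon}\bigl((1-\epsilon)\log k - 1\bigr)$, hence $h(x) = 1 + \frac{1+o(1)}{(1-\epsilon)\log k} = 1 + \frac{1+o(1)}{\log k}$, which is $\le \frac{k}{k-1}\bigl(1 + \frac{1+o(1)}{\log k}\bigr)$ since $\frac{k}{k-1}>1$.

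\emph{Lower bound.} I will show $h(x) \ge \frac{k}{k-1}\bigl(1 + \frac{1}{\log k}\bigr)$ on $(1/k,1]$. Since $h$ is continuous on $[1/k,1)$ (extended by the L'H\^opital limit $h(1/k) := 2k/(k-1)$, computed from $\psi''(1/k) = k^2/(k-1)$ and $\xi_1''(1/k) = 2k^3/(k-1)^2$) and $h(1) = \infty$, its infimum is attained either at $x = 1/k$---where $h = 2k/(k-1) \ge \frac{k}{k-1}(1 + 1/\log k)$ for $k\ge 3$ (equivalent to $\log k \ge 1$)---or at an interior critical point $x^\ast$. At such $x^\ast$, $h(x^\ast) = \xi_1'(x^\ast)/\psi'(x^\ast)$; using $\psi'(x) = \log\frac{(k-1)x}{1-x}$ and the identity $\xi_1'(x) = \frac{k}{k-1}\psi'(x) + \frac{kx-1}{(k-1)x(1-x)}$ one obtains
\begin{equation*}
    h(x^\ast) = \frac{k}{k-1} + \frac{kx^\ast-1}{(k-1)x^\ast(1-x^\ast)\,\psi'(x^\ast)}.
\end{equation*}
Writing $U := (k-1)x^\ast/(1-x^\ast) \ge 1$, the inequality $h(x^\ast) \ge \frac{k}{k-1}(1+1/\log k)$ reduces by direct algebra to $F(U) \ge 0$, where $F(U) := (U-1)(k-1+U)\log k - kU\log U$.

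The main obstacle is verifying $F(U) \ge 0$ for all $U \ge 1$ and $k \ge 3$. My plan: note $F(1) = 0$, $F'(1) = k(\log k - 1) > 0$, and $F''(U) = 2\log k - k/U$ vanishes uniquely at $U_0 = k/(2\log k)$, with $F'' < 0$ before and $F'' > 0$ after. Hence $F'$ attains its minimum on $[1,\infty)$ at $U_0$, with value $F'(U_0) = k\log(2\log k) - 2\log k$. One verifies $F'(U_0) > 0$ for $k \ge 3$ via the elementary inequality $\log(2\log k) > 2\log k/k$, which is tightest at $k = 3$, where it reads $\log(2\log 3) \approx 0.787 > 0.732 \approx 2\log 3/3$, and improves as $k$ grows (the right side is decreasing in $k$ for $k \ge e$ while the left is increasing). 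Therefore $F' > 0$ on $[1,\infty)$, so $F$ is strictly increasing and $F(U) \ge F(1) = 0$, completing the lower bound.
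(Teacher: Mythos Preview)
Your proof is correct, and your lower bound takes a genuinely different route from the paper's.

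For the \emph{upper bound}, both you and the paper plug in a test point and expand. The paper uses $x = 2/\log k$ and works with the auxiliary function $f(x) = \frac{\log k}{k}(-\log x - (k-1)\log\frac{1-x}{k-1}) - \log^2 k - \psi(x)$, showing $f(x)/\psi(x) = o(1)$ there. Your choice $x = k^{-\epsilon(k)}$ with $\epsilon \to 0$, $\epsilon\log k \to \infty$ is different but equally valid; the key point in both cases is that the test point sits where $\psi(x)$ is bounded away from $0$ and from $\log k$ (the paper's remark after the proof notes the true minimizer is near $x = \frac{2+o(1)}{\log k}$). Your expansion is correct, though you might want to make explicit that the multiplicative $(1+o(1))$ factors in $\xi_1$ and $\psi$ cancel to leave $h(x) - 1 = \frac{1+o(1)}{(1-\epsilon)\log k}$ rather than an additive $o(1)$, since this is where a careless reader might worry the second-order term gets swamped.

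For the \emph{lower bound}, the approaches diverge substantially. The paper rewrites $\xi_1$ as $\frac{k}{k-1}(\psi(x) + \text{remainder})$, reduces to showing a single function $f(x)\ge 0$ with $f(1/k)=0$, and proves $f'(x)\ge 0$ by a somewhat delicate argument that treats $k$ as a continuous parameter, shows $\partial_k f'(x) \ge 0$, and then handles the boundary cases $k=1/x$ and $k=3$ separately (the latter via $f''\ge 0$). Your approach instead exploits the compact identity $\xi_1(x) = \frac{kx-1}{k-1}\psi'(x)$, so that at any critical point the ratio $h(x^*) = \xi_1'/\psi'$ has the closed form $\frac{k}{k-1} + \frac{kx^*-1}{(k-1)x^*(1-x^*)\psi'(x^*)}$; the substitution $U = (k-1)x^*/(1-x^*)$ then linearizes the problem to the single-variable inequality $F(U) \ge 0$, which you dispatch by elementary convexity analysis and one numerical check at $k=3$. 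This is cleaner: no continuous-$k$ trick, no separate convexity argument for $k=3$, and the structure (critical-point value expressible via $\xi_1'/\psi'$) makes transparent why the identity $\xi_1 = \frac{kx-1}{k-1}\psi'$ is the right starting point.
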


\textbf{Strong data processing inequalities.}
The data processing inequality\footnote{We recall that for a pair of random variables $X,Y$ we define $I(X;Y) = D(P_{X,Y}
\| P_X P_Y)$.} states that $I(U; Y) \le I(U; X)$ for any Markov chain $U\to X \to Y$, i.e., we cannot gain information by going through a channel. It is natural to think that when the channel $X\to Y$ is noisy, we should strictly lose information, i.e., $I(U; Y) < I(U; X)$. In fact, we have $I(U; Y) \le \eta I(U; X)$, where the constant $\eta$ depends only on the channel $P_{Y|X}$, but not on the distribution of $U$ and $X$.
Such inequalities are called strong data processing inequalities (SDPIs). We distinguish between the inequalities that
depend on the distribution $P_X$ and that are independent of it. Namely, fix a stochastic matrix (conditional
distribution) $W$ and input distribution (row-vector) $Q_0$ and define
\begin{align*}\eta_{\KL}(W) = \sup_{P, Q : 0<D(P||Q) < \infty} \frac{D(PW|| QW)}{D(P||Q)}\,,\\
\eta_{\KL}(W, Q_0) = \sup_{P: 0<D(P||Q_0) < \infty} \frac{D(PW|| Q_0W)}{D(P||Q_0)}
\end{align*}
It can be shown, e.g.~\cite{PW17}, that we also have alternative characterizations:
\begin{align*}\eta_{\KL}(W) &= \sup_{U\to X\to Y} \frac{I(U;Y)}{I(U;X)}\,,\\
\eta_{\KL}(W, Q_0) &= \sup_{U\to X\to Y, X\sim Q_0} \frac{I(U;Y)}{I(U;X)}\,,
\end{align*}
where $\bP[Y=y|X=x] = W_{x,y}$. Even more generally, some channels can be shown to satisfy (for all Markov chains $U\to
X\to Y$ with $P_{Y|X}$ as before, arbitrary $U$ and fixed or arbitrary $P_X$)
\begin{equation}\label{eq:nsdpi}
	I(U;Y) \le s(I(U;X))\,,
\end{equation}
for some non-linear function $s$.
See \cite{Rag16} for more background on SDPIs and their relationship with LSIs.

From \eqref{Eqn1LSCExp} and Proposition~\ref{Prop1LSCPotts} we obtain
\begin{align}\label{eq:etakl_1lsi}
	\eta_{\KL}(\PC_\lm, \pi) \le \lm^{\frac{k-1}k \al_1} = \lm^{1+\frac{1+o(1)}{\log k}}
\end{align}
for $\lm \in [0, 1]$ and $o(1)\to0$ as $k\to \infty$. It turns out that $1$-NLSI can be seen as an infinitesimal version
of the non-linear SDPIs (see~\cite[Theorem 2]{PS19}). Thus, we can prove the best possible non-linear SDPI for Potts
channels.

\begin{thm}[Non-linear SDPI for Potts channel]\label{ThmSDPIPotts}
	Fix $\lm \in [-\frac 1{k-1}, 1]$.
	Define $s_\lm : [0, \log k] \to \bR$ as $$s_\lm(\psi(x))= \psi(\lm x + \frac{1-\lm}k),$$
	for $x\in [\frac 1k, 1]$, where $\psi$ is defined in \eqref{EqnPsi}.
	Let $\hat s_\lm$ be the concave envelope of $s_\lm$.
	For any Markov chain $U\to X\to Y$ where $X$ has uniform distribution and $X\to Y$ is the Potts channel $\PC_\lm$, we have
	$$I(U; Y) \le \hat s_\lm(I(U; X)).$$
	In particular, we have
	\begin{equation}\label{eq:coro_sdpi_potts}
		\eta_{\KL}(\PC_\lm, \pi) = \sup_{x\in (\frac 1k, 1]} \frac{\psi(\lm x + \frac{1-\lm}k)}{\psi(x)}\,.
	\end{equation}

	Furthermore, this is the best possible non-linear SDPI for Potts channels, in the sense that
	for any $c \in [0, \log k]$, there exists a Markov chain $U\to X\to Y$
	where $X$ has uniform distribution, $X\to Y$ is the Potts channel $\PC_\lm$,
	and $I(U; X)=c$, such that $I(U; Y) = \hat s_\lm(c)$.
\end{thm}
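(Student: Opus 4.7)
The plan is to follow \cite[Theorem 2]{PS19} and derive the SDPI from the $1$-NLSI of Theorem~\ref{ThmNonLinpLSI} by integrating along the Potts semigroup, then pass from a per-distribution KL-contraction bound to a mutual-information SDPI via Jensen's inequality applied to the concave envelope. First, reduce to a per-distribution inequality: since $X$ is uniform and $\pi\PC_\lm = \pi$, we have $I(U;X) = \bE_U[D(P_{X|U}||\pi)]$ and $I(U;Y) = \bE_U[D(P_{X|U}\PC_\lm||\pi)]$, so it suffices to prove $D(Q\PC_\lm||\pi) \le s_\lm(D(Q||\pi))$ for every distribution $Q$ on $[k]$; combined with $s_\lm \le \hat s_\lm$ and Jensen's inequality (concavity of $\hat s_\lm$) this yields $I(U;Y) \le \hat s_\lm(I(U;X))$.

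To establish the per-distribution bound for $\lm \in (0, 1]$, parametrize $\lm = \exp(-kt/(k-1))$ so that $\PC_\lm = T_t$, set $f := dQ/d\pi$, and define $F(s) := \Ent_\pi(T_s f)$. The identity $F'(s) = -\cE(T_s f, \log T_s f)$ and the $1$-NLSI give the differential inequality $F'(s) \le -b_1(F(s))$, and standard ODE comparison yields $F(t) \le G(t, F(0))$, where $G$ solves $\partial_t G = -b_1(G)$ with $G(0, c) = c$. Because Theorem~\ref{ThmNonLinpLSI} is sharp at the $1$-peak two-valued extremals $f_x$ (with $x \in [1/k, 1]$) and $T_s$ preserves the $1$-peak form, mapping parameter $x$ to $x_s = \lm_s x + (1-\lm_s)/k$, equality holds along this trajectory. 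This identifies $G(t, c) = s_{\lm_t}(c)$, using that $\psi$ is a bijection $[1/k, 1] \to [0, \log k]$. For $\lm \in [-1/(k-1), 0)$, I would use the composition law $\PC_a \PC_b = \PC_{ab}$ to factor $\PC_\lm = \PC_{-1/(k-1)} \PC_\mu$ with $\mu := -\lm(k-1) \in (0, 1]$, verify the algebraic identity $s_{-1/(k-1)}(s_\mu(c)) = s_\lm(c)$, and apply the ferromagnetic case to $\PC_\mu$. It remains to handle the single channel $\PC_{-1/(k-1)}$, where $(Q\PC_{-1/(k-1)})(y) = (1 - Q(y))/(k-1)$ reduces the claim to a finite-dimensional extremization solvable by a Lagrangian/symmetrization argument, showing that the $1$-peak $Q$ is optimal.

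For sharpness: given $c \in [0, \log k]$, set $x := \psi^{-1}(c)$ (with $\psi$ restricted to $[1/k, 1]$), take $U$ uniform on $[k]$, and let $P_{X|U=j}$ be the $1$-peak distribution with mass $x$ at $j$ and mass $(1-x)/(k-1)$ elsewhere; symmetry gives $P_X = \pi$, and a direct computation yields $(I(U;X), I(U;Y)) = (\psi(x), \psi(\lm x + (1-\lm)/k)) = (c, s_\lm(c))$. For $c$ where $\hat s_\lm(c) > s_\lm(c)$, decompose $\hat s_\lm(c) = \al s_\lm(c_1) + (1-\al) s_\lm(c_2)$ with $\al c_1 + (1-\al) c_2 = c$ and use a Bernoulli time-sharing mixture of the two corresponding $1$-peak constructions to attain $(c, \hat s_\lm(c))$. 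The main obstacle is the anti-ferromagnetic regime $\lm < 0$: $\PC_\lm$ lies outside the Potts semigroup, so the NLSI integration argument does not apply directly; the factorization through the extreme channel $\PC_{-1/(k-1)}$ circumvents this at the cost of a separate finite-dimensional extremization for that channel alone.
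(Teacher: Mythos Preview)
Your proposal is correct. The reduction to the per-distribution bound $D(Q\PC_\lm\|\pi)\le s_\lm(D(Q\|\pi))$ followed by Jensen with $\hat s_\lm$ is exactly how the paper proceeds, and your sharpness construction (one-peak posteriors plus Bernoulli time-sharing) is equivalent to the paper's, which phrases the same thing as $U=(V,Z)$ with $V\sim \PC_\rho(X)$ or $V\sim \PC_\tau(X)$.

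The substantive difference is in how the per-distribution bound is established. For $\lm>0$ you integrate the $1$-NLSI along the semigroup; the paper records this as an \emph{alternative} proof of Proposition~\ref{PropSDPIPotts} in the ferromagnetic case, so here you coincide. For $\lm<0$, however, the paper does \emph{not} factor through the coloring channel. Instead it proves Proposition~\ref{PropSDPIPotts} directly for all $\lm\in[-\tfrac1{k-1},1]$ by a uniform Lagrangian argument: KKT conditions force the extremizer to be two-valued (Claims~\ref{ClaimSDPIZero} and~\ref{ClaimSDPIDet}), and a continuation in the multiplicity parameter $m$ (Claim~\ref{ClaimSDPIDer}, which in turn rests on Lemma~\ref{LemmaElem}) pushes the optimum to $m=1$.

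Your factorization $\PC_\lm=\PC_{-1/(k-1)}\PC_\mu$ with $\mu=-\lm(k-1)$ is valid, the composition identity $s_{-1/(k-1)}\circ s_\mu=s_\lm$ checks out, and $s_{-1/(k-1)}$ is indeed non-decreasing, so the reduction is sound. What it buys you is that only one anti-ferromagnetic channel needs a direct extremization. What it costs you is that this remaining extremization for $\lm=-\tfrac1{k-1}$ (maximize $\sum_i(1-q_i)\log(1-q_i)$ subject to fixed $H(Q)$) is not obviously easier than the general-$\lm$ case: it still requires the same KKT-plus-two-values-plus-continuation machinery that the paper deploys uniformly in $\lm$. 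So your ``separate finite-dimensional extremization'' step, which you leave as a sketch, is in practice the same work as the paper's Proposition~\ref{PropSDPIPotts} specialized to $\lm=-\tfrac1{k-1}$; the paper simply chooses to do that argument once for all $\lm$ rather than split the range.
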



To compare the input-restricted $\eta_{\KL}$ with input-unrestricted one, in Appendix~\ref{SecPottsEtaUnres}
we compute the exact value of $\eta_{\KL}(\PC_\lm)$, and in Appendix~\ref{SecUpbd}
we prove that
\begin{equation}\label{eq:eta_res_vs_unres}
	\eta_{\KL}(\PC_\lm, \pi)< \eta_{\KL}(\PC_\lm)
\end{equation}
for $k\ge 3$ and $\lm \in [-\frac 1{k-1}, 0)\cup (0, 1)$. See Section~\ref{sec:klim} for discussion on the tightness of
the bound~\eqref{eq:etakl_1lsi}.

\begin{rmk}[Tensorization]
In Section~\ref{SecProd} we extend $p$-NLSI and SDPIs to product spaces/channels.
In these results, the functions $\check b_p$ (convexification of $b_p$) and $\hat s_\lm$
(concavification of $s_\lm$) appear naturally.
When $k=2$, $b_p$ is already convex, and $s_\lm$ is concave, leading to many good properties
for the hypercube and for binary symmetric channels (e.g.~Mrs.~Gerber's Lemma \cite{WZ73}).
However, as shown in Proposition \ref{PropNonConv}, these properties do not hold anymore for $k\ge 3$, implying a
different structure of extremal distributions that are the slowest to relax to equillibrium as $t\to\infty$ in
$T_t^\otimes n$, see Section\ref{sec:linpiece}.
\end{rmk}

\textbf{Applications.}

One of the implications of NLSIs are improved hypercontractivity inequalities for functions in $[k]^n$ supported on
subsets of cardinality $k^{(1-\epsilon)n}$ -- this was established generally (for any semigroup) in~\cite{PS19}. Here,
we show how NLSIs can be used to close the gap (between functional-analytic proofs and explicit combinatorics
of~\cite{Lin64}) in the edge-isoperimetric inequality for the $[k]^n$ -- see Section~\ref{sec:edgeisop}.

Similarly, SDPIs have numerous applications. Originally introduced to study certain multi-user data-compression questions in
information theory, they have been since adopted in many different scenarios. For example, Evans and Schulman
\cite{ES99} use SDPIs to investigate fundamental limits of fault tolerant computing.  Polyanskiy and Wu \cite{PW17, PW18} further developed the idea and related the amount of information transmitted in a directed or undirected
graphical model in terms of the percolation probability (existence of an open path) on the same network. Other
notable applications include distributed estimation~\cite{XR15,BGM16} and communication complexity~\cite{HLPS19}.

More directly related to our paper is the work of~\cite{EKPS00} which applies an SDPI (for a Potts channel with $k=2$)
to bound the threshold on for reconstruction problem for the Ising model on a tree. We describe a more general class of
such questions.

Consider a tree with a marked root.
Each vertex of the tree has a random spin in $[k]$, generated in the following way: the root spin is generated according to some known distribution, and for each vertex, its spin is generated from its parent's spin, through some channel $M$.
We say the problem has reconstruction if given spins of all nodes far away enough from the root, we can guess the root spin better than guessing from the initial distribution.
Equivalently, the problem has reconstruction if the mutual information between the root spin and the spins of all nodes distance $d$ away from the root goes to a non-zero limit, as $d$ goes to $\infty$.
The problem has non-reconstruction otherwise.

Reconstruction problems trees have been studied for a long time. Kesten and Stigum \cite{KS66} proved the so-called Kesten-Stigum bound, a reconstruction result based on the second eigenvalue of the channel.
Physicists study the problem from a spin glass theoretic perspective (e.g., M{\'e}zard and Montanari \cite{MM06}).
Based on careful analysis of the evolution of magnetization, several authors have obtained very tight reconstruction thresholds for various models (e.g., Sly \cite{Sly09a, Sly09b}, Bhatnagar et al.~\cite{BST10}, Liu and Ning \cite{LN19}).

In this paper, we attack this problem using SDPIs. The method of~\cite{EKPS00} showed that reconstruction is impossible
if $\br(T) \eta_{\KL}(M) < 1$. Here we improve their result by considering the input-restricted $\eta_{\KL}$. We note that the
idea of using input-restricted (but not for $\eta_{\KL}$) has appeared in~\cite{FK09a} -- see Remark~\ref{RmkColorComparison}
below.
\begin{thm}\label{ThmTreeGeneral}
	Consider the broadcast model on a tree $T$ with channel $M$.
	Let $q^*$ be a stationary distribution, i.e., $q^* M = q^*$.
	Let $M^\vee$ denote the reverse channel, i.e., any channel that satisfies $q^*_j M^\vee_{j,i} = q^*_i M_{i,j}$ for all $i,j\in [k]$.
	Then the model has non-reconstruction if
	$$\eta_{\KL}(M^\vee, q^*) \br(T) < 1,$$
	where $\br(T)$ is the branching number of $T$, whose definition is given in Definition \ref{DefnBranch}.
\end{thm}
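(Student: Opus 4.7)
My plan is to combine a subadditivity bound for mutual information across siblings with an SDPI step using the reverse channel $M^\vee$, iterate the resulting recursion down to an arbitrary cutset of $T$, and invoke the definition of the branching number to finish. I will work under the assumption that the root is initialized with the stationary law $q^*$, so that every $X_v\sim q^*$; the general case reduces to this because the marginal at depth $d$ converges to $q^*$ and only the $d\to\infty$ behavior is relevant for non-reconstruction.

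The key recursive step I would establish is: for any vertex $w$ with children $u_1,\dots,u_m$, writing $T_{u_i}$ for the subtree rooted at $u_i$, and for any set $S$ of descendants of $w$,
\[ I(X_w; X_S) \;\le\; \eta_{\KL}(M^\vee, q^*) \sum_{i=1}^m I(X_{u_i}; X_{S\cap T_{u_i}}). \]
This I would prove in two parts. (a) By the broadcast property, the random variables $(X_{T_{u_i}})_i$ are mutually independent conditional on $X_w$, so $H(X_S \mid X_w) = \sum_i H(X_{S\cap T_{u_i}} \mid X_w)$; combining with the subadditivity $H(X_S)\le \sum_i H(X_{S\cap T_{u_i}})$ gives the sum bound $I(X_w; X_S)\le \sum_i I(X_w; X_{S\cap T_{u_i}})$. (b) For each child $u_i$, the triple $X_{S\cap T_{u_i}} \to X_{u_i} \to X_w$ is a Markov chain (again by the tree's conditional-independence structure). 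Since $X_{u_i}\sim q^*$, the conditional law of $X_w$ given $X_{u_i}$ is exactly $M^\vee$ by definition of the reverse channel; the input-restricted SDPI then gives $I(X_w; X_{S\cap T_{u_i}}) \le \eta_{\KL}(M^\vee, q^*)\, I(X_{u_i}; X_{S\cap T_{u_i}})$.

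Iterating the recursion from the root and terminating each branch the first time it hits a cutset $\Pi$ (a set of vertices meeting every root-to-infinity path) yields, with $\eta := \eta_{\KL}(M^\vee, q^*)$,
\[ I(X_\rho; X_{L_d}) \;\le\; \sum_{v\in\Pi} \eta^{|v|}\, I(X_v; X_{L_d\cap T_v}) \;\le\; \log k \cdot \sum_{v\in\Pi} \eta^{|v|}, \]
using $I(X_v;\cdot)\le H(X_v)\le \log k$. The hypothesis $\eta\,\br(T) < 1$ means $1/\eta > \br(T)$, so by the definition of the branching number (Definition~\ref{DefnBranch}) there are cutsets making $\sum_{v\in\Pi} \eta^{|v|}$ arbitrarily small; hence $I(X_\rho;X_{L_d})\to 0$ as $d\to\infty$.

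The main delicate point is step (b): the SDPI must be applied ``backwards,'' with the chain running from the observed subtree up to $w$, which is what forces the appearance of $M^\vee$ and of the input-restricted constant $\eta_{\KL}(\cdot, q^*)$ in place of the unrestricted $\eta_{\KL}$ of~\cite{EKPS00}. A minor technical issue is that the iteration above is only meaningful once the cutset $\Pi$ lies within depth $\le d$; this is handled either by taking cutsets approximating $\br(T)$ and sending $d\to\infty$ afterwards, or by truncating $\Pi$ to its intersection with the depth-$d$ tree and observing that vertices at depth exactly $d$ contribute $I(X_v; X_v)=H(X_v)\le \log k$ to the bound with weight $\eta^d$.
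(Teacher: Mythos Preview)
Your proposal is correct and the core of the argument---the subadditivity bound over siblings together with the SDPI step applied to the reverse chain $X_{S\cap T_{u_i}}\to X_{u_i}\to X_w$ through $M^\vee$---is exactly what the paper does.

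The one genuine difference is in how the branching number is invoked at the end. You iterate down to a cutset $\Pi$ and appeal to the cutset characterization: $1/\eta>\br(T)$ forces $\inf_\Pi\sum_{v\in\Pi}\eta^{|v|}=0$. The paper instead stays with the flow formulation of Definition~\ref{DefnBranch}: it sets $a_u=H(q^*)^{-1}\eta^{d(u,\rho)}\lim_{h\to\infty}I(\sigma_u;\sigma_{L_{u,h}})$, shows $a_u\le\sum_{v\in c(u)}a_v$, then renormalizes $a$ multiplicatively along each root path to produce an honest flow $b\le a$, and concludes $b_\rho=0$ directly from the definition. Your route is slightly more direct but tacitly uses the max-flow/min-cut equivalence for the branching number (which is standard from Lyons~\cite{Lyo90} but is not what Definition~\ref{DefnBranch} literally states); the paper's flow-normalization trick avoids that citation at the cost of a small extra construction. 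Either way works.
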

Our method is very simple, is non-asymptotic, works for the branching number, and is often very tight.
Previous results often impose additional restrictions on the tree (e.g., regular tree, or Galton-Watson with Poisson offspring distribution), and some only work as expected degree goes to infinity. We discuss in more detail in Section \ref{SecTreeGeneral}.

Applying Theorem \ref{ThmTreeGeneral} to the Potts model, we achieve improved non-reconstruction results for Potts models on a tree.
\begin{thm}\label{ThmPotts}
	Consider the Potts model on a tree $T$.
	We have non-reconstruction whenever
	$$\eta_{\KL}(\PC_\lm, \pi) \br(T) < 1,$$
	where
	$\eta_{\KL}(\PC_\lm, \pi)$ is given by~\eqref{eq:coro_sdpi_potts}.
\end{thm}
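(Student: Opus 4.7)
This theorem is essentially a specialization of Theorem~\ref{ThmTreeGeneral} to the Potts channel, combined with the explicit SDPI formula from Theorem~\ref{ThmSDPIPotts}. So the plan is to verify the hypotheses of Theorem~\ref{ThmTreeGeneral} with $M = \PC_\lm$ and invoke it.

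First, I would identify the stationary distribution. Since $\PC_\lm(x,y)$ depends only on whether $x=y$, each row of $\PC_\lm$ is a permutation of the same multiset, and the uniform measure $\pi = \Unif([k])$ satisfies $\pi \PC_\lm = \pi$. Thus $q^\ast = \pi$ is a valid choice of stationary distribution. Next, since $\PC_\lm$ is a symmetric matrix, we have $\pi_i \PC_\lm(i,j) = \frac{1}{k}\PC_\lm(i,j) = \frac{1}{k}\PC_\lm(j,i) = \pi_j \PC_\lm(j,i)$, so the channel satisfies detailed balance with respect to $\pi$. Therefore one may take $M^\vee = \PC_\lm$ itself as the reverse channel.

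With these identifications, Theorem~\ref{ThmTreeGeneral} directly yields non-reconstruction whenever
\begin{equation*}
	\eta_{\KL}(\PC_\lm, \pi)\,\br(T) < 1.
\end{equation*}
The explicit value of $\eta_{\KL}(\PC_\lm,\pi)$ given in~\eqref{eq:coro_sdpi_potts} is provided by Theorem~\ref{ThmSDPIPotts}, which completes the statement.

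\textbf{Main obstacle.} Honestly, there is no real obstacle here: the entire content has been packaged into Theorems~\ref{ThmTreeGeneral} and~\ref{ThmSDPIPotts}, and all that remains is to check the two trivial symmetry/stationarity statements about $\PC_\lm$. The only thing one should be slightly careful about is confirming that the definition of reverse channel used in Theorem~\ref{ThmTreeGeneral} does admit $M^\vee = \PC_\lm$ as a valid choice (i.e., that any channel satisfying the detailed balance identity works), which is immediate from reversibility.
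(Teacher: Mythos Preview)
Your proposal is correct and matches the paper's own argument essentially verbatim: the paper notes that $\PC_\lm$ is reversible (so $\PC_\lm^\vee = \PC_\lm$) with invariant distribution $\pi = \Unif([k])$, and then states that Theorem~\ref{ThmPotts} follows directly from Theorem~\ref{ThmTreeGeneral}.
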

Theorem \ref{ThmPotts} strictly improves over the explicit bound of Mossel and Peres \cite{MP03}. In the special case
where the channel is the coloring channel ($\lm = -\frac 1{k-1}$), Theorem \ref{ThmPotts} recovers the reconstruction
threshold up to the first order, which was previously obtained by Sly \cite{Sly09b} and Bhatnagar et al.~\cite{BVVW11}
using more complicated methods. (We note, however, that a major focus of those works was to obtain the lower-order
terms.)
More detailed analysis is done in Section \ref{SecPotts}.

Last but not least, we consider the problem of weak recovery for the stochastic block model with $k$ communities ($k$-SBM).
In $k$-SBM, $n$ vertices each receive a random spin in $[k]$, and then a random graph is constructed, such that
(1) for two vertices with the same spin, there exists an edge with probability $\frac an$;
(2) for two vertices with different spins, there exists an edge with probability $\frac bn$.
The model is said to have weak recovery, if given the random graph, we can partition the vertices into $k$ parts, such that the partition is correct (up to relabeling the parts) for at least $\ep n$ vertices, for some absolute constant $\ep$.

For $k=2$, the weak recovery threshold is known: If $(a-b)^2 > 2(a+b)$, weak recovery is possible (Massouli{\'e} \cite{Mas14}, Mossel et al.~\cite{MNS18}); if $(a-b)^2 < 2(a+b)$, weak recovery is impossible (Mossel et al.~\cite{MNS15}).
For $k\ge 3$, the weak recovery threshold is not completely determined.
The stochastic block model is called assortative if $a>b$, and disassortative if $a<b$. The disassortative case was
solved by Coja-Oghlan et al.~\cite{CKPZ18}.
For the assortative case, the current best impossibility result for general $k$ is by Banks et al.~\cite{BMNN16}, which says weak recovery does not hold whenever
\begin{equation}\label{EqnSBMBMNN}
	\frac{(a-b)^2}{a+(k-1)b} < \frac{2k\log (k-1)}{k-1}.
\end{equation}

By a standard reduction from the Potts model, we achieve improved impossibility results for weak recovery for the $k$-stochastic models.
\begin{thm}\label{ThmSBM}
	Weak recovery of the stochastic block model is impossible if
	$$d \eta_{\KL}(\PC_\lm, \pi) < 1,$$
	where $d = \frac{a + (k-1)b}{k}$, $\lm = \frac{a-b}{a+(k-1)b}$, and $\eta_{\KL}$ is given
	by~\eqref{eq:coro_sdpi_potts}.
\end{thm}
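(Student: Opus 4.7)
The plan is to derive Theorem~\ref{ThmSBM} as a corollary of Theorem~\ref{ThmPotts} through the standard local-neighborhood coupling between the stochastic block model and a broadcast process on a Galton--Watson tree. The reduction proceeds by contrapositive: assume non-reconstruction on the tree, and conclude that weak recovery on the graph is impossible.

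First, I would set up the reduction. Condition on the spin $\sigma_v$ of a fixed vertex $v$; the spins of its neighbors in the SBM are then generated by a coupling in which, for each other vertex $u\in[n]$, the edge $\{u,v\}$ is present with probability $a/n$ if $\sigma_u=\sigma_v$ and $b/n$ otherwise. A classical computation (as in Mossel--Neeman--Sly and the subsequent SBM literature) shows that the number of neighbors of $v$ with spin $j$ conditional on $\sigma_v=i$ converges to $\Po(M_{i,j})$ with $M_{i,j}=(a/k)\bbl\{i=j\}+(b/k)\bbl\{i\ne j\}$, and that the depth-$r$ neighborhood of $v$ converges in total variation to a Galton--Watson tree $T$ with $\Po(d)$ offspring distribution, $d=\tfrac{a+(k-1)b}{k}$, where the spin of each child given the parent is drawn from the Potts channel $\PC_\lm$ with $\lm=\tfrac{a-b}{a+(k-1)b}$ (this $\lm$ is precisely the nontrivial eigenvalue of the conditional offspring distribution normalized by $d$). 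The standard reduction lemma (Mossel--Neeman--Sly) then says that if the broadcast reconstruction problem on $T$ with channel $\PC_\lm$ is non-reconstructible, then weak recovery for the SBM is impossible.

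Second, I would compute the branching number of $T$. For a supercritical Galton--Watson tree with mean offspring $d>1$, it is a standard fact (Lyons) that $\br(T)=d$ almost surely on the event of non-extinction; when $d\le 1$ the tree is almost surely finite and the broadcast problem is trivially non-reconstructible. Thus, applying Theorem~\ref{ThmPotts} to $T$ with this value of the branching number, non-reconstruction on $T$ holds whenever
\begin{equation*}
  d\,\eta_{\KL}(\PC_\lm,\pi)<1,
\end{equation*}
with $\eta_{\KL}(\PC_\lm,\pi)$ computed via~\eqref{eq:coro_sdpi_potts}.

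Combining the two steps, under the hypothesis $d\,\eta_{\KL}(\PC_\lm,\pi)<1$ of the theorem, the broadcast model on $T$ is non-reconstructible, hence the SBM does not admit weak recovery. The only subtle point in the above plan --- and the main potential obstacle --- is citing the reduction from SBM weak recovery to tree reconstruction in the required generality (the branching number, rather than expected offspring, appearing on the tree side). Since $T$ is Galton--Watson with a.s.\ branching number equal to its mean, Theorem~\ref{ThmPotts} applies directly and no additional work beyond invoking the known coupling is needed.
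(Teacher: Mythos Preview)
Your proposal is correct and follows essentially the same route as the paper: invoke the Mossel--Neeman--Sly reduction (Theorem~\ref{ThmSBMPotts}) from SBM weak recovery to Potts-model reconstruction on a $\Po(d)$ Galton--Watson tree, then apply Theorem~\ref{ThmPotts} together with $\br(T)=d$ a.s.\ on non-extinction. The only thing to flag is that the MNS reduction was originally stated for $k=2$; the paper explicitly notes that the proof carries over to general $k$ with minor changes and gives a sketch, so you should acknowledge this extension rather than treat it as a black box.
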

We discuss in more detail, and show that this improves over previous results, in Section \ref{SecSBM}.

\textbf{Organization.}
In Section \ref{SecLSI} we prove the sharpest $p$-NLSIs for the Potts semigroup (Theorem \ref{ThmNonLinpLSI}), and compute the input-restricted KL divergence contraction coefficients of all Potts channels (Theorem \ref{ThmSDPIPotts}).

In Section \ref{SecProd} we discuss tensorization of $p$-NLSIs for the Potts semigroup, and non-linear SDPI for Potts channels.

In Section \ref{SecTreeGeneral} we prove a non-reconstruction result for a general class of broadcast models on trees, based on strong data processing inequalities (Theorem \ref{ThmTreeGeneral}).

In Section \ref{SecPotts} we apply Theorem \ref{ThmTreeGeneral} to the Potts model on a tree (Theorem \ref{ThmPotts}). We show that this improves previous non-reconstruction results. For a special case, the random coloring model on a tree, we obtain non-reconstruction results for arbitrary trees, generalizing previous results which work only for restricted classes of trees.

In Section \ref{SecSBM}, by a standard reduction from the Potts model, we prove impossibility results for weak recovery of the stochastic block model (Theorem \ref{ThmSBM}). This results in improvements for the best known bounds for the $k$-SBM.


\section{Non-linear $p$-log-Sobolev inequalities for the Potts semigroup} \label{SecLSI}
In this section, we prove $p$-NLSIs for the Potts semigroup for $p\ge 1$.
Because the form of the $p$-LSIs are slightly different for $p\ne 1$ and $p=1$, we prove them separately.

Recall our setting. Alphabet $\cX = [k]$ for some positive integer $k\ge 2$.
The Potts semigroup $T_t = \exp(L t)$ for generator $L = \frac 1{k-1} \bbl\{x\ne y\} - \bbl\{x=y\}$.
The stationary distribution is $\pi = \Unif([k])$.
The Dirichlet form is
$$\cE(f, g) = -\bE_\pi[(L f) g] = - \frac 1{k(k-1)} (\sum_x f(x)) (\sum_y g(y)) + \frac 1{k-1}\sum_x f(x) g(x).$$
Relative entropy is
$$\Ent_\pi(f) = \bE_\pi [f \log \frac{f}{\bE_\pi [f]}].$$
The non-linear $p$-log-Sobolev inequality says
$$\frac{\Ent_\pi(f)}{\bE_\pi[f]} \le \Phi_p(\frac{\cE(f^{\frac 1p}, f^{1-\frac 1p})}{\bE_\pi[f]})$$
for some concave $\Phi_p$, where for $p=1$, RHS is replaced with $\Phi_p(\frac{\cE(f, \log f)}{\bE_\pi[f]})$.
Because both sides of the inequality are fixed under scalar multiplication, we can wlog restrict $f$ to be a distribution $P$.
Then the relative entropy is $$\Ent_\pi(P) = \frac 1k D (P ||\pi) = \frac 1k(\log k - H(P)).$$

\subsection{Non-linear $p$-log-Sobolev inequality for $p>1$}
We prove Theorem \ref{ThmNonLinpLSI} for $p>1$. Before proving the theorem we show the following.

\begin{prop}\label{PropNonLinpLSI}
	Fix $r\in (0, 1)$ and $c\in [0, \log k]$.
	Among all distributions $P = (p_1, \ldots, p_k)$ with $H(P) = c$, the distribution of form $P = (x, \frac {1-x}{k-1}, \ldots, \frac{1-x}{k-1})$ with $x\in [\frac 1k, 1]$ achieves maximum $\sum_i p_i^r$.
	Furthermore, up to permutation of the alphabet this is the unique maximum-achieving distribution.
\end{prop}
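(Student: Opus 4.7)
The plan is to treat Proposition \ref{PropNonLinpLSI} as the constrained maximization of $\sum_i p_i^r$ over the compact slice $\{P : \sum_i p_i = 1,\ H(P) = c\}$, handle it by Lagrange multipliers on the interior, classify critical points via a Hessian analysis, and rule out boundary extremizers by KKT. On the interior, the stationarity condition $r p_i^{r-1} + \beta \log p_i = \alpha$ (multipliers $\alpha, \beta$ independent of $i$) forces the $p_i$ to take at most two distinct values: the one-variable function $\varphi(t) := r t^{r-1} + \beta \log t$ has $\varphi'(t) = r(r-1) t^{r-2} + \beta/t$ with at most one zero on $(0,\infty)$, so $\varphi$ is unimodal or monotone and any level set of $\varphi$ has at most two points. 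Hence every interior critical point is of the form $P^* = (p, \ldots, p, q, \ldots, q)$ with $m$ copies of $p$ and $k-m$ of $q$, $p > q$; subtracting the two Lagrange equations gives $\beta = r(p^{r-1} - q^{r-1})/(\log p - \log q) < 0$.

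Next, I would classify these two-valued critical points by their local Hessian. The Lagrangian's Hessian is diagonal with entries $h_i = r(r-1) p_i^{r-2} - \beta/p_i$. Setting $u := p/q > 1$ and substituting the formula for $\beta$, the desired signs $h_{\mathrm{big}} > 0$ and $h_{\mathrm{small}} < 0$ reduce to
\[ u^{1-r} > 1 + (1-r)\log u \qquad \text{and} \qquad u^{r-1} + (1-r)\log u > 1, \]
both instances of the strict inequality $e^x > 1 + x$ (for $x \neq 0$) applied at $x = \pm (1-r)\log u$. Meanwhile, the tangent space at $P^*$ is cut out by $\sum v_i = 0$ and $\sum v_i \log p_i = 0$; since $\log p \neq \log q$, these force $\sum_{i \in \mathrm{big}} v_i = 0$ and $\sum_{i \in \mathrm{small}} v_i = 0$ separately, so the tangent decomposes as an $(m-1)$-dimensional "big" summand plus a $(k-m-1)$-dimensional "small" summand, with the restricted Hessian acting by $h_{\mathrm{big}}$ and $h_{\mathrm{small}}$ on the respective summands. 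Consequently, $m = 1$ is a local maximum (the big summand is trivial, only the negative contribution survives), $m = k-1$ is a local minimum, and $2 \le m \le k-2$ is a saddle.

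For the boundary: at any $P^*$ with some $p_j = 0$, the KKT condition for the active constraint $p_j \ge 0$ demands $\partial L/\partial p_j \le 0$ in the limit $p_j \to 0^+$, but $r p_j^{r-1}$ blows up polynomially while $\beta \log p_j$ only diverges logarithmically, so $\partial L/\partial p_j \to +\infty$, contradicting KKT. Thus the global maximum is interior, hence of shape $m = 1$, i.e., $(x, (1-x)/(k-1), \ldots, (1-x)/(k-1))$; uniqueness up to permutation follows because the map $x \mapsto H(P)$ restricted to this shape with $x \in [1/k, 1]$ is a strict bijection onto $[0, \log k]$ (equivalently, $\psi$ is strictly monotone on $[1/k, 1]$, as noted just before the proposition).

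The main obstacle is the Hessian sign computation in the second paragraph. A direct comparison of $\sum p_i^r$ across the $k-1$ candidate two-valued shapes with common entropy is ungainly, and numerically this quantity is not even monotone in $m$. The Hessian approach bypasses such comparisons by classifying critical points via local quadratic geometry, and the entire content collapses onto the two special cases of $e^x \ge 1 + x$ identified above.
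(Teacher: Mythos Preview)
Your approach is essentially correct and genuinely different from the paper's. Both arguments share Step~1 (Lagrange/KKT forces at most two distinct values), but diverge in how the two-valued candidates are compared. The paper extends $m$ to a real parameter and shows, via a somewhat intricate derivative computation (their Claims~\ref{ClaimpLSIDer} and~\ref{ClaimElem}), that $\sum p_i^r$ is strictly decreasing in $m$ along the level curve $H(P)=c$. You instead classify each two-valued critical point by the restricted Hessian, reducing the whole comparison to the two instances of $e^x>1+x$. This is cleaner: the paper's Claim~\ref{ClaimElem} is replaced by a one-line convexity fact. On the other hand, the paper's monotonicity-in-$m$ argument is a template that they reuse verbatim for the $p=1$ case and for the SDPI (Propositions~\ref{PropNonLin1LSI} and~\ref{PropSDPIPotts}); your second-order method would need to be redone for each objective.

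Two small issues to fix. First, there are sign slips: from $r p^{r-1}+\beta\log p = r q^{r-1}+\beta\log q$ one gets $\beta = r(q^{r-1}-p^{r-1})/(\log p-\log q)>0$, not your formula (which gives $\beta<0$); and differentiating the stationarity equation gives $h_i = r(r-1)p_i^{r-2}+\beta/p_i$, not with a minus. These two slips cancel, so your final inequalities $u^{1-r}>1+(1-r)\log u$ and $u^{r-1}+(1-r)\log u>1$ are correct.

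Second, the boundary argument as written is not quite rigorous. KKT in the usual form requires differentiability, but both $\sum p_i^r$ and $H(P)$ have infinite partial derivatives at $p_j=0$, so there is no reason a priori that finite multipliers $\alpha,\beta$ exist at a boundary maximizer, and the sentence ``$\partial L/\partial p_j\le 0$'' is not well-posed. The fix is to do what the paper does in its Claim~\ref{ClaimpLSIZero}: construct an explicit feasible curve $p_j=t$, $t\to 0^+$, adjusting two other coordinates via the implicit function theorem to maintain both constraints, and observe that along this curve $\sum p_i^r$ has derivative $r t^{r-1}+O(\log t)\to+\infty$. Your ``polynomial dominates logarithm'' intuition is exactly the right one; it just needs to be packaged as a perturbation rather than as KKT.
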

\begin{proof}
	The result for $c\in \{0, \log k\}$ is obvious. In the following, assume that $c\in (0, \log k)$.
	Write $F(P) := \sum_i p_i^r$.
	The set $\{P: H(P)=c\}$ is compact, so the maximum value of $F(P)$ is achieved at some point $P = (p_1, \ldots, p_k)$.

	We prove in several steps. In Step 0, we prove that if $p_i=0$ for some $i$, then there can be at most two different values of $p_i$'s. In Step 1, we prove that if $p_i>0$ for all $i$, then there can be at most two different values of $p_i$'s. In Step 2, we prove that one of the two different values must have multiplicity one, thus finishing the proof of the proposition.

	\textbf{Step 0.}
	\begin{claim}\label{ClaimpLSIZero}
		Fix $a,b>0$ and $r\in (0, 1)$. Among all solutions $u, v, w\in [0, 1]$ with $u+v+w=a$ and $-u \log u - v \log v - w \log w=b$, the maximum of $u^r + v^r + w^r$ is not achieved at a point where $0=u<v<w$.
	\end{claim}
	\begin{proof}
		Suppose the maximum is achieved at such a point $(u_0, v_0, w_0)$ where $0=u_0<v_0<w_0$.
		Extend it to a curve $(u, v=v(u), w=w(u))$ on $u\in [0, \ep)$ for some $\ep>0$, such that $u<v<w$ for all $u$, satisfying
		\begin{align}
			u + v + w &= a, \label{EqnClaimpLSIZeroA}\\
			-u\log u - v \log v - w \log w &= b \label{EqnClaimpLSIZeroB},
		\end{align}
		and
		$$v(0) = v_0, w(0) = w_0.$$
		We prove that $$f(u) := u^r + v^r + w^r$$
		decreases as $u$ approaches $0^+$, for small enough $u$.

		By taking derivative of \eqref{EqnClaimpLSIZeroA} and \eqref{EqnClaimpLSIZeroB}, one can compute that
		\begin{align*}
			v^\p(u) &= \frac{\log w - \log u}{\log v - \log w},\\
			w^\p(u) &= \frac{\log u - \log v}{\log v - \log w}.
		\end{align*}
		Therefore
		\begin{align*}
			f^\p(u) &= r (u^{r-1} + v^{r-1} v^\p(u) + w^{r-1} w^\p(u)) \\
			& = r (u^{r-1} + v^{r-1} \frac{\log w-\log u}{\log v-\log w} + w^{r-1} \frac{\log u-\log v}{\log v-\log w}).
		\end{align*}
		Because $0<v_0<w_0$, the term $u^{r-1}$ dominates the sum, and $f^\p(u) > 0$ for small enough $u>0$.
		Therefore the maximum of $f$ is not achieved at $u=0$.
	\end{proof}
	By Claim \ref{ClaimpLSIZero}, if $p_i=0$ for some $i$, then there can be at most two different values of $p_i$'s.

	\textbf{Step 1.}
	\begin{claim} \label{ClaimpLSIDet}
		If $u, v, w \in (0, 1)$ are all different, then
		\begin{align*}
			\det \begin{pmatrix}
				1 & \log u & u^{r-1} \\
				1 & \log v & v^{r-1} \\
				1 & \log w & w^{r-1}
			\end{pmatrix}\ne 0.
		\end{align*}
	\end{claim}
	\begin{proof}[Proof of Claim]
		Suppose $\det = 0$. Then for some $a, b\in \bR$, the equation
		$x^{r-1} + a \log x = b$ has at least three distinct solutions $x\in (0, 1)$.
		However $$\frac{\der}{\der x} (x^{r-1} + a\log x) = (r-1) x^{r-2} + \frac ax$$
		is smooth on $(0, 1)$, and takes zero at most once.
		So $x^{r-1} + a\log x$ takes each value at most once on $(0, 1)$. Contradiction.
	\end{proof}
	By Lagrange multipliers, the three vectors
	\begin{align*}
		\Na F(P) &= (r p_i^{r-1})_{i\in [k]},\\
		\Na H(P) &= (-1-\log p_i)_{i\in [k]},\\
		\Na \sum_{i\in [k]} p_i &= \bbl
	\end{align*}
	should be linear dependent.
	By Step 0 and Claim \ref{ClaimpLSIDet}, there can be at most two different values of $p_i$'s.

	So we can assume that $p_1 = \cdots = p_m = x$, $p_{m+1} = \cdots = p_k = \frac {1-mx}{k-m}$ for some $m\in [k-1]$, $x\in (\frac 1k, \frac 1m]$.

	\textbf{Step 2.}
	For $P$ of the above form, we have
	\begin{align*}
		-H(P) &= mx\log x + (1-mx) \log \frac{1-mx}{k-m},\\
		F(P) &= m x^r + (k-m) (\frac{1-mx}{k-m})^r.
	\end{align*}
	We smoothly continue both functions so that $m$ can take any real value in $[1, k-1]$.
	\begin{claim}\label{ClaimpLSIDer}
		For $m\in (1, k-1]$ and $x\in (\frac 1k, \frac 1m)$, we have
		$$-\frac{\der}{\der x} H(P) > 0$$
		and
		$$\frac{\der}{\der x} H(P) \frac{\der}{\der m} F(P) - \frac{\der}{\der m} H(P) \frac{\der}{\der x} F(P) > 0.$$
	\end{claim}
	\begin{proof}
		We have
		\begin{align*}
			-\frac{\der}{\der x}H(P) &= m(\log x - \log \frac{1-mx}{k-m}) > 0,\\
			-\frac{\der}{\der m}H(P) &= \frac{1-kx}{k-m} + x(\log x -\log \frac{1-mx}{k-m}), \\
			\frac{\der}{\der x} F(P) &= rm(x^{r-1} - (\frac{1-mx}{k-m})^{r-1}), \\
			\frac{\der}{\der m} F(P) &= x^r + (r \frac{1-kx}{1-mx} -1) (\frac{1-mx}{k-m})^r.
		\end{align*}
		Let $a = \frac{kx-1}{1-mx}$.
		Then
		\begin{align*}
			G(P) &:= \frac{\der}{\der x} H(P) \frac{\der}{\der m} F(P) - \frac{\der}{\der m} H(P) \frac{\der}{\der x} F(P)\\
			& = (r-1)m (x^r - (\frac{1-mx}{k-m})^r)(\log x - \log \frac{1-mx}{k-m}) \\
			&- rm \frac{1-kx}{k-m}(x^{r-1} - (\frac{1-mx}{k-m})^{r-1}) \\
			& = x^{-r} m ((r-1) (1-(a+1)^{-r}) \log(a+1) - r \frac{a}{a+1} (1-(a+1)^{1-r})).
		\end{align*}
		The result then follows from Claim \ref{ClaimElem}.
	\end{proof}
	\begin{claim}\label{ClaimElem}
		For all $r\in (0, 1)$ and $a>0$ we have $$(r-1) (1-(a+1)^{-r}) \log(a+1) - r \frac{a}{a+1} (1-(a+1)^{1-r}) > 0.$$
	\end{claim}
	\begin{proof}
		Let $$f(a) := (r-1) (1-(a+1)^{-r}) \log(a+1) - r \frac{a}{a+1} (1-(a+1)^{1-r}).$$
		Because $\lim_{a\to 0^+} f(a) = 0$, it suffices to prove that $f^\p(a) > 0$.
		\begin{align*}
			f^\p(a) &= (a+1)^{-r-1} (1-r - (a(1-r)+1)((a+1)^{r-1} - r)- (1-r)r \log (a+1))\\
			&=: (a+1)^{-r-1} g(a).
		\end{align*}
		Because $\lim_{a\to 0^+} g(a) = 0$, it suffices to prove that $g^\p(a) > 0$.
		\begin{align*}
			g^\p(a) = \frac{a r (1-r) (1-(a+1)^{r-1})}{a+1} > 0.
		\end{align*}
	\end{proof}
	Now let us return to the proof of Proposition \ref{PropNonLinpLSI}.
	The set of $(m, x)$ where $m\in [1, k-1]$, $x\in (\frac 1k, \frac 1m]$, and $H(P) = c$ can be parametrized as a curve
	$(m, x=x(m))$ for $m\in [1, m_c]$ for some constant $m_c$.
	Along the curve, $F(P)$ is continuous, and by Claim \ref{ClaimpLSIDer}, is decreasing in $m$.
	Therefore $F(P)$ is maximized at $m=1$. This finishes the proof.
\end{proof}

\begin{proof}[Proof of Theorem \ref{ThmNonLinpLSI} for $p>1$]
	For a distribution $P = (p_1, \ldots, p_k)$, we have
	\begin{align*}
		\cE(P^{\frac 1p}, P^{1-\frac 1p}) = \frac 1{k-1} (1 - \frac 1k (\sum_i p_i^{\frac 1p}) (\sum_i p_i^{1-\frac 1p})).
	\end{align*}
	By Proposition \ref{PropNonLinpLSI}, for fixed value of $\Ent_\pi(P)$, the unique distribution of the form $(x, \frac{1-x}{k-1}, \ldots, \frac{1-x}{k-1})$ with $x\in [\frac 1k, 1]$ minimizes $\cE(P^{\frac 1p}, P^{1-\frac 1p})$.
	Therefore for any non-zero non-negative $f$, we have $$b_p(\frac{\Ent_\pi(f)}{\bE_\pi[f]}) \le \frac{\cE(f^{\frac 1p}, f^{1-\frac 1p})}{\bE_\pi[f]}.$$
	So $p$-NLSI holds with $\Phi_p = b_p^{-1}$.
	The statement about optimality is immediate from the above discussions.
\end{proof}

\subsection{Non-linear $1$-log-Sobolev inequality}
We prove Theorem \ref{ThmNonLinpLSI} for $p=1$.
Before proving the theorem we show the following.
\begin{prop}\label{PropNonLin1LSI}
	Fix $0\le c \le \log k$.
	Among all distributions $P$ with $H(P) = c$, the distribution of form $P = (x, \frac {1-x}{k-1}, \ldots, \frac{1-x}{k-1})$ with $x\in [\frac 1k, 1]$ achieves maximum $\sum_i \log p_i$.
	Furthermore, up to permutation of the alphabet this is the unique minimum-achieving distribution.
\end{prop}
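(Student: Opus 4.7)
The plan is to mirror the structure of the proof of Proposition~\ref{PropNonLinpLSI}, substituting $F(P) = \sum_i \log p_i$ for $\sum_i p_i^r$ throughout. Write $F(P) = \sum_i \log p_i$ and restrict to the compact set $\{P : H(P)=c\}$. The boundary cases $c \in \{0,\log k\}$ are trivial. Also, since $\log p_i = -\infty$ as soon as some $p_i=0$, the analogue of Claim~\ref{ClaimpLSIZero} is immediate: any maximizer has all $p_i>0$, so we may apply standard KKT on the open simplex.

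For the analogue of Step~1 (Claim~\ref{ClaimpLSIDet}), the KKT condition forces the three gradients $\nabla F = (1/p_i)_i$, $\nabla H = (-1-\log p_i)_i$, and $\nabla \sum p_i = \mathbbm 1$ to be linearly dependent, so at a maximizer the vectors $(1, \log p_i, 1/p_i)$ span at most a $2$-dimensional subspace. It thus suffices to show the determinant
\begin{equation*}
\det \begin{pmatrix} 1 & \log u & 1/u \\ 1 & \log v & 1/v \\ 1 & \log w & 1/w \end{pmatrix}
\end{equation*}
is nonzero for distinct $u,v,w\in(0,1)$, equivalently that $1/x + a\log x = b$ has at most two solutions in $(0,1)$; this follows because its derivative $-1/x^2 + a/x$ vanishes at most once. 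Hence any maximizer takes at most two distinct values.

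Step~2 is the main computational obstacle. Parametrize two-valued candidates as $p_1=\cdots=p_m=x$, $p_{m+1}=\cdots=p_k=y:=\frac{1-mx}{k-m}$ with $m\in[1,k-1]$ and $x\in(\frac 1k,\frac 1m]$ (continuing $m$ smoothly). Compute
\begin{align*}
-\tfrac{\partial H}{\partial x} &= m\log(x/y), & -\tfrac{\partial H}{\partial m} &= y-x + x\log(x/y),\\
\tfrac{\partial F}{\partial x} &= m(\tfrac 1x - \tfrac 1y), & \tfrac{\partial F}{\partial m} &= \log(x/y) + 1 - x/y,
\end{align*}
using the identity $\frac{1-kx}{k-m} = y-x$. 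Setting $s = x/y > 1$ (for $x > 1/k$) and simplifying, a direct computation yields
\begin{equation*}
\frac{dF}{dm} = \frac{\log(s)^2 - (s-1)^2/s}{\log s}.
\end{equation*}
The cross terms in $\log s$ cancel cleanly, which is the calculation I expect to be the most delicate.

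To finish, I need the elementary inequality $(s-1)^2/s > (\log s)^2$ for $s>1$. Taking square roots, this is $\sqrt s - 1/\sqrt s > \log s$; letting $h(s) = \sqrt s - 1/\sqrt s - \log s$, one has $h(1)=0$ and $h'(s) = \frac{(\sqrt[4]{s}-1/\sqrt[4]{s})^2}{2s} > 0$ for $s>1$, so $h>0$ and hence $\frac{dF}{dm}<0$ whenever $s>1$. Parametrizing the constraint curve $\{H=c\}$ as $(m,x(m))$ for $m\in[1,m_c]$ exactly as in the proof of Proposition~\ref{PropNonLinpLSI}, strict monotonicity of $F$ in $m$ forces the unique maximum at $m=1$, i.e., at a distribution of the form $(x,\frac{1-x}{k-1},\ldots,\frac{1-x}{k-1})$ with $x\in[\frac 1k,1]$. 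Uniqueness (up to permutation) follows from the strict inequality together with the fact that KKT ruled out three or more distinct values.
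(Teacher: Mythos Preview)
Your proof is correct and follows essentially the same approach as the paper: the substitution $s=x/y$ is just the paper's $a=\frac{kx-1}{1-mx}$ shifted by one ($s=a+1$), and your key inequality $(s-1)^2/s>\log^2 s$ is precisely the paper's Lemma~\ref{LemmaElem} under that change of variable. Your derivation of the inequality via $h(s)=\sqrt{s}-1/\sqrt{s}-\log s$ is a slightly more direct variant of the paper's two-step integration argument, but otherwise the structure (KKT to reduce to two-valued, then monotonicity in $m$ along the level curve) is identical.
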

\begin{proof}
	The result for $c\in \{0, \log k\}$ is obvious. In the following, assume that $0 < c < \log k$.
	Write $F(P) := \sum_i \log p_i$.
	The set $\{P: H(P) = c\}$ is compact, so the maximum value of $F(P)$ is achieved at some point $P = (p_1, \ldots, p_k)$.

	We prove in several steps. In Step 0, we prove that $p_i>0$ for all $i$. In Step 1, we prove that there can be at most two different values of $p_i$'s. In Step 2, we prove that one of the two different values must have multiplicity one, thus finishing the proof of the proposition.

	\textbf{Step 0.}
	If $p_i=0$ for some $i$, then $F(P) = -\infty$. So $\min_{i\in [k]} p_i > 0$.

	\textbf{Step 1.}
	\begin{claim}\label{Claim1LSIDet}
		If $u, v, w \in (0, 1)$ are all different, then
		$$\det \begin{pmatrix} 1 & \log u & \frac 1u\\ 1 & \log v & \frac 1v\\ 1 & \log w & \frac 1w \end{pmatrix}\ne 0.$$
	\end{claim}
	\begin{proof}[Proof of Claim]
		Suppose $\det = 0$. Then for some $a, b\in \bR$, the equation $\frac 1x + a \log x = b$ has at least three distinct solutions $x\in (0, 1)$.
		However, $\frac{\der}{\der x} (\frac 1x + a \log x) = - \frac 1{x^2} + \frac ax$ is smooth on $(0, 1)$, and takes zero at most once.
		So $\frac 1x + a \log x$ takes each value at most once on $(0, 1)$. Contradiction.
	\end{proof}
	By Lagrange multipliers, the three vectors
	\begin{align*}
		&\Na F(P) = (\frac 1{p_i})_{i\in [k]},\\
		&\Na H(P) = (-1-\log p_i)_{i\in [k]},\\
		&\Na \sum_{i\in [k]} p_i = \bbl
	\end{align*}
	should be linear dependent. By Claim \ref{Claim1LSIDet}, there can be at most two different values of $p_i$'s.

	So we can assume that $p_1 = \cdots = p_m = x$, $p_{m+1} = \cdots = p_k = \frac {1-mx}{k-m}$ for some $m\in [k-1]$, $x\in (\frac 1k, \frac 1m)$.

	\textbf{Step 2.}
	For $P$ of the above form, we have
	\begin{align*}
		-H(P) &= m x \log x + (1-m x) \log \frac{1-m x}{k-m},\\
		F(P) &= m \log x + (k-m) \log \frac {1-mx} {k-m}.
	\end{align*}
	We smoothly continue both functions so that $m$ can take any real value in $[1, k-1]$.
	\begin{claim}\label{Claim1LSIDer}
		For $m \in (1, k-1]$ and $x\in (\frac 1k, \frac 1m)$, we have
		$$- \frac{\der}{\der x} H(P) > 0$$
		and
		$$\frac{\der}{\der x} H(P) \frac{\der}{\der m} F(P) - \frac{\der}{\der m} H(P) \frac{\der}{\der x} F(P) > 0.$$
	\end{claim}
	\begin{proof}[Proof of Claim]
		Let $f(x) = \log x - \log \frac{1-mx}{k-m}$.
		Then we have
		\begin{align*}
			-\frac{\der}{\der x} H(P) &= m f(x) > 0,\\
			-\frac{\der}{\der m} H(P) &= \frac {1-kx}{k-m} + x f(x),\\
			\frac{\der}{\der x} F(P) &= \frac{m(1-kx)}{x(1-mx)}\\
			\frac{\der}{\der m} F(P) &= \frac{1-kx}{1-mx} + f(x).
		\end{align*}
		So
		\begin{align*}
			G(P) &:= \frac{\der}{\der x} H(P) \frac{\der}{\der m} F(P) - \frac{\der}{\der m} H(P) \frac{\der}{\der x} F(P)\\
			& = m (\frac{(1-kx)^2}{x(k-m)(1-mx)} - f(x)^2).
		\end{align*}
		Let $a = \frac {kx-1}{1-mx}$.
		Then $G(P) = m (\frac{a^2}{1+a} - \log^2 (a+1))$.
		Because $a > 0$, we have $G(P) > 0$ by Lemma \ref{LemmaElem}.
	\end{proof}
	The set of $(m, x)$ where $m\in [1, k-1]$, $x\in (\frac 1k, \frac 1m]$, and $H(P) = c$ can be parametrized as a curve
	$(m, x=x(m))$ for $m\in [1, m_c]$ for some constant $m_c$.
	Along the curve, $F(P)$ is continuous, and by Claim \ref{Claim1LSIDer}, is decreasing in $m$.
	Therefore $F(P)$ is maximized at $m=1$. This finishes the proof.
\end{proof}
\begin{lemma}\label{LemmaElem}
	For $a\in \bR_{>-1}$, we have $\frac{a^2}{1+a} \ge \log^2 (a+1)$.
	Equality holds only when $a=0$.
\end{lemma}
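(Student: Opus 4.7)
The plan is to reduce the two-sided inequality on $(-1,\infty)$ to an elementary one-variable sign analysis. Define the difference
\[
h(a) := \frac{a^2}{1+a} - \log^2(1+a),
\]
so that the claim becomes $h(a)\ge 0$ on $(-1,\infty)$ with equality only at $a=0$. Since $h(0)=0$, it suffices to show that $0$ is a strict global minimum of $h$.

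The cleanest route is the substitution $t = \log(1+a)$, so $a = e^t - 1$ and $1+a = e^t$. A short computation gives
\[
h(a) \;=\; \frac{(e^t-1)^2}{e^t} - t^2 \;=\; e^t + e^{-t} - 2 - t^2 \;=\; 4\sinh^2(t/2) - t^2.
\]
Thus the inequality reduces to $2|\sinh(t/2)| \ge |t|$, which follows from the standard fact $\sinh(u)\ge u$ for $u\ge 0$ (immediate from the Taylor series $\sinh(u) = u + u^3/3! + \cdots$), with equality only at $u=0$; in terms of $a$, equality holds iff $t=0$, i.e.\ $a=0$.

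If one prefers to avoid the substitution, a direct differentiation works equally well. Compute
\[
h'(a) \;=\; \frac{a(a+2)}{(1+a)^2} - \frac{2\log(1+a)}{1+a} \;=\; \frac{g(a)}{(1+a)^2},
\qquad g(a) := a(a+2) - 2(1+a)\log(1+a).
\]
Then $g(0)=0$ and $g'(a) = 2a - 2\log(1+a) \ge 0$ on $(-1,\infty)$, strictly except at $a=0$, by the standard inequality $\log(1+a)\le a$. Hence $g$ is strictly increasing, so $g$ (and therefore $h'$) has the same sign as $a$. Consequently $h$ is strictly decreasing on $(-1,0)$ and strictly increasing on $(0,\infty)$, which combined with $h(0)=0$ gives the conclusion.

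There is no real obstacle here: the only nontrivial input is the scalar inequality $\sinh(u)\ge u$ for $u\ge 0$ (equivalently $\log(1+a)\le a$), which is textbook. The proposal above uses the hyperbolic substitution as the primary argument for brevity, with the derivative route as a backup.
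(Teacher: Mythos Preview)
Your proposal is correct. Your backup derivative route is essentially the paper's own argument: the paper defines the same auxiliary function $f(a)=a(a+2)-2(1+a)\log(1+a)$ (your $g$), observes $f(0)=0$ and $f'(a)=2(a-\log(1+a))\ge 0$, and concludes that $h'=f/(1+a)^2$ changes sign only at $0$.

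Your primary argument via the substitution $t=\log(1+a)$, reducing to $4\sinh^2(t/2)\ge t^2$ and hence to $\sinh u\ge u$, is a genuinely different and slicker route: it replaces two layers of differentiation by a single algebraic identity and the Taylor expansion of $\sinh$. The derivative approach has the minor advantage of staying in the original variable and using only $\log(1+a)\le a$, but your hyperbolic reformulation makes the structure of the inequality (a difference of squares) transparent and yields the equality case immediately.
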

\begin{proof}
	We start from the well-known fact that $a \ge \log (a+1)$ for $a\in \bR_{>-1}$ (and equality holds only when $a=0$).
	Let $f(a) = a(a+2) - (2a+2) \log (a+1)$.
	We have $f(0)=0$ and $f^\p(a) = 2(a-\log (a+1)) \ge 0$ for $a\in \bR_{>-1}$ (and equality holds only when $a=0$).
	So $f$ is negative on $(-1, 1)$ and positive on $(1, \infty)$.

	Let $g(a) = \frac{a^2}{1+a} - \log^2 (a+1)$.
	Clearly $g(0) = 0$.
	Because $g^\p(a) = \frac{f(a)} {(a+1)^2}$, $g$ is decreasing on $(-1, 1]$ and increasing on $[1, \infty)$.
	So $g(a) \ge 0$ for all $a\in \bR_{>-1}$, and equality holds only when $a=0$.
\end{proof}

\begin{proof}[Proof of Theorem \ref{ThmNonLinpLSI} for $p=1$]
	For a distribution $P = (p_1, \ldots, p_k)$, we have
	\begin{align*}
		\cE(P, \log P) = \frac 1{k-1} \sum_{i\in [k]} p_i \log p_i  - \frac 1{k(k-1)} \sum_{i\in [k]} \log p_i.
	\end{align*}
	By Proposition \ref{PropNonLin1LSI}, for fixed value of $\Ent_\pi(P)$, the unique distribution of the form $(x, \frac{1-x}{k-1}, \ldots, \frac{1-x}{k-1})$ with $x\in [\frac 1k, 1]$ minimizes $\cE(P, \log P)$.
	Therefore for any non-zero non-negative $f$, we have
	\begin{align*}
		b_1(\frac{\Ent_\pi(f)}{\bE_\pi[f]}) \le \frac{\cE(f, \log f)}{\bE_\pi[f]}.
	\end{align*}
	So $1$-NLSI holds for $\Phi_1 = b_1^{-1}$.
	The statement about optimality is immediate from the above discussions.
\end{proof}


\subsection{Input-restricted non-linear SDPI for Potts channels}
In this section, we prove Theorem \ref{ThmSDPIPotts}.

The subset of Potts channels corresponding to $\lambda \ge 0$ (i.e. ferromagnetic Potts channels) form a semigroup. For
the semigroups, the optimal $1$-NLSI is an ``infinitesimal version'' of the input-restricted non-linear SDPI.
Consequently, by integrating the former we can get the latter (this is
formalized in the first part of the proof below). Surprisingly, the result also extends beyond the
semigroup to all of the Potts channels, namely we have the following.

\begin{prop}\label{PropSDPIPotts}
	Let $\lm \in [-\frac 1{k-1}, 1]$. Fix $0 \le c \le \log k$.
	Among all distributions $P$ with $H(P) = c$, the distribution of form $P = (x, \frac{1-x}{k-1}, \ldots, \frac{1-x}{k-1})$ with $x\in [\frac 1k, 1]$ achieves minimum $H(\PC_\lm \circ P)$.
	Furthermore, when $\lm\not \in \{0, 1\}$, up to permutation of the alphabet this is the unique minimum-achieving distribution.
\end{prop}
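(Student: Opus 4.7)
The plan is to mirror the three-step template used in the proofs of Propositions~\ref{PropNonLinpLSI} and~\ref{PropNonLin1LSI}: a KKT plus rank-three determinant argument to reduce to two-valued distributions, then a parametrization by $(m,x)$ together with a Wronskian sign computation to force the minimizer onto $m=1$. The cases $\lm\in\{0,1\}$ are trivial: $\PC_0\circ P=\pi$ and $\PC_1\circ P=P$ both make $H(\PC_\lm\circ P)$ constant on $\{H(P)=c\}$, so uniqueness must fail, as stated; and the boundary $c\in\{0,\log k\}$ forces $P$ to already be of the required form. Fix $\lm\notin\{0,1\}$, $c\in(0,\log k)$, and let $P=(p_1,\ldots,p_k)$ be a minimizer of $H(\PC_\lm\circ P)$ on $\{H(P)=c\}$.

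Setting $q_i:=\lm p_i+\tfrac{1-\lm}k$, so $-H(\PC_\lm\circ P)=\sum_i q_i\log q_i$, the KKT condition forces the three gradients
\[
(\lm(1+\log q_i))_i,\qquad(-1-\log p_i)_i,\qquad\bbl
\]
of $-H(\PC_\lm\circ P)$, $H(P)$, and $\sum_i p_i$ to be linearly dependent. Hence there exist $\al,\be\in\bR$ with $\log(\lm p_i+\tfrac{1-\lm}k)+\al\log p_i=\be$ for every $i$. The derivative of the left-hand side in $x\in(0,1)$ is $\tfrac{\lm}{\lm x+(1-\lm)/k}+\tfrac{\al}{x}$, which upon clearing denominators is linear in $x$ and hence vanishes at most once; so the original function, monotonic on each side of its unique critical point, takes each value at most twice on $(0,1)$. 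Therefore $\{p_i\}$ has at most two distinct values. A short analogue of Claim~\ref{ClaimpLSIZero} rules out boundary optima with some $p_i=0$, using that $q_i$ stays bounded away from $0$ for $\lm\ne0$.

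For the two-parameter family $p_1=\cdots=p_m=x$, $p_{m+1}=\cdots=p_k=y:=\tfrac{1-mx}{k-m}$ with real $m\in[1,k-1]$, $x\in(\tfrac1k,\tfrac1m]$, set $q=\lm x+\tfrac{1-\lm}k$, $\bar q=\lm y+\tfrac{1-\lm}k$, $F(P)=-H(\PC_\lm\circ P)=mq\log q+(k-m)\bar q\log\bar q$, $u=\log(x/y)>0$, $v=\log(q/\bar q)$, and $\De=x-y$. Using $q-\bar q=\lm\De$ and $\lm x-q=-\tfrac{1-\lm}k$, a direct calculation gives
\[
\der_x H=-mu,\quad -\der_m H=xu-\De,\quad \der_x F=m\lm v,\quad \der_m F=qv-\lm\De,
\]
so the Wronskian
\[
G(P):=\der_x H\cdot\der_m F-\der_m H\cdot\der_x F=m\bigl[\lm\De(u-v)-\tfrac{(1-\lm)uv}k\bigr].
\]
Since $\der_x H<0$, showing that $F$ is strictly decreasing in $m$ along $\{H(P)=c\}$ is equivalent to $G(P)>0$.

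The main obstacle is verifying the scalar inequality $G(P)>0$ for all $\lm\in[-\tfrac1{k-1},1]\setminus\{0,1\}$ and admissible $(m,x)$. Note that $G$ vanishes at $\lm=0$ (where $v=0$) and at $\lm=1$ (where $u=v$), so the task is to show $G>0$ strictly between them and also on $[-\tfrac1{k-1},0)$. I expect the cleanest route is to substitute $a=\tfrac{kx-1}{1-mx}$ (as in Claims~\ref{ClaimpLSIDer} and~\ref{Claim1LSIDer}), reducing $G>0$ to a two-variable inequality in $(a,\lm)$; then one can either establish concavity of $G$ in $\lm$ on $[0,1]$ to force $G>0$ on $(0,1)$ from the boundary zeros, and check the antiferromagnetic range $[-\tfrac1{k-1},0)$ separately by a sign computation at $\lm=-\tfrac1{k-1}$ together with continuity, or exploit the semigroup identity $\PC_\lm\circ\PC_{\lm'}=\PC_{\lm\lm'}$ on $\lm\in[0,1]$ to derive the ferromagnetic case by direct integration of the $1$-NLSI of Theorem~\ref{ThmNonLinpLSI} and extend to $\lm<0$ by a limiting argument. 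Once $G>0$ is in hand, $F$ is uniquely maximized at $m=1$ along $\{H(P)=c\}$, which combined with the two-valued reduction yields the claimed uniqueness up to permutation.
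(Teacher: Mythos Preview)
Your setup through the computation of $G(P)=m\bigl[\lm\De(u-v)-\tfrac{(1-\lm)uv}{k}\bigr]$ is correct and matches the paper exactly; the two-value reduction via KKT and the $(m,x)$-parametrization are also the paper's Steps~0--2. The gap is entirely in the last paragraph, where you propose two routes to $G>0$ without carrying out either, and in fact neither works as stated.

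Route~(a), concavity of $G$ in $\lm$ on $[0,1]$, is false. From your own formula, a Taylor expansion at $\lm=0$ (where $v\sim k\lm\De$) gives
\[
G(P)/m \;=\; \lm^2 k\De\Bigl(\tfrac{u(x+y)}{2}-\De\Bigr)+O(\lm^3),
\]
so $G$ vanishes to second order at $\lm=0$ with positive coefficient (by the arithmetic--logarithmic mean inequality $\tfrac{u(x+y)}{2}>\De$). Hence $G$ is strictly convex near $\lm=0$, not concave, and your endpoint argument from $G(0)=G(1)=0$ collapses. The antiferromagnetic side suffers the same convexity at $\lm=0$, and ``sign at $\lm=-\tfrac1{k-1}$ plus continuity'' cannot by itself control the interior.

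Route~(b), integrating the $1$-NLSI along the semigroup, does give the ferromagnetic case $\lm\in(0,1)$ --- this is exactly the paper's alternative proof --- but there is no limiting argument that pushes it to $\lm<0$: antiferromagnetic channels are not in the semigroup, and nothing about the $\lm\to0^+$ behavior of $H(\PC_\lm\circ P)$ determines its behavior for $\lm<0$.

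What the paper actually does is neither of these. It divides through by the nonnegative quantity $m\lm uv$ (note $\lm v\ge0$ for all $\lm$), obtaining
\[
\frac{G(P)}{m\lm uv}=\frac{\De}{v}-\frac{\De}{u}-\frac{1}{k\lm}+\frac{1}{k},
\]
which is continuous on $[-\tfrac1{k-1},1]$ and vanishes at $\lm=1$. It then shows this quotient is nonincreasing in $\lm$: the derivative reduces, after the substitution $w=\lm x+\tfrac{1-\lm}{k}$ and $a=\tfrac{kw-1}{1-mw}$, to the elementary inequality $\log^2(1+a)\le \tfrac{a^2}{1+a}$ of Lemma~\ref{LemmaElem}. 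Monotonicity plus the zero at $\lm=1$ then yields $G\ge0$ (strictly, for $\lm\notin\{0,1\}$) across the entire range in one stroke. The substitution $a=\tfrac{kx-1}{1-mx}$ you mention is close in spirit, but the decisive idea --- dividing by $m\lm uv$ before differentiating in $\lm$ --- is what you are missing.
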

\begin{proof}
	The result for $\lm \in \{0, 1\}$ is obvious. In the following assume that $\lm\not \in \{0, 1\}$.
	The result for $c\in \{0, \log k\}$ is obvious. In the following assume that $c\not \in \{0, \log k\}$.
	The set $\{P : H(P) = c\}$ is compact, so the minimum value of $H(\PC_\lm \circ P)$ is achieved at some point $P = (p_1, \ldots, p_k)$.

	We prove in several steps. In Step 0, we prove that if $p_i=0$ for some $i$, then there can be at most two different values of $p_i$'s. In Step 1, we prove that if $p_i>0$ for all $i$, then there can be at most two different values of $p_i$'s. In Step 2, we prove that one of the two different values must have multiplicity one, thus finishing the proof of the proposition.

	\textbf{Step 0.}
	\begin{claim}\label{ClaimSDPIZero}
		Fix $a, b, d>0$ and $c\in \bR_{>-d} \bl \{0\}$.
		Among all solutions $u, v, w\in [0, 1]$ with $u+v+w=a$ and $-u\log u - v \log v - w\log w = b$, the maximum of $$(c u+d) \log (c u+d)+(c v+d) \log (c v+d)+(c w+d) \log (c w+d)$$
		is not achieved at a point where $0=u<v<w$.
	\end{claim}
	\begin{proof}
		Suppose the maximum is acheived at such a point $(u_0, v_0, w_0)$ where $0=u_0<v_0<w_0$.
		Extend it to a curve $(u, v=v(u), w = w(u))$ on $u\in [0, \ep)$ for some $\ep>0$, such that $u<v<w$ for all $u$, satisfying
		\begin{align}
			u + v + w &= a, \label{EqnClaimSDPIZeroA}\\
			-u\log u - v \log v - w \log w &= b \label{EqnClaimSDPIZeroB},
		\end{align}
		and $v(0)=v_0$, $w(0)=w_0$.

		We prove that $$f(u) := (c u+d) \log (c u+d)+(c v+d) \log (c v+d)+(c w+d) \log (c w+d)$$ decreases as $u$ approaches $0^+$ for small enough $u$.

		By taking derivative of \eqref{EqnClaimSDPIZeroA} and \eqref{EqnClaimSDPIZeroB}, one can compute that
		\begin{align*}
			v^\p(u) &= \frac{\log w - \log u}{\log v - \log w},\\
			w^\p(u) &= \frac{\log u - \log v}{\log v - \log w}.
		\end{align*}
		Therefore
		\begin{align*}
			f^\p(u) &= c(\log (cu+d) + \log (cv+d) v^\p(u) + \log (cw+d) w^\p(u)) \\
			& = c(\log (cu+d) + \log(cv+d) \frac{\log w - \log u}{\log v - \log w} + \log(cw+d) \frac{\log u - \log v}{\log v - \log w}).
		\end{align*}
		Because $0<v_0<w_0$, terms involving $\log u$ dominates the sum. The dominating term is
		\begin{align*}
			-c \log u\frac{\log (cv+d) - \log(cw+d)}{\log v - \log w} > 0.
		\end{align*}
		Therefore the maximum of $f$ is not achieved at $u=0$.
	\end{proof}
	By Claim \ref{ClaimSDPIZero}, if $p_i=0$ for some $i$, then there can be at most two different values of $p_i$'s.

	\textbf{Step 1.}
	\begin{claim}\label{ClaimSDPIDet}
		If $u, v, w \in (0, 1)$ are all different, then
		$$\det \begin{pmatrix} 1 & \log u & \log (\lm u + \frac{1-\lm} k)\\ 1 & \log v & \log (\lm v + \frac{1-\lm} k)\\ 1 & \log w & \log (\lm w + \frac{1-\lm} k) \end{pmatrix}\ne 0.$$
	\end{claim}
	\begin{proof}[Proof of Claim]
		Suppose $\det = 0$. Then for some $a, b\in \bR$, the equation $$\log(\lm x + \frac{1-\lm}k) + a \log x = b$$ has at least three distinct solutions $x\in (0, 1)$.
		However, $$\frac{\der}{\der x} (\log(\lm x + \frac{1-\lm}k) + a \log x) = \frac{\lm}{\lm x + \frac{1-\lm}k} + \frac ax$$
		is smooth on $(0, 1)$, and takes zero at most once.
		So $$\log(\lm x + \frac{1-\lm}k) + a \log x$$ takes each value at most twice on $(0, 1)$. Contradiction.
	\end{proof}
	By Lagrange multipliers, the three vectors
	\begin{align*}
		&\Na H(\PC_\lm \circ P) = (-\lm \log (\lm p_i + \frac {1-\lm}k) - \lm)_{i\in [k]},\\
		&\Na H(P) = (-1-\log p_i)_{i\in [k]},\\
		&\Na \sum_{i\in [k]} p_i = \bbl
	\end{align*} should be linear dependent. By Claim \ref{ClaimSDPIDet}, there can be at most two different values of $p_i$'s.

	So we can assume that $p_1 = \cdots = p_m = x$, $p_{m+1} = \cdots = p_k = \frac{1-mx}{k-m}$ for some $m\in [k-1]$, $x\in (\frac 1k, \frac 1m]$.

	\textbf{Step 2.}
	For $P$ of the above form, we have
	\begin{align*}
		-H(P) &= m x \log x + (1-m x) \log \frac{1-mx}{k-m},\\
		-H(\PC_\lm \circ P) &= m (\lm x + \frac{1-\lm}k) \log (\lm x + \frac{1-\lm}k) \\
		&+ (k-m) (\lm \frac{1-mx}{k-m} + \frac{1-\lm}k) \log (\lm \frac{1-mx}{k-m} + \frac{1-\lm}k).
	\end{align*}
	We smoothly continue both functions so that $m$ can take any real value in $[1, k-1]$.
	\begin{claim}\label{ClaimSDPIDer}
		For $m \in (1, k-1]$ and $x\in (\frac 1k, \frac 1m)$, we have
		$$- \frac{\der}{\der x} H(P) > 0$$
		and
		$$\frac{\der}{\der m} H(P) \frac{\der}{\der x} H(\PC_\lm \circ P) -  \frac{\der}{\der x} H(P) \frac{\der}{\der m} H(\PC_\lm \circ P) > 0.$$
	\end{claim}
	\begin{proof}[Proof of Claim]
		Let $$f(x) = \log x - \log \frac{1-mx}{k-m}.$$
		Then
		\begin{align*}
			-\frac{\der}{\der x} H(P) &= m f(x) > 0,\\
			-\frac{\der}{\der m} H(P) &= \frac{1-kx}{k-m} + x f(x),\\
			-\frac{\der}{\der x} H(\PC_\lm \circ P) &= \lm m f(\lm x + \frac{1-\lm}k),\\
			-\frac{\der}{\der m} H(\PC_\lm \circ P) &= \lm \frac{1-kx}{k-m} +(\lm x + \frac{1-\lm}k) f(\lm x + \frac{1-\lm}k),
		\end{align*}
		and
		\begin{align*}
			G(P) := & \frac{\der}{\der m} H(P) \frac{\der}{\der x} H(\PC_\lm \circ P) -  \frac{\der}{\der x} H(P) \frac{\der}{\der m} H(\PC_\lm \circ P)\\
			& = \lm m \frac{1-kx}{k-m}(f(\lm x +\frac{1-\lm}k) - f(x)) - m f(x) f(\lm x +\frac{1-\lm}k) \frac{1-\lm}k.
		\end{align*}

		\begin{align*}
			\frac{\der}{\der \lm} \frac{G(P)}{m \lm f(x) f(\lm x + \frac{1-\lm}k)} &= \frac{1}{k\lm^2} + \frac{1-kx}{k-m} \frac{\frac{\der}{\der \lm} f(\lm x + \frac{1-\lm}k)}{f(\lm x + \frac{1-\lm}k)^2}.
		\end{align*}
		Note that
		\begin{enumerate}
			\item $$\frac{G(P)}{m \lm f(x) f(\lm x + \frac{1-\lm}k)}$$ is continuous for $\lm \in [-\frac 1{k-1}, 1]$, and takes value $0$ at $\lm = 1$;
			\item $m \lm f(x) f(\lm x + \frac{1-\lm}k)\ge 0$ for $\lm \in [-\frac 1{k-1}, 1]$.
		\end{enumerate}
		So we only need to prove that $$\frac{\der}{\der \lm} \frac{G(P)}{m \lm f(x) f(\lm x + \frac{1-\lm}k)} \le 0,$$ i.e.,
		$$f(\lm x + \frac{1-\lm}k)^2 \le \frac{k \lm^2 (kx-1)}{k-m} \frac{\der}{\der \lm} f(\lm x + \frac{1-\lm}k).$$
		Let $y = \lm x + \frac{1-\lm}k$. Then the above inequality can be rewritten as
		\begin{align*}
			f(y)^2 \le \frac{k \lm^2 (kx-1)}{k-m} \frac{\der y}{\der \lm}\frac{\der f(y)}{\der y}  = \frac{(ky-1)^2}{(k-m)y(1-my)}.
		\end{align*}
		This is true by Lemma \ref{LemmaElem}, applied to $a = \frac{ky-1}{1-my}$.
		Equality holds only when $y = \frac 1k$, which cannot happen for $\lm\ne 0$.
	\end{proof}

	The set of $(m, x)$ where $m\in [1, k-1]$, $x\in (\frac 1k, \frac 1m]$, and $H(P) = c$ can be parametrized as a curve
	$(m, x=x(m))$ for $m\in [1, m_c]$ for some constant $m_c$.
	Along the curve, $H(\PC_\lm \circ P)$ is continuous, and by Claim \ref{ClaimSDPIDer}, is increasing in $m$.
	Therefore $H(\PC_\lm \circ P)$ is minimized at $m=1$. This finishes the proof.
\end{proof}
\begin{proof}[Proof of Theorem \ref{ThmSDPIPotts}]
	Consider a Markov chain $U\to X\to Y$ where $X$ has uniform distribution, and the channel $X\to Y$ is $\PC_\lm$.
	Because $P_X$ and $P_Y$ are both uniform, for any $u$, we have
	\begin{align*}
		D(P_{X|U=u} || P_X) &= \log k - H(P_{X|U=u}),\\
		D(P_{Y|U=u} || P_Y) &= \log k - H(P_{Y|U=u}).
	\end{align*}
	So by Proposition \ref{PropSDPIPotts} we get
	$$D(P_{Y|U=u} || P_Y) \le s_\lm(D(P_{X|U=u} || P_X)).$$
	Therefore
	\begin{align*}
		I(U; Y) = D(P_{Y|U} || P_Y | P_U) \le \hat s_\lm(D(P_{X|U} || P_X | P_U)) = \hat s_\lm(I(U; X)).
	\end{align*}

	Now we prove optimality. Let $c\in [0, \log k]$.
	Choose $a, b\in [0, \log k]$ and $u\in [0, 1]$ such that
	$c=(1-u)a+u b$ and $\hat s_\lm(c) = (1-u) s_\lm(a) + us_\lm(b)$.
	Choose $\rho, \tau\in [0, 1]$ such that $\Capa(\PC_\rho)=a$ and $\Capa(\PC_\tau)=b$,
	where $\Capa$ denotes channel capacity.
	Define random variable $U = (V, Z)$ such that $Z \sim \Ber(u)$,
	and conditioned on $Z=0$, $V \sim \PC_{\rho}(X)$,
	and conditioned on $Z=1$, $V \sim \PC_{\tau}(X)$.
	One can check that $$I(U; X) = (1-u) a + u b = c$$
	and $$I(U; Y) = (1-u) s_\lm(a) + u s_\lm(b) = \hat s_\lm(c).$$
\end{proof}


Let us discuss the relationship between Potts semigroup and ferromagnetic Potts channels.
As discussed in the Introduction, ferromagnetic Potts channels are exactly the operators in the Potts semigroup,
with $T_t = \PC_{\exp(-\frac k{k-1} t)}$.
Therefore $1$-LSI for the Potts semigroup can be seen as infinitesimal SDPI for ferromagnetic Potts channels,
and many results for the former can directly transfer to results for the latter.

We use Proposition \ref{PropNonLin1LSI} to give an alternative proof for Proposition \ref{PropSDPIPotts}
for ferromagnetic Potts channels.

\begin{proof}[Alternative proof of Proposition \ref{PropSDPIPotts} for ferromagnetic Potts channels]
	Let $P$ and $Q$ be two distributions with $H(P) = H(Q) = c$, where $P$ is of form $(x, \frac {1-x}{k-1}, \ldots, \frac{1-x}{k-1})$ for some $x\in [\frac 1k, 1]$, and $Q$ is not of this form (up to permuting the alphabet).
	Define $P_t = T_t \circ P$ and $Q_t = T_t \circ Q$, where $(T_t)_{t\ge 0}$ is the Potts semigroup.

	We prove that $H(P_t) < H(Q_t)$ for $t\in (0, \infty)$.
	Suppose this does not hold.
	Let $u = \inf\{t > 0 : H(P_t) \ge H(Q_t)\}$.
	Then we have $H(Q_u) = H(P_u)$ by continuity of semigroup.
	By Proposition \ref{PropNonLin1LSI}, we have
	\begin{align*}
		\frac{\der}{\der t}|_{t=u} H(Q_t) = \cE(Q_u, \log Q_u) > \cE(P_u, \log P_u) = \frac{\der}{\der t}|_{t=u} H(P_t).
	\end{align*}
	If $u = 0$, then for some $\ep>0$, $H(Q_t) > H(P_t)$ for $t\in (0, \ep)$.
	If $u > 0$, then for some $\ep>0$, $H(Q_t) < H(P_t)$ for $t\in (u-\ep, u)$.
	Both cases lead to contradiction with definition of $u$.
	So $H(P_t) < H(Q_t)$ for $t\in (0, \infty)$. This completes the proof of the result for $\lambda > 0$.
\end{proof}

\subsection{Behavior for $k\to\infty$}\label{sec:klim}
When should one use $p$-NLSI instead of $p$-LSI? To get some insights, we consider the case of $k\to\infty$.
First, we prove Proposition~\ref{Prop1LSCPotts} that $\alpha_1 = 1 + \frac {1+o(1)}{\log k}$.
\begin{proof}[Proof of Proposition \ref{Prop1LSCPotts}]
	\textbf{Lower bound.}
	By Theorem \ref{ThmNonLinpLSI}, we need to show that
	for all $x\in (\frac 1k, 1]$, we have
	$$ \frac k{k-1} (1 + \frac 1{\log k}) \le \frac{\xi_1(x)}{\psi(x)}.$$
	Noting that $$\xi_1(x) = \frac k{k-1} ( \frac 1k(-\log x - (k-1) \log \frac{1-x}{k-1}) - \log k + \psi(x)),$$
	it suffices to prove that
	\begin{align*}
		f(x) := \frac {\log k}k (-\log x - (k-1) \log \frac{1-x}{k-1}) - \log^2 k - \psi(x) \ge 0.
	\end{align*}
	We have $f(\frac 1k)=0$.
	So it suffices to prove that $f^\p(x)\ge 0$ for $x\in [\frac 1k, 1]$.
	\begin{align*}
		f^\p(x) &= \frac{\log k}k (-\frac 1x + \frac {k-1}{1-x}) - (\log x - \log \frac{1-x}{k-1}).
	\end{align*}
	We smoothly continue this function to $\{(k,x)\in \bR^2 : k\ge 3, x\in [\frac 1k, 1]\}$
	and prove that it is non-negative in this region.
	\begin{align*}
		\frac{\der}{\der k} f^\p(x) &= \frac{1-\log k}{k^2} (-\frac 1x + \frac {k-1}{1-x})
		+ \frac{\log k}k \frac 1{1-x} - \frac 1{k-1}\\
		& = \frac{(k-1)\log k + k(kx^2 -x-1) +1}{k^2(k-1)x(1-x)}.
	\end{align*}
	The numerator is a quadratic function in $x$, and for fixed $k$, it is minimized at $x=\frac 1k$,
	leading to
	\begin{align*}
		\frac{\der}{\der k} f^\p(x) \ge \frac{(k-1)\log k-k+1}{k^2 (k-1)x(1-x)}
		= \frac{\log k-1} {k^2 x(1-x)} \ge 0.
	\end{align*}
	So we only need to prove $f^\p(x)\ge 0$ for minimum $k$, i.e., $k = \max\{3, \frac 1x\}$.
	When $k = \frac 1x$, on can verify that $f^\p(x)=0$.
	So the only remaining case is $k=3$.
	For $k=3$, we prove that $f$ is convex in $x$, i.e., $f^\pp(x)\ge 0$ for $x\in [0, 1]$.
	\begin{align*}
		f^\pp(x) &= \frac{\log k}k (\frac 1{x^2} + \frac {k-1}{(1-x)^2}) - (\frac 1x + \frac 1{1-x})\\
		& = \frac{\log k((1-x)^2 + (k-1)x^2) - k x(1-x)}{k x^2(1-x)^2}\\
		& = \frac{(k\log k + k) x^2 - (k+2\log k)x + \log k}{k x^2(1-x)^2}.
	\end{align*}
	The numerator is a quadratic function in $x$, and its discriminant is
	\begin{align*}
		(k+2\log k)^2 - 4 (k\log k+k) \log k = k^2 - 4(k-1)\log^2 k.
	\end{align*}
	When $k=3$, the above value is $<0$.
	So $f^\pp(x)\ge 0$ for $k=3$ and $x\in [0,1]$. This finishes the proof of the lower bound.

	\textbf{Upper bound.}
	For the upper bound, we need find $x\in (\frac 1k, 1]$ such that $\frac{f(x)}{\psi(x)} = o(1)$.
	Because the upper bound to prove is asymptotic, we assume that $k$ is large enough.
	Take $x = \frac 2{\log k}$.
	Then we have
	\begin{align*}
		\psi(x) &= \log k + \frac 2{\log k} \log \frac {2}{\log k} + (1-\frac 2{\log k}) \log \frac{1-\frac{2}{\log k}}{k-1}\\
		& = \log k - (1-\frac 2{\log k}) \log (k-1) + o(1)\\
		& = 2 +o(1)
	\end{align*}
	and
	\begin{align*}
		f(x) &= \frac{\log k}k(-\log \frac{2}{\log k} - (k-1) \log \frac{1-\frac{2}{\log k}}{k-1}) - \log^2 k -\psi(x)\\
		& =\frac{\log k}k(k-1) (\log (k-1) - \log (1-\frac{2}{\log k})) - \log^2 k - 2 +o(1)\\
		& = \log k \cdot (1+O(\frac 1k))\cdot (\log k + O(\frac 1k) + \frac{2}{\log k} + O(\frac 1{\log^2 k})) - \log^2 k-2+o(1)\\
		& = o(1).
	\end{align*}
	So $\frac{f(x)}{\psi(x)} = o(1)$.
\end{proof}
\begin{rmk}
	Numerical computation suggests $\frac{f(x)}{\psi(x)}$ is minimized at a point $x=\frac{2+o(1)}{\log k}$.
	This guides our proof of the upper bound in Proposition \ref{Prop1LSCPotts},
	but we have not attempted to prove this fact.
\end{rmk}

To understand the case $p>1$, let us denote convexification of $b_p$ as $\check b_p$. Then
NLSI lower bound, assuming $\EE[f]=1$, gives
$$\cE(f^{\frac1p}, f^{1-{\frac 1 p}}) \ge \check b_p(\Ent(f))\,.$$
We see that this improves upon $\alpha_p \cdot \Ent(f)$ the more the larger
the entropy. In particular, the maximum improvement happens when $\Ent(f) = \log k$. That is we have for $p>1$
$$ \alpha_p \le \frac{\check b_p(x)} x \le \frac 1{\log k}\,.$$
Together with \eqref{EqnpLSIComp} and \eqref{eq:2lsi_potts}, we get $\al_p = \Th(\frac 1{\log k})$ as $k\to \infty$.
Numerical computation suggests that $\al_p = \frac{1+o(1)}{\log k}$.

At the same time, the improvement given by the $1$-NLSI (over $1$-LSI) is much stronger, since
$b_1(\log k) = \infty$. \textit{To summarize,} the $p$-NLSI should be
preferred for $p=1$ or for cases where $k$ is small and entropy is large (i.e. functions are highly spiky).

Next, we consider SDPIs and $\eta_{\KL}$. First, we show that for a fixed
$\lambda\ge 0$ we have
$$\eta_{\KL}(\PC_\lambda,\pi) = {\lambda} - \Th\left(\frac 1{\log k}\right)\,.$$
Indeed, the upper bound is given by~\eqref{eq:etakl_1lsi}. For the lower bound we have
\begin{align*}
	& \eta_{\KL} (\PC_\lm, \pi)\\
	&\ge \eta_{\min} := \frac{\psi(\lm + \frac {1-\lm}k)}{\psi(1)} \\
	& = \frac{\log k + (\lm + \frac {1-\lm}k) \log (\lm + \frac {1-\lm}k) + (1-(\lm + \frac {1-\lm}k)) \log
	\frac{1-(\lm + \frac {1-\lm}k)}{k-1}}{\log k}\\
	& = \lm + \frac{\lambda \log \lambda + (1-\lambda)\log(1-\lambda) + o(1)}{\log k}\,.
\end{align*}

On the other hand,
\begin{align*}
	\eta_{\KL}(\PC_\lm, \pi) \le \eta_{\KL}(\PC_\lm) \le \eta_{\TV}(\PC_\lm) = \lm\,,
\end{align*}
where $\eta_{\TV}$ is the contraction coefficient for the total variation distance (Dobrushin coefficient,
see~\cite{CKZ98,PW17}).

Notice also that for $\hat s_\lambda$ we have generally $\eta_{\min} \le \frac{\hat s_\lambda(x)}x \le
\eta_{\KL}(\PC_\lambda,\pi)$. Therefore we have shown that
$$ \lim_{k\to \infty} \frac{\hat s_\lambda(x)}x = \lim_{k\to \infty}\eta_{\KL}(\PC_\lambda,\pi)
=\lim_{k\to \infty} \eta_{\KL}(\PC_\lambda) =
\eta_{\TV}(\PC_\lambda) = \lambda\,.$$
The estimates of information quantities using the more sophisticated tools get improvement over
simplistic coupling of at most multiplicative order $(1+\Th(\frac 1 {\log k}))$.

Note, however, if $\lm$ changes with $k$ (e.g.~$\lm = -\frac 1{k-1}$), then the improvement over
$\eta_{\TV}$ can be as large as a multiplicative factor of $(1+o(1))\log k$,
as shown in Proposition \ref{PropColorSDPI}.

\section{Product spaces}\label{SecProd}
In this section we study extensions of $p$-NLSIs and SDPIs to the
product semigroup $(T_t^{\ot n})_{t \ge 0}$ on the product space $[k]^n$ (and product channels $\PC_\lm^{\otimes n}$).
The general property of tensorization of $p$-NLSI was established in~\cite[Theorem 1]{PS19}, and thus we only need to
concavify functions $\Phi_p$ in~\eqref{eq:nlsi}. Similarly, we can show that (non-linear) strong data processing
inequalities tensorize if one concavifies function $s(\cdot)$ in~\eqref{eq:nsdpi}.

After showing these extensions to product spaces, we proceed to discussing implications of $p$-NLSI on speed of
convergence to equillibrium in terms of $\Ent(T_t^{\otimes n} \nu)$ and on edge-isoperimetric inequalities.

\subsection{Tensorization}

\begin{prop}\label{PropLSITenso}
	Fix $p\ge 1$.
	Recall $b_p$ defined in \eqref{eq:bp_def}.
	Let $\check b_p$ be the convex envelope of $b_p$.
	Then $p$-LSI holds for the product semigroup $(T_t^{\ot n})_{t \ge 0}$
	with $$\Phi_{n, p}(x) = n \check b_p^{-1}(\frac xn).$$
\end{prop}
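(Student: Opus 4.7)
The plan is to reduce to the general tensorization result \cite[Theorem 1]{PS19}, which requires $\Phi_p$ to be concave. Theorem \ref{ThmNonLinpLSI} already gives $p$-NLSI for the single copy with $\Phi_p = b_p^{-1}$, but $b_p$ need not be convex, so $b_p^{-1}$ may fail to be concave. So the main task is to concavify the single-copy $\Phi_p$ without losing the inequality, and then apply the tensorization machinery off the shelf.

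The concavification step goes as follows. First I would observe that if $p$-NLSI holds with some $\Phi$, it automatically holds with any pointwise larger $\Psi \ge \Phi$. By Theorem \ref{ThmNonLinpLSI}, $b_p$ is continuous and strictly increasing on $[0, \log k]$ (since $\xi_p$ and $\psi$ are), so passing to its convex envelope $\check b_p \le b_p$ and inverting yields an \emph{increasing concave} function $\check b_p^{-1}$ with $\check b_p^{-1} \ge b_p^{-1} = \Phi_p$. Hence the Potts semigroup also satisfies $p$-NLSI with the concave function $\tilde \Phi_p := \check b_p^{-1}$.

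Next I would invoke the general tensorization principle for $p$-NLSI established in \cite[Theorem 1]{PS19}: if a reversible semigroup on $\cX$ satisfies $p$-NLSI with concave $\tilde \Phi_p$, then $(T_t^{\otimes n})_{t \ge 0}$ on $\cX^n$ satisfies $p$-NLSI with $x \mapsto n \tilde \Phi_p(x/n)$. The ingredients behind this are standard and I would not redo them in detail: subadditivity of relative entropy with respect to a product measure, additivity of the Dirichlet form over tensor factors, coordinatewise application of the single-copy $p$-NLSI, and finally Jensen's inequality applied to the concave $\tilde \Phi_p$ together with the weights $\bE_i[f]/\bE[f]$. Specializing $\tilde \Phi_p = \check b_p^{-1}$ gives exactly $\Phi_{n,p}(x) = n \check b_p^{-1}(x/n)$, proving the proposition.

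The only nontrivial point is the concavification itself: if one tried to tensorize with $b_p^{-1}$ directly, Jensen's inequality would go the wrong way whenever $b_p$ fails to be convex, and nothing could be salvaged. Passing to $\check b_p$ is precisely what restores the right concavity while preserving the inequality, which is why the statement features $\check b_p$ rather than $b_p$. Everything else is routine once \cite[Theorem 1]{PS19} is cited; the real content of the proposition is identifying the correct concave majorant, and this has already been done by the above reasoning.
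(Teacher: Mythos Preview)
Your proposal is correct and follows the same approach as the paper, which simply cites Theorem \ref{ThmNonLinpLSI} and \cite[Theorem 1]{PS19} in one line. You have merely made explicit the concavification step (passing from $b_p^{-1}$ to the larger concave $\check b_p^{-1}$) that the paper leaves implicit in that citation.
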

\begin{proof}
	By Theorem \ref{ThmNonLinpLSI} and \cite[Theorem 1]{PS19}.
\end{proof}
As we show below $\check b_p \neq b_p$ (Proposition~\ref{PropNonConv}).

For non-linear SDPI, we first prove a general tensorization result.
\begin{prop}\label{PropSDPITensoGen}
	Fix a probability kernel $P_{Y|X}: \cX \to \cY$ and a distribution $P_X$ on $\cX$.

	\begin{enumerate}
		\item Suppose for some non-decreasing function $s: \bR_{\ge 0} \to \bR_{\ge 0}$ we have
		\begin{equation}
			D(Q_Y || P_Y) \le s(D(Q_X||P_X))\label{EqnNonLinKLSDPI}
		\end{equation}
		for all distribution $Q_X$ on $\cX$ with $0< D(Q_X||P_X)<\infty$.
		Then for all distribution $Q_{X^n}$ on $\cX^n$ with $0 < D(Q_{X^n} || P_X^{\ot n}) < \infty$,
		we have
		\begin{equation}
			D(Q_{Y^n} || P_Y^{\ot n}) \le n \hat s(\frac 1n D(Q_{X^n} || P_X^{\ot n})),
		\end{equation}
		where $\hat s$ is the concave envelope of $s$,
		and $Q_{Y^n} = P_{Y|X}^{\ot n} \circ Q_{X^n}$.
		\item Suppose for some non-decreasing concave function $\hat s: [0, \log |\cX|]\to \bR_{\ge 0}$ we have
		\begin{equation}
			I(U; Y) \le \hat s(I(U; X))\label{EqnNonLinMISDPI}
		\end{equation}
		for all Markov chains
		$$U\to X\to Y$$
		where the distribution of $X$ is $P_X$.
		Then for all Markov chains
		$$U\to X^n\to Y^n$$
		where the distribution of $X$ is $P_X^{\ot n}$,
		we have
		\begin{equation}
			I(U; Y^n) \le n \hat s(\frac 1n I(U; X^n)).
		\end{equation}
	\end{enumerate}
\end{prop}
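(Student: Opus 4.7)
Both parts follow the same template: apply the chain rule along the $n$ coordinates, reduce each conditional summand to a single-letter instance of the hypothesis by exploiting the product form $P_{Y^n|X^n} = P_{Y|X}^{\ot n}$, and aggregate using Jensen's inequality applied to the concave envelope $\hat s$. The concavification is needed in two places: to replace $s$ (which may be arbitrary nondecreasing) by a concave upper bound, and to interchange a sum with $\hat s$ via $\sum_i \hat s(a_i) \le n\, \hat s(\frac{1}{n}\sum_i a_i)$.

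For Part (1), I would start from the KL chain rule
\[
D(Q_{Y^n} \| P_Y^{\ot n}) = \sum_{i=1}^n \bE_{y^{i-1}\sim Q_{Y^{i-1}}}\, D\bigl(Q_{Y_i|Y^{i-1}=y^{i-1}} \,\|\, P_Y\bigr).
\]
Because the channel is a product, for each $y^{i-1}$ the conditional $Q_{Y_i|Y^{i-1}=y^{i-1}}$ is the pushforward of $Q_{X_i|Y^{i-1}=y^{i-1}}$ by $P_{Y|X}$, so the single-letter hypothesis combined with $s \le \hat s$ yields
\[
D\bigl(Q_{Y_i|Y^{i-1}=y^{i-1}} \,\|\, P_Y\bigr) \le \hat s\bigl(D(Q_{X_i|Y^{i-1}=y^{i-1}} \,\|\, P_X)\bigr).
\]
Averaging in $y^{i-1}$ via Jensen and then in $i$ via Jensen produces the bound $n\,\hat s(\tfrac 1n \sum_i a_i)$, where $a_i := D(Q_{X_i,Y^{i-1}} \,\|\, P_X \ot Q_{Y^{i-1}})$. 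The final piece is $\sum_i a_i \le D(Q_{X^n}\|P_X^{\ot n})$: data processing along the stochastic map $X^{i-1}\mapsto Y^{i-1}$ gives $a_i \le D(Q_{X_i,X^{i-1}} \,\|\, P_X \ot Q_{X^{i-1}}) = \bE_{Q_{X^{i-1}}} D(Q_{X_i|X^{i-1}} \| P_X)$, and the right-hand side sums to $D(Q_{X^n}\|P_X^{\ot n})$ by the chain rule. Monotonicity of $\hat s$ closes the argument.

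For Part (2), the plan is the exact MI analogue: start from $I(U;Y^n) = \sum_i I(U;Y_i|Y^{i-1})$ and for each fixed $y^{i-1}$ observe that $(U,X_i,Y_i)$ is a Markov chain conditioned on $Y^{i-1}=y^{i-1}$ (because $Y_i \perp (U,Y^{i-1})\mid X_i$ by the product channel), with $X_i$ marginally distributed as $P_X$ (since the $X_j$'s are i.i.d., so $X_i \perp Y^{i-1}$). The single-letter MI-SDPI then gives $I(U;Y_i|Y^{i-1}=y^{i-1}) \le \hat s\bigl(I(U;X_i|Y^{i-1}=y^{i-1})\bigr)$; averaging in $y^{i-1}$ and summing in $i$ via Jensen yields $n\,\hat s(\frac 1n \sum_i I(U;X_i|Y^{i-1}))$. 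The step I expect to require the most care is the data-processing inequality $\sum_i I(U;X_i|Y^{i-1}) \le I(U;X^n)$: since $X_i\perp Y^{i-1}$ we have $H(X_i|Y^{i-1})=H(X_i)$, and since $Y^{i-1}$ is a stochastic function of $X^{i-1}$, conditioning on $X^{i-1}$ reduces entropy at least as much, so $I(U;X_i|Y^{i-1}) \le I(U;X_i|X^{i-1})$; summing in $i$ gives $I(U;X^n)$ by the MI chain rule, and monotonicity of $\hat s$ concludes the proof.
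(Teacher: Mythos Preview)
Your proposal is correct and follows essentially the same approach as the paper: chain-rule decomposition, application of the single-letter hypothesis coordinate-by-coordinate using the product structure of the channel, a data-processing step to pass from conditioning on $Y$-history to conditioning on $X$-history, and Jensen/concavity to aggregate. The only differences are organizational: the paper packages the argument as an induction on $n$ (peeling off the last coordinate) rather than writing out the full chain-rule sum, and in Part~(2) the paper switches conditioning from $Y^{n-1}$ to $X^{n-1}$ \emph{before} applying the single-letter bound (via $I(U;Y_n\mid Y^{n-1}) = I(U,Y^{n-1};Y_n) \le I(U,X^{n-1};Y_n) = I(U;Y_n\mid X^{n-1})$), whereas you keep conditioning on $Y^{i-1}$ throughout and postpone the data-processing comparison to the very end.
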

We have separate statements for non-linear SDPI defined via KL divergence and via mutual information,
because they are not equivalent in general.
It is not hard to show that if KL divergence type non-linear SDPI (Inequality \eqref{EqnNonLinKLSDPI}) holds for some function $s$,
then mutual information type non-linear SDPI (Inequality \eqref{EqnNonLinMISDPI}) holds for $\hat s$.
However, it is not clear what is the best possible KL divergence type non-linear SDPI one can get
starting from mutual information type non-linear SDPI.
(Note the domain of function $s$ would become larger during the translation.)
\begin{proof}[Proof of Proposition \ref{PropSDPITensoGen}]
	\textbf{Proof of (1).}
	Perform induction on $n$. The base case $n=1$ is trivial.
	Now consider $n\ge 2$.
	We have
	\begin{align*}
		& D(Q_{Y^n} || P_Y^{\ot n}) \\
		&= D(Q_{Y^{n-1}} || P_Y^{\ot (n-1)}) + D(Q_{Y_n | Y^{n-1}} || P_Y | Q_{Y^{n-1}})\\
		& \le D(Q_{Y^{n-1}} || P_Y^{\ot (n-1)}) + D(Q_{Y_n | X^{n-1}} || P_Y | Q_{X^{n-1}}) \\
		& \le (n-1) \hat s(\frac 1{n-1} D(Q_{X^{n-1}}|| P_X^{\ot (n-1)})) + s(D(Q_{X_n| X^{n-1}} || P_X | Q_{X^{n-1}}))\\
		& \le n \hat s(\frac 1 n D(Q_{X^{n-1}}|| P_X^{\ot (n-1)}) + \frac 1n D(Q_{X_n| X^{n-1}} || P_X | Q_{X^{n-1}}))\\
		& = n \hat s(\frac 1n D(Q_{X^n} || P_X^{\ot n})).
	\end{align*}
	First step is by chain rule.
	Second step is because we have a Markov chain $$Y^{n-1} - X^{n-1} - Y_n,$$
	and conditioning on more information does not decrease conditioned divergence.
	Third step is by induction hypothesis.
	Fourth step is by concavity.
	Fifth step is by chain rule.

	\textbf{Proof of (2).}
	Perform induction on $n$. The base case $n=1$ is trivial.
	Now consider $n\ge 2$.
	We have
	\begin{align*}
		I(U; Y^n) &= I(U; Y^{n-1}) + I(U; Y_n | Y^{n-1}) \\
		& = I(U; Y^{n-1}) + I(U, Y^{n-1}; Y_n) \\
		& \le I(U; Y^{n-1}) + I(U, X^{n-1}; Y_n) \\
		& = I(U; Y^{n-1}) + I(U; Y_n | X^{n-1}) \\
		& \le (n-1) \hat s(\frac 1{n-1} I(U; X^{n-1})) + \hat s(I(U; X_n | X^{n-1})) \\
		& \le n \hat s(\frac 1n I(U; X^{n-1}) + \frac 1n I(U; X_n | X^{n-1})) \\
		& = n \hat s (\frac 1n I(U; X^n)).
	\end{align*}
	First step is by chain rule.
	Second step is by chain rule, and that $Y_n$ is independent with $Y^{n-1}$.
	Third step is by data processing inequality.
	Fourth step is by chain rule, and that $Y_n$ is independent with $X^{n-1}$.
	Fifth step is by induction hypothesis.
	Sixth step is by concavity.
	Seventh step is by chain rule.
\end{proof}

\begin{coro}\label{CoroSDPITenso}
	Recall function $s_\lm$ defined in Theorem \ref{ThmSDPIPotts}.
	Let $Q_{X^n}$ be a distribution on $[k]^n$ and $Q_{Y^n} = \PC_\lm^{\ot n}\circ Q_{X^n}$.
	Then we have
	\begin{equation}
		{\frac 1n} H(Y^n) \ge \log k - \hat s_\lm\left(\log k - {\frac 1n}H(X^n)\right)\,.\label{EqnSDPITenso}
	\end{equation}
	Furthermore, for every $c\in [0,\log k]$, there exist distributions $X^n$ with $H(X^n)=(c+o(1))n$
	such that ${\frac 1n} H(Y^n) = \log k - \hat s_\lm(\log k - c) + o(1)$.
\end{coro}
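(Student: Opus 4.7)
The plan is to combine the tensorization in Proposition \ref{PropSDPITensoGen}(1) with the single-letter KL-form SDPI implicit in Proposition \ref{PropSDPIPotts}, and then to match the resulting bound by time-sharing between two extremal two-valued inputs.

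For the inequality, I would take the reference input distribution to be $P_X = \pi = \Unif([k])$; since $\PC_\lm$ preserves $\pi$, also $P_Y = \pi$. Proposition \ref{PropSDPIPotts} says that for fixed $H(Q_X)$ the minimum of $H(\PC_\lm\circ Q_X)$ is attained at the two-valued distribution $(x,\tfrac{1-x}{k-1},\ldots,\tfrac{1-x}{k-1})$. Translating via $D(\cdot\|\pi)=\log k - H(\cdot)$ and using the defining identity $s_\lm(\psi(x))=\psi(\lm x+\tfrac{1-\lm}k)$ from Theorem \ref{ThmSDPIPotts}, this is exactly the single-letter bound
$$ D(\PC_\lm\circ Q_X\,\|\,\pi)\le s_\lm\big(D(Q_X\|\pi)\big) \qquad \text{for every } Q_X \text{ on } [k].$$
Feeding this into Proposition \ref{PropSDPITensoGen}(1) then yields
$$ D(Q_{Y^n}\,\|\,\pi^{\ot n}) \le n\,\hat s_\lm\!\Big(\tfrac 1n D(Q_{X^n}\,\|\,\pi^{\ot n})\Big),$$
and since $D(\,\cdot\,\|\,\pi^{\ot n}) = n\log k - H(\,\cdot\,)$, dividing by $n$ and rearranging produces exactly \eqref{EqnSDPITenso}.

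For the sharpness claim, fix $c\in[0,\log k]$. Because $\hat s_\lm$ is the concave envelope of a function of one real variable, I can choose $a,b\in[0,\log k]$ and $u\in[0,1]$ satisfying $(1-u)a+ub = \log k - c$ and $(1-u)s_\lm(a)+u s_\lm(b) = \hat s_\lm(\log k - c)$. For each $\alpha\in\{a,b\}$, let $x_\alpha\in[\tfrac 1k,1]$ be the unique solution of $\psi(x_\alpha)=\alpha$ and set $P_\alpha := (x_\alpha,\tfrac{1-x_\alpha}{k-1},\ldots,\tfrac{1-x_\alpha}{k-1})$; by construction $H(P_\alpha) = \log k - \alpha$ and $H(\PC_\lm\circ P_\alpha) = \log k - s_\lm(\alpha)$. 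Writing $n_a := \lfloor (1-u)n\rfloor$ and $n_b := n - n_a$, I take $Q_{X^n} := P_a^{\ot n_a}\ot P_b^{\ot n_b}$, so that $Q_{Y^n}$ is again a product. Additivity of entropy over product distributions then gives $\tfrac 1n H(X^n) = c+o(1)$ and $\tfrac 1n H(Y^n) = \log k - \hat s_\lm(\log k - c)+o(1)$ as $n\to\infty$.

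I do not anticipate a serious obstacle: the tensorization step is packaged in Proposition \ref{PropSDPITensoGen}, the single-letter KL-SDPI follows directly from Proposition \ref{PropSDPIPotts}, and the sharpness construction is essentially forced by the need to mix two single-letter optima to saturate the concave envelope whenever $\hat s_\lm\ne s_\lm$.
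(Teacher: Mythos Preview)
Your proposal is correct and matches the paper's own proof almost verbatim: the paper likewise deduces \eqref{EqnSDPITenso} from Proposition \ref{PropSDPITensoGen}(1) together with $D(Q_{X^n}\|\pi^{\ot n}) = n\log k - H(Q_{X^n})$, and for sharpness uses exactly the same time-sharing product construction between two two-valued single-letter distributions. The only cosmetic difference is that the paper parametrizes $a,b$ as entropies while you parametrize them as divergences from uniform.
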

\begin{proof}


	Inequality \eqref{EqnSDPITenso} follows from Proposition \ref{PropSDPITensoGen}
	and that $$D(Q_{X^n} || \pi^{\ot n}) = n\log k - H(Q_{X^n}).$$

	For the second part, choose $a, b\in [0, \log k]$ and $u \in [0, 1]$ such that
	$c = (1-u) a + ub$ and $\hat s_\lm (\log k - c) = (1-u) s_\lm(\log k - a) + u s_\lm(\log k - b)$.
	Such $a, b, u$ exist because $\hat s_\lm$ is the concave envelope of $s_\lm$.

	Let $Q_A$ (resp. $Q_B$) be the unique distribution on $[k]$ of form $(x, \frac {1-x}{k-1}, \ldots, \frac{1-x}{k-1})$ with $x\in [\frac 1k, 1]$ and entropy $a$ (resp.~entropy $b$).
	Now let $Q_{X^n}$ be the distribution
	$Q_A \t \cdots \t Q_A \t Q_B \t \cdots \t Q_B$, where
	$Q_A$ appears $\lfloor (1-u) n\rfloor$ times and $Q_B$ appears $\lceil u n\rceil$ times.
	It is easy to see that this distribution satisfies the required properties.
\end{proof}

\subsection{Linear piece}\label{sec:linpiece}
In Proposition \ref{PropLSITenso} and Theorem \ref{ThmSDPIPotts}, we make use of convexification
of $b_p$ and concavification of $s_\lm$.
When $k=2$, we have $\check b_p = b_p$ and $\hat s_\lm = s_\lm$ (the latter fact is known as Mrs. Gerber's Lemma \cite{WZ73}).
However, for $k\ge 3$, the situation is vastly different.
\begin{prop}\label{PropNonConv}
	Recall function $b_p: [0, \log k]\to \bR$ defined in Theorem \ref{ThmNonLinpLSI} and $s_\lm: [0, \log k] \to \bR$ defined in Theorem \ref{ThmSDPIPotts}.
	\begin{enumerate}
		\item For all $k\ge 3$ and $p\ge 1$, $b_p$ is not convex near $0$.
		\item For all $k\ge 3$, $\lm \in [-\frac 1{k-1}, 0)\cup (0, 1)$, $s_\lm$ not concave near $0$.
	\end{enumerate}
\end{prop}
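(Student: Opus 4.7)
The plan is to analyze both parts via Taylor expansion around $x = 1/k$ (the unique minimum of $\psi$ on $[1/k,1]$), then invert $y = \psi(x)$ to read off the leading nonlinear term of $b_p(y)$ and $s_\lm(y)$. Setting $u = x - 1/k$, direct differentiation of $\psi$ at $1/k$ gives
$$\psi(x) = \alpha u^2 + \beta u^3 + O(u^4), \qquad \alpha = \frac{k^2}{2(k-1)}, \quad \beta = -\frac{k^3(k-2)}{6(k-1)^2}.$$
Inverting, $u^2 = y/\alpha - (\beta/\alpha^{5/2})\, y^{3/2} + O(y^2)$ and $u^3 = y^{3/2}/\alpha^{3/2} + O(y^2)$. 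The sign of the $y^{3/2}$ coefficient in the resulting expansion of $b_p(y)$ (resp.\ $s_\lm(y)$) then decides local convexity: a function $f(y) = ay + by^{3/2} + o(y^{3/2})$ satisfies $f''(y) \sim \tfrac{3b}{4}y^{-1/2}$, so $b > 0$ precludes concavity on any $[0,\ep]$ and $b < 0$ precludes convexity, since the tangent-line inequality at $y=0$ is violated.

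For part (2) the key simplification is $(\lm x + (1-\lm)/k) - 1/k = \lm u$, so
$$s_\lm(y) \;=\; \psi\bigl(1/k + \lm u\bigr) \;=\; \alpha \lm^2 u^2 + \beta \lm^3 u^3 + O(u^4) \;=\; \lm^2\, y + \frac{\beta \lm^2 (\lm - 1)}{\alpha^{3/2}}\, y^{3/2} + O(y^2).$$
Since $\beta < 0$ for $k\ge 3$ and $\lm - 1 < 0$ for every $\lm \in [-\tfrac{1}{k-1}, 0)\cup(0,1)$, the coefficient of $y^{3/2}$ is strictly positive, so $s_\lm$ is not concave near $0$.

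For part (1) with $p > 1$, I would write $\xi_p = \frac{k}{k-1}\bigl(1 - \frac{1}{k} A(x) B(x)\bigr)$ where $A(x) = x^{1/p} + (k-1)((1-x)/(k-1))^{1/p}$ and $B$ is the analog with exponent $1-1/p$. At $x = 1/k$ one has $A_0 B_0 = k$ and, crucially, $A'(1/k) = B'(1/k) = 0$, so in $AB$ only $A_0 B_2 + A_2 B_0$ and $A_0 B_3 + A_3 B_0$ enter at orders $u^2$ and $u^3$. Writing $r = 1/p$, explicit differentiation gives $A_2 = r(r-1)k^{3-r}/(k-1)$, $B_2 = -r(1-r)k^{2+r}/(k-1)$, $A_3 = r(1-r)(2-r)k^{4-r}(k-2)/(k-1)^2$, $B_3 = r(1-r)(1+r)k^{3+r}(k-2)/(k-1)^2$, and the pleasant collapse $(1+r)+(2-r)=3$ yields $A_0 B_3 + A_3 B_0 = 3r(1-r)k^4(k-2)/(k-1)^2$. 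Assembling,
$$\xi_p(x) = \frac{r(1-r)k^3}{(k-1)^2}\, u^2 \;-\; \frac{r(1-r)k^4(k-2)}{2(k-1)^3}\, u^3 + O(u^4),$$
whence the $y^{3/2}$ coefficient of $b_p(y)$ equals $-\frac{r(1-r)k^6(k-2)}{12(k-1)^4}\,\alpha^{-5/2} < 0$, proving non-convexity. The case $p=1$, whose $\xi_1$ is not the $r\to 1$ limit of the $p>1$ formula, is treated by an analogous direct expansion yielding $\xi_1''(1/k) = k^2(3k-1)/(k-1)^2$ and $\xi_1'''(1/k) = -3k^4(k-2)/(k-1)^3$, hence a $y^{3/2}$ coefficient $-\frac{k^5(k-2)}{12(k-1)^4}\alpha^{-5/2} < 0$.

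The main technical obstacle is the bookkeeping of the $u^3$ coefficient of $AB$: because $A'(1/k) = B'(1/k) = 0$, the usual cross term $A_1 B_2 + A_2 B_1$ vanishes and one is forced to compute $A'''(1/k)$ and $B'''(1/k)$ carefully. The saving grace is the identity $(1+r)+(2-r)=3$, after which all $r$-dependence lumps into a single positive factor $r(1-r)$ and the sign becomes transparent. The $p=1$ case is not a limit of the $p>1$ formula and must be verified in parallel, but is structurally identical and slightly simpler.
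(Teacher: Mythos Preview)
Your approach is essentially the same as the paper's. The paper packages the computation into a lemma: if $g'(x_0)=f'(x_0)=0$ and $(g''f'''-f''g''')(x_0)>0$, then $f\circ g^{-1}$ is not concave near $g(x_0)$; this criterion is exactly $12$ times your $y^{3/2}$ coefficient $b\alpha - a\beta$ in the expansion $f\circ g^{-1}(y)=(a/\alpha)y+\alpha^{-5/2}(b\alpha-a\beta)y^{3/2}+\cdots$, so the two arguments are equivalent reformulations of the same Taylor computation.

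One minor slip: your value $\xi_1''(1/k)=k^2(3k-1)/(k-1)^2$ is incorrect; direct differentiation gives $\xi_1''(1/k)=2k^3/(k-1)^2$. With the correct value the $y^{3/2}$ coefficient becomes $-\tfrac{k^6(k-2)}{12(k-1)^4}\alpha^{-5/2}$ rather than $-\tfrac{k^5(k-2)}{12(k-1)^4}\alpha^{-5/2}$, but the sign and hence the conclusion are unaffected. Also, the remark that $f''(y)\sim\tfrac{3b}{4}y^{-1/2}$ is not quite justified from $f(y)=ay+by^{3/2}+o(y^{3/2})$ alone, since the error term's second derivative is uncontrolled; the tangent-line argument you give in the same sentence is the rigorous one and suffices.
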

The proof is deferred to Appendix \ref{SecNonConv}.
Proposition \ref{PropNonConv} implies that there is a linear piece near origin in the graph of
$\check b_p$, $\hat \Phi_p$ and $\hat s_\lm$.

This implies a curious new property distinguishing Potts
semigroup with $k\ge 3$ from its binary cousin and from the Ornstein-Uhlenbeck semigroup.
Both of the latter have their $p$-NLSI
and SDPI strictly non-linear, which translates into the
following fact: among all initial densities $\nu_0$ with a given entropy
$\Ent(\nu_0)$ a simple product distributions simultaneously maximizes $\Ent(T_t^{\otimes n}\nu_0)$ for all $t$.
Stated differently we have (this is known as Mrs.~Gerber's Lemma) when $k=2$:
\begin{equation}
	D(\PC_\lm \circ P_{X^n}||\pi^{\ot n}) \le D(\PC_\lm \circ \Ber^{\otimes n}(p) ||\pi^{\ot n}),\label{EqnMGLk2}
\end{equation}
where $\pi$ is the uniform distribution on $[k]^n$,
and $\Ber^{\otimes n}(p)$ is an iid distribution on $[k]^n$ with $p\in[0,1/2]$ solving $D(\Ber(p)\|\pi) = \frac 1n
D(\PC_\lm \circ P_{X^n}||\pi^{\ot n})$. \textit{That is, the slowest to relax to equillibrium is the product
distribution.} For the Ornstein-Uhlenbeck a similar statement holds with $\Ber(p)$ replaced by the
$\cN(0,\sigma^2 I_n)$.

This nice extremal property of product distributions is no longer true for $k\ge 3$ Potts semigroups, because $s_\lm$ is not concave, and
the value of $\hat s_\lm$ at a point may be a mixture of two values of $s_\lm$. More precisely, instead of \eqref{EqnMGLk2}, we have
for every $\lm \in [-\frac 1{k-1}, 1]$ and every $c \in \bR_{\ge 0}$, there exist two iid distributions
$P_1, P_2$ on $[k]^n$ and $t\in [0, 1]$ satisfying
\begin{align*}
	(1-t) D(P_1 ||\pi^{\ot n}) + t D(P_2 ||\pi^{\ot n})=c,
\end{align*}
such that for every distribution $P_{X^n}$ on $[k]^n$ with $D(P_{X^n} || \pi^{\ot n}) = c$,
we have
\begin{align*}
	D(\PC_\lm \circ P_{X^n}||\pi^{\ot n}) \le (1-t) D(\PC_\lm \circ P_1 ||\pi^{\ot n})
	+ t D(\PC_\lm \circ P_2 ||\pi^{\ot n}).
\end{align*}
Note here $P_1$, $P_2$ and $t$ all depend on $c$ and $\lm$, and thus there is no universal distribution that is the
slowest to converge to equillibrium.

Let us discuss some general implications of non-convexity of $b_p$ and non-concavity of $s_\lm$ near $0$.

Let $K$ be a Markov kernel with stationary distribution $\pi$.
Consider the tightest possible $p$-NLSI given by
$$b_p(x) := \inf_{\substack{f: \cX \to \bR_{\ge 0},\\ \bE_\pi f=1, \Ent_\pi(f)=x}} \cE(f^{\frac 1p}, f^{1-\frac 1p}).$$
The $p$-log-Sobolev constant is
$$\al_p := \inf_{x>0} \frac{b_p(x)}x = \inf_{f: \cX \to \bR_{\ge 0}, \Ent_\pi(f)>0} \frac{\cE(f^{\frac 1p}, f^{1-\frac 1p})}{\Ent_\pi(f)}.$$
We also define the spectral gap
$$\lm := \inf_{f: \cX \to \bR_{\ge 0}, \Var(f)>0} \frac{\cE(f, f)}{\Var(f)},$$
where $\Var(f) = \bE_\pi (f-\bE_\pi f)^2$.
For any $p>1$, we have
\begin{align}\label{EqnpLSCvsSpec}
	\frac{p^2}{2(p-1)} \al_p \le \lm.
\end{align}
The case $p=2$ is proved in Diaconis and Saloff-Coste \cite{DSC96},
and  the general case is proved in Mossel et al.~\cite{MOS13}.
Their proof in fact implies a stronger inequality.
\begin{lemma}
	$$\limsup_{x\to 0^+}\frac{b_p(x)}x \le \frac{2(p-1)}{p^2}\lm.$$
	In particular, when $b_p$ is strictly concave near $0$, we have
	$$\al_p < \limsup_{x\to 0^+}\frac{b_p(x)}x \le \frac{2(p-1)}{p^2}\lm,$$
	and \eqref{EqnpLSCvsSpec} is strict.
\end{lemma}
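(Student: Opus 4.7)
The plan is to adapt the classical second-order perturbation argument of Diaconis--Saloff-Coste and Mossel--Oleszkiewicz--Sen. Since $b_p(0)=0$ (the constraint $\Ent_\pi(f)=0$ forces $f$ to be constant), it suffices to probe the ratio $b_p(x)/x$ along the family $f_\epsilon = 1 + \epsilon g$, where $g$ is an arbitrary non-constant function with $\bE_\pi g = 0$, and to let $\epsilon \to 0^+$.

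First, I would Taylor-expand both quantities to order $\epsilon^2$. From $(1+\epsilon g)\log(1+\epsilon g) = \epsilon g + \tfrac12 \epsilon^2 g^2 + O(\epsilon^3)$ together with $\bE_\pi g = 0$, one gets
$$\Ent_\pi(f_\epsilon) = \tfrac{\epsilon^2}{2}\Var(g) + O(\epsilon^3).$$
For the Dirichlet form, write $f_\epsilon^{1/p} = 1 + \tfrac{\epsilon g}{p} - \tfrac{(p-1)\epsilon^2 g^2}{2p^2} + O(\epsilon^3)$ and similarly for $f_\epsilon^{1-1/p}$; the quadratic-in-$g$ coefficients happen to coincide, both equal to $-\tfrac{p-1}{2p^2}$. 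Bilinearity of $\cE$ combined with $\cE(\cdot, 1) = \cE(1, \cdot) = 0$ eliminates every $\epsilon^2$ contribution except the product of the two first-order terms, yielding
$$\cE(f_\epsilon^{1/p}, f_\epsilon^{1-1/p}) = \tfrac{p-1}{p^2}\,\epsilon^2\, \cE(g,g) + O(\epsilon^3).$$

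Dividing the two expansions and sending $\epsilon \to 0^+$ gives
$$\frac{\cE(f_\epsilon^{1/p}, f_\epsilon^{1-1/p})}{\Ent_\pi(f_\epsilon)} \longrightarrow \frac{2(p-1)}{p^2}\cdot\frac{\cE(g,g)}{\Var(g)}.$$
Because $b_p(\Ent_\pi(f_\epsilon)) \le \cE(f_\epsilon^{1/p}, f_\epsilon^{1-1/p})$ by the definition of $b_p$ as an infimum, and $\Ent_\pi(f_\epsilon) \to 0^+$, this shows that $\limsup_{x \to 0^+} b_p(x)/x$ is bounded above by the right-hand side for every admissible $g$. Taking $g$ to be an eigenfunction of the generator attaining the spectral gap, so that $\cE(g,g)/\Var(g) = \lm$, produces the desired bound $\tfrac{2(p-1)}{p^2}\lm$.

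For the strict-inequality claim, I would invoke the elementary fact that any concave function $h$ with $h(0) = 0$ has non-increasing secant slope $h(x)/x$, and strictly decreasing secant slope wherever $h$ is strictly concave. Applied to $b_p$ near the origin, this gives $\al_p = \inf_{x > 0} b_p(x)/x \le b_p(x_0)/x_0 < \lim_{x \to 0^+} b_p(x)/x$ for any $x_0$ in the strictly concave region, which together with the preceding bound yields $\al_p < \tfrac{2(p-1)}{p^2}\lm$. The whole argument is essentially bookkeeping of Taylor coefficients; the one slightly subtle point is the cancellation noted above, which is what forces the leading-order ratio to equal $2(p-1)/p^2$ rather than some larger quantity.
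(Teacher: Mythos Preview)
Your proof is correct and follows essentially the same route as the paper: perturb $f_\epsilon = 1 + \epsilon g$, expand both $\Ent_\pi(f_\epsilon)$ and $\cE(f_\epsilon^{1/p}, f_\epsilon^{1-1/p})$ to order $\epsilon^2$, and read off the ratio. Your version is in fact more detailed than the paper's---you spell out the cancellation in the Dirichlet-form expansion and give an explicit secant-slope argument for the strict-inequality clause, whereas the paper simply asserts the two asymptotics and leaves the ``in particular'' as evident. One cosmetic difference: the paper concludes by taking the infimum over all $g$ (so no eigenfunction is invoked), while you select a spectral-gap eigenfunction directly; on the finite state space assumed here both are fine.
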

\begin{proof}
	Take any $g: \cX\to \bR_{\ge 0}$ with $\Var(g)>0$. Define $f_\ep = 1+\ep g$.
	As $\ep\to 0$, we have
	\begin{align*}
		\cE(f_\ep^{\frac 1p}, f_\ep^{1-\frac 1p}) &= \ep^2 \frac 1p (1-\frac 1p) \cE(g, g) + o(\ep^2),\\
		\Ent_\pi(f_\ep) &= \frac 12 \ep^2 \Var(g) + o(\ep^2).
	\end{align*}
	Because $\Ent_\pi(f_\ep)\to 0$ continuously as $\ep\to 0$, we have
	\begin{align*}
		\limsup_{x\to 0^+}\frac{b_p(x)}x
		\le \lim_{\ep\to 0} \frac{\cE(f_\ep^{\frac 1p}, f_\ep^{1-\frac 1p})}{\Ent_\pi(f_\ep)}
		= \frac{2(p-1)}{p^2}\frac{\cE(g, g)}{\Var(g)}.
	\end{align*}
	Lemma then follows because $g$ is arbitrary.
\end{proof}
Roughly speaking, existence of a ``linear piece'' near $0$ in $\check b_p$ implies that \eqref{EqnpLSCvsSpec} is strict.
For the Potts semigroup with $k\ge 3$, $b_p$ is strictly concave near $0$ by proof of Proposition \ref{PropNonConv}.
So \eqref{EqnpLSCvsSpec} is strict for the Potts semigroup.

The story for non-linear SDPI is very similar.
Let $W$ be any channel and $Q$ be any input distribution.
Consider the tightest possible non-linear SDPI given by
\begin{align*}
	s(x) := \sup_{P : D(P||Q) = x} D(PW|| QW).
\end{align*}
The input-restricted KL divergence contraction coefficient is
\begin{align*}
	\eta_{\KL}(W, Q) := \sup_{x>0} \frac{s(x)}x = \sup_{P: 0<D(P||Q)<\infty} \frac{D(PW||QW)}{D(P||Q)}.
\end{align*}
We also consider the input-restricted $\chi^2$-divergence contraction coefficient
\begin{align*}
	\eta_{\chi^2}(W, Q) := \sup_{P: 0<\chi^2(P||Q)<\infty} \frac{\chi^2(PW||QW)}{\chi^2(P||Q)}.
\end{align*}
It is known (Ahlswede and G\'{a}cs \cite{AG76}) that
\begin{align}\label{EqnetaKLvsetachi2}
	\eta_{\KL}(W, Q) \ge \eta_{\chi^2}(W, Q).
\end{align}
Similarly to the $p$-NLSI case, the proof of \eqref{EqnetaKLvsetachi2} implies a stronger inequality.
\begin{lemma}
	$$\liminf_{x\to 0^+} \frac{s(x)}x \ge \eta_{\chi^2}(W, Q).$$
	In particular, when $s$ is strictly convex near $0$,
	we have $$\eta_{\KL}(W, Q) > \liminf_{x\to 0^+} \frac{s(x)}x \ge \eta_{\chi^2}(W, Q).$$
	and \eqref{EqnetaKLvsetachi2} is strict.
\end{lemma}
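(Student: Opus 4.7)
The plan is to follow the standard linearization (Ahlswede--G\'acs type) argument, showing that the infinitesimal behavior of $s$ near $0$ is controlled by $\chi^2$-contraction because both $D$ and $\chi^2$ agree to leading order on infinitesimal perturbations. Fix an arbitrary $g: \cX\to \bR$ with $\bE_Q g=0$ and $\bE_Q g^2>0$, and consider the one-parameter family $P_\ep:=Q(1+\ep g)$ for $\ep>0$ small enough that $P_\ep$ is a probability distribution. A direct Taylor expansion of $u\log u$ around $u=1$ yields
\begin{align*}
D(P_\ep\|Q)&=\tfrac{\ep^2}{2}\bE_Q[g^2]+O(\ep^3), & \chi^2(P_\ep\|Q)&=\ep^2\bE_Q[g^2].
\end{align*}
Next I would observe that $P_\ep W=(QW)(1+\ep h)$, where $h(y):=(QW)(y)^{-1}\sum_x Q(x)g(x)W(x,y)$ satisfies $\bE_{QW}h=0$. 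Hence the identical expansions hold for $PW$ with $g$ replaced by $h$.

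Using $s(D(P_\ep\|Q))\ge D(P_\ep W\|QW)$ from the definition of $s$ and dividing by $D(P_\ep\|Q)$,
\begin{align*}
\liminf_{x\to 0^+}\frac{s(x)}{x}\ \ge\ \liminf_{\ep\to 0^+}\frac{s(D(P_\ep\|Q))}{D(P_\ep\|Q)}\ \ge\ \lim_{\ep\to 0^+}\frac{D(P_\ep W\|QW)}{D(P_\ep\|Q)}=\frac{\bE_{QW}[h^2]}{\bE_Q[g^2]}.
\end{align*}
Since every $P$ with $\chi^2(P\|Q)<\infty$ is of the form $Q(1+g)$ for some mean-zero $g$, and since the ratio on the right equals $\chi^2(P_\ep W\|QW)/\chi^2(P_\ep\|Q)$ for every $\ep$, taking the supremum over $g$ exactly produces $\eta_{\chi^2}(W,Q)$. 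This gives the first inequality. For the ``in particular'' clause, note that $s(0)=0$ (attained by $P=Q$); strict convexity of $s$ near $0$ then forces $x\mapsto s(x)/x$ to be strictly increasing on the region of strict convexity, via the Jensen-type comparison $s(x)=s(\tfrac{x}{y}\cdot y+(1-\tfrac{x}{y})\cdot 0)<\tfrac{x}{y}s(y)$ for $0<x<y$ in this region. Hence $\liminf_{x\to 0^+}s(x)/x=\lim_{x\to 0^+}s(x)/x<s(y)/y\le \eta_{\KL}(W,Q)$.

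The only technically delicate points are (i) justifying the $O(\ep^3)$ remainders uniformly over the finite alphabet and over $g$ of fixed norm, which is routine since $\cX$ is finite and $Q$ has positive mass on its support, and (ii) ensuring the supremum in the definition of $\eta_{\chi^2}(W,Q)$ is indeed approximated by perturbations of the form above, which is immediate from the parametrization $P=Q(1+g)$. The main obstacle, such as it is, is keeping the two limits straight: $\ep\to 0$ drives $D(P_\ep\|Q)\to 0$, and one must avoid conflating $\limsup$ and $\liminf$ when passing from the $\ep$-indexed sequence to the full limit inferior over $x\to 0^+$; the directional inequality $\liminf_{x\to 0^+}\ge \liminf_{\ep\to 0^+}$ used above is the correct one and saves the argument.
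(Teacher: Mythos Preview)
Your approach is essentially the same as the paper's: both construct the perturbation $P_\ep=Q(1+\ep g)$, Taylor-expand $D$ to recover the $\chi^2$ ratio, and then take the supremum over $g$ (equivalently over $P$). The ``in particular'' clause is handled correctly.

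There is, however, a genuine error in your justification of the first inequality in your displayed chain. You assert that
\[
\liminf_{x\to 0^+}\frac{s(x)}{x}\ \ge\ \liminf_{\ep\to 0^+}\frac{s(D(P_\ep\|Q))}{D(P_\ep\|Q)}
\]
holds as a general ``directional inequality'', but this is backwards: for any function $f$ and any family $x_\ep\to 0^+$, the general fact is $\liminf_{x\to 0^+} f(x)\le \liminf_{\ep\to 0^+} f(x_\ep)$, since restricting to a subfamily can only raise a $\liminf$. What actually saves the step --- and what the paper invokes explicitly --- is that $\ep\mapsto D(P_\ep\|Q)$ is continuous, vanishes at $\ep=0$, and is positive for small $\ep>0$, so by the intermediate value theorem its range covers a full interval $(0,\de)$. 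Thus every small $x$ is of the form $D(P_\ep\|Q)$ for some small $\ep$, and then $s(x)/x\ge D(P_\ep W\|QW)/D(P_\ep\|Q)$ directly from the definition of $s$; sending $x\to 0^+$ forces $\ep\to 0^+$ and yields the bound. Once you replace the incorrect general principle with this surjectivity argument, your proof is complete and matches the paper's.
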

\begin{proof}
	Fix any distribution $P$ with $0 < \chi^2(P||Q) < \infty$.
	Proof of \cite[Theorem 2]{PW17} constructs a sequence of distributions
	$P_\ep$ satisfying
	\begin{align*}
		D(P_\ep || Q) &= \ep^2 \chi^2 (P||Q)+ o(\ep^2),\\
		D(P_\ep W || Q W) &= \ep^2 \chi^2 (P W||Q W)+ o(\ep^2),
	\end{align*}
	and $D(P_\ep || Q)\to 0$ continuously as $\ep \to 0$.
	Therefore
	\begin{align*}
		\liminf_{x\to 0^+} \frac{s(x)}x
		\ge \lim_{\ep \to 0} \frac{D(P_\ep W || Q W)}{D(P_\ep || Q)}
		= \frac{\chi^2(PW||QW)}{\chi^2(P||Q)}.
	\end{align*}
	Lemma follows because $P$ is arbitrary.
\end{proof}
Roughly speaking, existence of a ``linear piece'' near $0$ in $\hat s$ implies that \eqref{EqnetaKLvsetachi2} is strict.
For Potts channels $\PC_\lm$ with $\lm \in [-\frac 1{k-1}, 0) \cup (0, 1)$ and $k\ge 3$,
$s_\lm$ is strictly convex near $0$ by proof of Proposition \ref{PropNonConv}.
So \eqref{EqnetaKLvsetachi2} is strict for Potts channels.


\subsection{Edge isoperimetric inequalities}\label{sec:edgeisop}
As a toy application of the NLSIs for the product spaces, we derive an edge isoperimetric
inequality for $K_k^n$, the graph whose vertex set is $[k]^n$, and edges connect vertex pairs with Hamming distance one.
Given a graph $G=(V, E)$, edge isoperimetric inequalities solve the following combinatorial optimization problem:
$$ \Psi_G(N) = \min \{|E(S,S^c)|: |S| = N\}\,,$$
where $|E(S, S^c)| = \#\{e \in E: |e\cap S| = 1\}$.
For $K_k^n$, the edge isoperimetric problem has been completely solved~\cite{Har64,Lin64,Ber67,Har76}.
Specifically, Lindsey \cite{Lin64} showed that the optimal $S$ minimizing $|E(S,S^c)|$ for a fixed $|S|$ consists of
largest elements in $[k]^n$ under a lexicographical order. In particular, we have
$$ \Psi_{K_k^n}(k^m) = (n-m) (k-1) k^{m}\,. $$
This was obtained by an explicit combinatorial argument (via a form of shifting/compression).
What estimates obtained via LSIs and NLSIs?

Let $f = \bbl_S$ be the indicator function of a set $S$. Then for any $p>1$ we have
$$\frac{\cE(f^p, f^{1-p})}{\bE_\pi[f]} = \frac 1{k-1} \frac{|E(S, S^c)|}{|S|} \qquad \text{and} \qquad
\frac{\Ent(f)}{\bE_\pi[f]} = \log \frac{k^n}{|S|}\,.$$

If we relate these two ratios via the $2$-LSI (note that from \eqref{EqnpLSIComp}, of all $p>1$ the $p=2$ gives the
best result here) and by using the known value of $\alpha_2$ from~\eqref{eq:2lsi_potts} we get
\begin{equation}\label{eq:2lsi_isop}
	\Psi_{K_k^n}(k^m) \ge k^m (n-m) (k-2) \frac{\log k}{\log(k-1)}\,.
\end{equation}
Clearly the coefficient in front of $(n-m)k^m$ here is not tight.

The $p$-NLSI allows us to perform a better comparison. First, again via \eqref{EqnpLSIComp} we get the best
inequality for $p=2$, which results in
\begin{equation}
	\Psi_{K_k^n}(k^m)\ge (k-1) k^m n \check b_2(\frac{n-m}{n} \log k).\label{EqnLSIIsop}
\end{equation}
We know that the function $\check b_2$ is continuous with $\check b_2(\log k) = b_2(\log k) = 1$
(from~\eqref{eq:bp_def}). Thus, for any $m=o(n)$ and $n\to\infty$ we get that~\eqref{EqnLSIIsop} implies
$$ \Psi_{K_k^n}(k^m) \ge (k-1) k^m (n-m) (1+o(1))\,, $$
which is tight in this regime. (However, from~\eqref{eq:bp_def} we can also find that $\check b_2'(1)=\infty$ and thus,
even when $m = o(n)$ the right-hand side of the above inequality is $(k-1)k^m (n-\omega(m))$, implying the behavior in
terms of $m$ is not optimal.)

\section{Non-reconstruction for broadcast models on trees}\label{SecTreeGeneral}
In this section we prove non-reconstruction results for a general class of broadcast models on trees, using input-restricted KL divergence SDPI.

Fix a channel $M : [k] \to [k]$ with an invariant distribution $q^*$, i.e., $q^* M = q^*$.
Let $M^\vee$ denote the reverse channel, i.e., any channel that satisfies $q^*_j M^\vee_{j,i} = q^*_i M_{i,j}$ for all $i,j\in [k]$.
Consider a (possibly infinite) tree $T$ with a marked root $\rho$. For each vertex $v\in T$, we generate a spin $\sm_v \in [k]$ according to the following rules:
\begin{enumerate}
	\item $\bP(\sm_\rho = i) = q^*_i$.
	\item If $u$ is the parent of $v$, then $\bP(\sm_v = j | \sm_u = i) = M_{i,j}$.
\end{enumerate}
Let $L_h$ denote the set of vertices of distance $h$ to $\rho$. We say the model has non-reconstruction if and only if
$$\lim_{h\to\infty} I(\sm_\rho; \sm_{L_h}) = 0.$$
For any vertex $u$, let $c(u)$ denote the set of children of $u$.

Finally, recall the definition of the branching number $\br(T)$ of a tree $T$ by Lyons \cite{Lyo90}.
\begin{defn}[Branching number]\label{DefnBranch}
	Define a flow to be a function $f: V(T) \to \bR_{\ge 0}$ such that
	for every vertex $u$, we have $$f_u = \sum_{v\in c(u)} f_v.$$
	Define $\br(T)$ to be the $\sup$ of all numbers $\lm$ such that there exists a flow $f$ with
	$f_\rho > 0$, and $f_u \le \lm^{-d(u,\rho)}$ for all vertices $u$, where $d(u, \rho)$ is the distance between $u$ and $\rho$.
\end{defn}

Recall Theorem \ref{ThmTreeGeneral} states that the model has non-reconstruction when $$\eta_{\KL}(M^\vee, q^*) \br(T) < 1.$$
Now we prove the theorem.

\begin{proof}[Proof of Theorem \ref{ThmTreeGeneral}]
	For any vertex $u$, define $L_{u,h}$ to be the set of descendants of $u$ that have distance $h$ to $\rho$.
	Define $$a_u = H(q^*)^{-1}\eta_{\KL}(M^\vee, q^*)^{d(u,\rho)}\lim_{h\to \infty} I(\sm_u; \sm_{L_{u,h}}).$$
	By data processing inequality, $I(\sm_u; \sm_{L_{u,h}})$ is non-increasing for $h\ge d(u, \rho)$, so the limit exists.

	For any $v\in c(u)$, consider the reverse Markov chain
	$$\sm_{L_{v,h}} \to \sm_v \to \sm_u.$$
	Because $q^*$ is an invariant distribution, the distributions of $\sm_v$ and $\sm_u$ are both $q^*$.
	By SDPI, we have
	\begin{align*}
		I(\sm_u; \sm_{L_{v,h}}) \le \eta_{\KL}(M^\vee, q^*) I(\sm_v; \sm_{L_{v,h}}).
	\end{align*}
	Because $(\sm_{L_{v,h}})_{v\in c(u)}$ are independent conditioned on $\sm_u$, we have
	\begin{align*}
		I(\sm_u; \sm_{L_{u,h}}) \le \sum_{v\in c(u)} I(\sm_u; \sm_{L_{v,h}}).
	\end{align*}
	Combine the two inequalities and let $h\to \infty$. We get that
	$$a_u \le \sum_{v\in c(u)} a_v.$$

	Clearly, $$a_u \le \eta_{\KL}(M^\vee, q^*)^{d(u,\rho)}$$ for all vertices $u$. However, $a$ is not quite a flow yet.
	We define a flow $b$ from $a$.
	For a vertex $u$, let $u_0 = \rho, \ldots, u_d = u$ be the shortest path from $\rho$ to $u$.
	Define $$b_u = a_u \prod_{0\le j\le d-1} \frac{a_{u_j}} {\sum_{v\in c(u_j)} a_v}.$$
	(If for some $j$, we have $\sum_{v\in c(u_j)} a_v=0$, then let $b_u = 0$.)
	It is not hard to check that
	$$b_u = \sum_{v \in c(u)} b_v,$$ and that $$b_u \le a_u \le \eta_{\KL}(M^\vee, q^*)^{d(u,\rho)}.$$
	By definition of branching number, we must have $b_\rho = 0$.
	This means $$\lim_{h\to \infty} I(\sm_\rho; \sm_{L_h}) = 0,$$
	and non-reconstruction holds.
\end{proof}
\begin{rmk}
	In the definition of the model, it is not necessary to require $\sm_\rho$ to have distribution $q^*$.
	If we let $\sm^i_{L_h}$ denote the leaf colors conditioned on $\sm_\rho=i$, then Theorem \ref{ThmPotts} implies that
	when $\eta_{\KL}(M^\vee, q^*) \br(T) < 1$, we have
	$$\lim_{h\to\infty} \TV(\sm^i_{L_h}, \sm^j_{L_h}) = 0$$
	for $i\ne j$ with $q^*_i, q^*_j > 0$.
\end{rmk}
Theorem \ref{ThmTreeGeneral} directly implies non-reconstruction results for Galton-Watson trees.
\begin{coro}
	Let $T$ be a Galton-Watson tree with expected offspring $d$.
	If $$\eta_{\KL}(M^\vee, q^*)d < 1,$$ the model has non-reconstruction a.s.
\end{coro}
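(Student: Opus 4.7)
The plan is to deduce this corollary directly from Theorem \ref{ThmTreeGeneral} by controlling $\br(T)$ for a Galton--Watson tree. Since the hypothesis $\eta_{\KL}(M^\vee, q^*)\, d < 1$ is given, it suffices to show $\br(T) \le d$ almost surely: then $\eta_{\KL}(M^\vee, q^*)\,\br(T) < 1$ almost surely, and Theorem \ref{ThmTreeGeneral} applied pathwise yields $\lim_{h\to\infty} I(\sm_\rho; \sm_{L_h}) = 0$ on a full-measure event.

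First I would dispatch the extinction event. On this event $T$ is finite, so $L_h = \es$ for all sufficiently large $h$ and non-reconstruction holds trivially. From Definition \ref{DefnBranch} one also sees that $\br(T) = -\infty$ on this event (supremum over the empty set): by upward induction from the leaves, the flow condition $f_u = \sum_{v\in c(u)} f_v$ forces every flow on a finite rooted tree to vanish identically, so no admissible flow with $f_\rho>0$ exists. Either way the hypothesis of Theorem \ref{ThmTreeGeneral} is satisfied on the extinction event.

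On the survival event --- non-trivial only in the supercritical regime $d>1$ --- I would invoke Lyons' theorem \cite{Lyo90}, which states that a Galton--Watson tree with mean offspring $d$ satisfies $\br(T)=d$ almost surely conditional on non-extinction. Combining both events gives $\br(T)\le d$ almost surely, so the hypothesis of Theorem \ref{ThmTreeGeneral} holds almost surely, and non-reconstruction follows.

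There is essentially no technical obstacle: the argument packages Theorem \ref{ThmTreeGeneral} with Lyons' classical branching-number identity for Galton--Watson trees. The only nuance worth noting is that the hypothesis $\eta_{\KL}(M^\vee, q^*)\br(T)<1$ is a purely deterministic property of the realized tree $T$, which is precisely what allows almost sure control of $\br(T)$ to translate directly into almost sure non-reconstruction.
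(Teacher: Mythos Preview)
Your proposal is correct and follows essentially the same approach as the paper: handle extinction trivially (finite tree, so $L_h$ is eventually empty), and on the survival event invoke Lyons' result \cite{Lyo90} that $\br(T)=d$ a.s., then apply Theorem \ref{ThmTreeGeneral}. You have simply fleshed out the details a bit more than the paper's two-line proof.
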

\begin{proof}
	If $T$ extincts, then non-reconstruction obviously hold.
	Conditioned on non-extinction, we have $\br(T) = d$ a.s.~by \cite{Lyo90}, and then Theorem \ref{ThmTreeGeneral} applies.
\end{proof}
\begin{rmk}\label{RmkComparisonGeneral}
	As shown in Polyanskiy and Wu~\cite{PW17} reconstruction problems on arbitrary directed acyclic graphs (in
	particular trees) can be shown to be non-reconstructible by reducing to a percolation problem on the same graph.
	For example for trees, this results in non-reconstruction as long as
	\begin{equation}\label{eq:tree_nonrec}
			\eta_{\KL}(M) \br(T) < 1.
	\end{equation}
	For any channel $M$, we have $$\eta_{\KL}(M,q^*) \le  \eta_{\KL}(M),$$ and the inequality is often strict. So
	for reversible channels (i.e., $M = M^\vee$), Theorem \ref{ThmTreeGeneral} implies result~\eqref{eq:tree_nonrec}. We do not know, however, how to extend Theorem~\ref{ThmTreeGeneral} to  general
	(non-tree) DAGs using input-restricted contraction coefficients.

	Formentin and K{\"u}lske \cite{FK09a} proved a non-reconstruction result very similar to ours.
	They considered the symmetrized KL divergence $$D_{\SKL}(P || Q) = D(P||Q) + D(Q||P),$$ which is $f$-divergence with $f(x) = (x-1)\log x$.
	They proved non-reconstruction holds for a Galton-Watson tree with expected offspring $d$ if
	$$\eta_{\SKL} (M^\vee, q^*) d < 1.$$
	With the method of proof as in Theorem~\ref{ThmTreeGeneral} one can strengthen their result so that non-reconstruction holds for $$\eta_{\SKL}(M^\vee, q^*) \br(T) < 1.$$

	Proceeding to input-restricted contraction coefficients, we computed both numerically for several binary asymmetric channels and Potts channels.
	In Figure \ref{FigBinAsm_SKL_vs_KL}, we compare $\eta_{\SKL}(M, q^*)$ and $\eta_{\KL}(M, q^*)$ for the binary asymmetric channel $$M = \begin{pmatrix} 1-a & a \\ b & 1-b\end{pmatrix},$$ for $a=0.3$ and $b\in [0, 1]$. Simple computation shows that $q^* = (\frac b{a+b}, \frac a{a+b})$ and $M = M^\vee$.
	\begin{figure}[ht]
		\centering
		\includegraphics[scale=0.4]{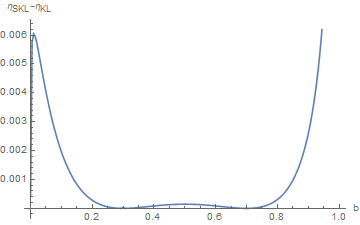}
		\caption{Contraction coefficient comparison for binary asymmetric channels with $a=0.3$ and varying $b \in [0, 1]$.\\
		The figure shows $\eta_{\SKL}(M, q^*) - \eta_{\KL}(M, q^*)$ is always non-negative.}
		\label{FigBinAsm_SKL_vs_KL}
	\end{figure}

	In Figure \ref{FigPotts_SKL_vs_KL}, we compare the input-restricted SKL and KL contraction coefficients for Potts channels $\PC_\lm$ for $k=5$ and $\lm \in [-\frac 1{k-1}, 1]$.
	Because a simplified expression for $\eta_{\SKL}(\PC_\lm, q^*)$ is not known, we use a lower bound $\ol \eta_{\SKL}(\PC_\lm, q^*)$, which is defined as the sup of $\frac{D_{\SKL}(\PC_\lm \circ P || q^*)}{D_{\SKL}(P||q^*)}$ considering only distributions $P = (p_1, \ldots, p_k)$ with $p_2 = \cdots = p_k$.
	Clearly $\ol \eta_{\SKL}(\PC_\lm, q^*) \le \eta_{\SKL}(\PC_\lm, q^*)$. It is conjectured in Formentin and K{\"u}lske \cite{FK09b} that $\ol \eta_{\SKL}(\PC_\lm, q^*) = \eta_{\SKL}(\PC_\lm, q^*)$ always.
	\begin{figure}[ht]
		\centering
		\includegraphics[scale=0.4]{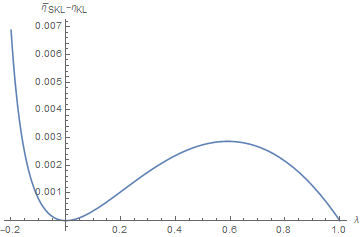}
		\caption{Contraction coefficient comparison for Potts channel with $k=5$ and varying $\lm \in [-\frac 1{k-1}, 1]$.\\
		The figure shows $\ol \eta_{\SKL}(\PC_\lm, q^*)-\eta_{\KL}(\PC_\lm, q^*)$ is always non-negative.}
		\label{FigPotts_SKL_vs_KL}
  \end{figure}

	As shown in Figure \ref{FigBinAsm_SKL_vs_KL} and Figure \ref{FigPotts_SKL_vs_KL}, in all these cases, we observe $$\eta_{\KL}(M^\vee, q^*) \le \eta_{\SKL}(M^\vee, q^*),$$ which means Theorem \ref{ThmTreeGeneral} yields a stronger non-reconstruction result for these cases.

	We remark that the input-unrestricted $\KL$ and $\SKL$ contraction coefficients agree.
	Indeed, the function $x \mapsto (x-1)\log x$ is operator convex (e.g.,~\cite[Example
	3.6]{chansangiam2013operator}), and thus by~\cite[Theorem 1]{CRS94}, we have
	\begin{align*}
		\eta_{\KL}(M) = \eta_{\SKL}(M)
	\end{align*}
	for any channel $M$.

	Suppose that for some function $g$, the $g$-mutual information satisfies the following subadditivity property: for any Markov chain $Y-X-Z$, we have
	$$I_g(X; Y, Z) \le I_g(X; Y) + I_g(X; Z).$$
	Then non-reconstruction holds for a tree $T$ with $$\eta_g(M^\vee, q^*) \br(T) < 1,$$
	by the proof of Theorem \ref{ThmTreeGeneral}.
	For mutual information the subadditivity is standard. For $I_{\SKL}$, Formentin-K{\"u}lske \cite{FK09b} proved that
	$$I_{\SKL}(X; Y, Z) = I_{\SKL}(X; Y) + I_{\SKL}(X; Z).$$
	It is an interesting question what is the best possible contraction coefficient one can achieve by varying $g$.
\end{rmk}

In Appendix~\ref{SecBOTGauss}, we study a broadcasting on trees model with Gaussian kernel considered in Eldan et al.~\cite{Eld20}, and prove tight non-reconstruction results for this model, closing a gap left in op.~cit.

\section{Potts model on a tree} \label{SecPotts}

In this section, we apply Theorem \ref{ThmTreeGeneral} to get non-reconstruction results for Potts models on a tree.
In the Potts model, spins propagate through the Potts channel $\PC_\lm$.
Because Potts channels are reversible, $\PC_\lm^\vee = \PC_\lm$. Because Potts channels are symmetric, the invariant distribution is $\pi = \Unif([k])$.
Thus Theorem \ref{ThmPotts} directly follows from Theorem \ref{ThmTreeGeneral}.

Let us briefly discuss previous non-reconstruction results for the Potts channel.
Mossel and Peres \cite{MP03} proved non-reconstruction for $$\frac{k\lm^2}{(k-2)\lm + 2} \br(T) < 1.$$
By Proposition \ref{PropConPotts} we can see, thus, that this exactly corresponds to invoking a weaker
(unrestricted-input) version of $\eta_{\KL}$.
Therefore, Theorem \ref{ThmPotts} is strictly stronger than \cite{MP03} (see also discussion in Remark
\ref{RmkComparisonGeneral}).
Martinelli et al.~\cite{MSW07} proved non-reconstruction for regular trees for $$d (1-\ep) \frac{k\lm^2}{(k-2)\lm + 2} < 1$$
where $\ep>0$ is a function of $k\ge 3$, $d$, $\lm$ in some involved way.
Sly \cite{Sly09a} obtained very sharp results for regular trees. In particular, he proved that Kesten-Stigum bound is tight for $k=3$ and large enough $d$. It looks hard to extract what general bounds one can achieve by Sly's method.
Formentin and K{\"u}lske \cite{FK09b} gave non-reconstruction results very similar to ours. As discussed in Remark \ref{RmkComparisonGeneral}, numerical computation suggests that Theorem \ref{ThmPotts} is stronger than their results.

For certain parameters, we can compute the contraction coefficient $\eta_{\KL}(\PC_\lm, \pi)$ in closed form. In the following we show two examples.
\subsection{Binary symmetric channel}
For $k=2$, $\PC_\lm$ is the binary symmetric channel $\BSC_\de$ with $\de = \frac{1-\lm}2$, which is known (Ahlswede and G\'{a}cs \cite{AG76}) to have SDPI coefficient $\eta_{\KL}(\BSC_\de) = (1-2\de)^2 = \lm^2$.
Theorem \ref{ThmPotts} implies non-reconstruction for $(1-2\de)^2 \br(T) < 1$, which was shown
in Bleher et al.~\cite{BRZ95} (for regular trees) and Evans et al.~\cite{EKPS00} (for general trees).

\subsection{Random coloring}
The random coloring model with $k$-colors corresponds to the channel $\Col_k :=
\PC_{-\frac{1}{k-1}}$. This channel acts on input $x\in[k]$ by outputting $y\neq x$ uniformly among all $k-1$
alternatives.

\begin{prop}\label{PropColorSDPI}
	$$\eta_{\KL}(\Col_k, \pi) = \frac{\log k - \log (k-1)}{\log k}.$$
\end{prop}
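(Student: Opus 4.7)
The plan is to specialize Theorem \ref{ThmSDPIPotts} to $\lm=-\frac{1}{k-1}$. A direct computation gives
$\lm x+\frac{1-\lm}{k}=\frac{1-x}{k-1}$, so the closed-form \eqref{eq:coro_sdpi_potts} reduces the claim to
\[
\eta_{\KL}(\Col_k,\pi)=\sup_{x\in(\frac1k,1]}\frac{\psi(\frac{1-x}{k-1})}{\psi(x)}.
\]
The lower bound is immediate by plugging in $x=1$: the numerator is $\psi(0)=\log k-\log(k-1)$ and the denominator is $\psi(1)=\log k$.

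For the matching upper bound, I would define
\[
F(x):=\log k\cdot\psi\!\left(\tfrac{1-x}{k-1}\right)-(\log k-\log(k-1))\,\psi(x)
\]
on $[\frac1k,1]$ and show $F\le 0$ with equality only at the endpoints. Direct substitution using $\psi(\frac1k)=0$, $\psi(0)=\log k-\log(k-1)$, $\psi(1)=\log k$, and $\psi'(\frac1k)=0$ yields the boundary data $F(\frac1k)=F(1)=0$ and $F'(\frac1k)=0$.

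The core step is the second-derivative calculation. Using $\psi''(t)=\frac1t+\frac1{1-t}$ together with $y=\frac{1-x}{k-1}$ (so $1/y=\frac{k-1}{1-x}$ and $1/(1-y)=\frac{k-1}{k-2+x}$), one finds $\psi''(y)=\frac{(k-1)^2}{(1-x)(k-2+x)}$, and after simplification
\[
F''(x)=\frac{x\log(k-1)-(k-2)\log(k/(k-1))}{x(1-x)(k-2+x)}.
\]
The denominator is positive on $(\frac1k,1)$, so $F''$ has a unique sign change at
\[
x^*=\frac{(k-2)\log(k/(k-1))}{\log(k-1)},
\]
being negative for $x<x^*$ and positive for $x>x^*$. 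A quick check shows $x^*\in(\frac1k,1)$ for $k\ge 3$ (and $x^*=0$ when $k=2$, where $F$ is outright convex). The shape analysis now concludes: starting from $F(\frac1k)=F'(\frac1k)=0$ with $F''<0$, the function enters the interior below zero and keeps decreasing through the concave region $(\frac1k,x^*)$; it then stays negative throughout the convex region $(x^*,1)$, rising monotonically near $x=1$ back up to $F(1)=0$ (the fact that $F'(1)=-\infty$ implied by the leading $\log\frac{1}{1-x}$ behaviour forces $F<0$ just below $1$). Therefore $F\le 0$ on $[\frac1k,1]$, and equality holds only at the endpoints, so the supremum equals $\frac{\log k-\log(k-1)}{\log k}$, attained at $x=1$.

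The main obstacle is purely computational: simplifying $F''$ into the clean linear-in-$x$ numerator above. Once that is in hand, the concave-then-convex shape combined with the boundary data gives the inequality without any further delicate estimates.
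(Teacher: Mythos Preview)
Your proof is correct and, in fact, your function $F$ is exactly the negative of the paper's function $h$ (defined via $h(x)=g(x)+(k-1)g(\frac{1-x}{k-1})$). So you are proving the same inequality. The difference lies in the analysis: the paper never simplifies $h''$, and instead differentiates once more to establish $h'''<0$ via a case-by-case estimate (Claim~\ref{ClaimhpppNeg}); from $h'$ strictly concave together with $h'(1/k)=0$ and $h(1/k)=h(1)=0$ the sign of $h$ follows. You instead simplify $F''$ to the closed form with linear numerator $x\log(k-1)-(k-2)\log\frac{k}{k-1}$, which immediately gives the single sign change and makes the concave--then--convex shape transparent. Your route is shorter and avoids the third-derivative argument entirely; the paper's route is slightly more robust in that it never relies on the algebraic simplification going through. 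One small correction: your parenthetical claim that $F'(1)=-\infty$ is wrong (the two logarithmic singularities compete, and for $k\ge 3$ one checks $F'(x)\to+\infty$ as $x\to 1^-$), but this remark is unnecessary anyway --- convexity on $[x^*,1]$ together with $F(x^*)<0$ and $F(1)=0$ already forces $F<0$ on $(x^*,1)$ by the chord inequality.
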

\begin{proof}
	By~\eqref{eq:coro_sdpi_potts} we have
	\begin{align*}
		\eta_{\KL}(\Col_k, \pi) &= \sup_{x\in (\frac 1k, 1]} \frac{\log k + \frac{1-x}{k-1} \log \frac{1-x}{k-1} + \frac{k+x-2}{k-1}\log \frac{k+x-2}{(k-1)^2}}{\log k + x\log x + (1-x)\log \frac{1-x}{k-1}}\\
		& = \sup_{x\in (\frac 1k, 1]} \frac{\log k - \log (k-1) + \frac{1-x}{k-1} \log (1-x) + \frac{k+x-2}{k-1}\log \frac{k+x-2}{k-1}}{\log k + x\log x + (1-x)\log \frac{1-x}{k-1}}.
	\end{align*}
	Taking $x=1$, we get $$\eta_{\KL}(\Col_k, \pi) \ge \frac{\log k - \log (k-1)}{\log k}.$$
	To prove the proposition, we only need to prove that for $x\in (\frac 1k, 1]$,
	\begin{align}\label{EqnColorSDPI1}
		\frac{\frac{1-x}{k-1} \log (1-x) + \frac{k+x-2}{k-1}\log \frac{k+x-2}{k-1}}{x\log x + (1-x)\log \frac{1-x}{k-1}}
		\ge \frac{\log k-\log(k-1)}{\log k}.
	\end{align}
	(Note that both numerator and denominator in LHS are non-positive.)
	Define
	\begin{align*}
		g(x) &= (\log k - \log(k-1)) x\log x - \frac{\log k}{k-1} (1-x) \log (1-x),\\
	  h(x) &= g(x) + (k-1) g(\frac{1-x}{k-1}).
	\end{align*}
	Rearranging \eqref{EqnColorSDPI1}, we only need to prove that $h(x) \ge 0$ for $x\in (\frac 1k, 1]$.

	We compute that
	\begin{align*}
		g^\p(x) &= (\log k - \log(k-1)) (1+\log x) + \frac{\log k}{k-1} (1+\log (1-x)),\\
		g^\pp(x) &= (\log k - \log(k-1)) \frac 1{x} - \frac{\log k}{k-1} \frac 1 {1-x},\\
		g^\ppp(x) &= -(\log k - \log(k-1)) \frac 1{x^2} - \frac{\log k}{k-1} \frac 1{(1-x)^2}<0.
	\end{align*}

	\begin{claim}\label{ClaimhpppNeg}
		$h^\ppp(x) < 0$ on $(0, 1)$.
	\end{claim}
	\begin{proof}
		\begin{align*}
			h^\ppp(x) &= g^\ppp(x) - \frac 1{(k-1)^2} g^\ppp(\frac{1-x}{k-1})\\
			& = -(\log k - \log(k-1)) \frac 1{x^2} - \frac{\log k}{k-1} \frac 1{(1-x)^2} \\
			& + \frac 1{(k-1)^2} ((\log k - \log(k-1)) \frac 1{(\frac{1-x}{k-1})^2} + \frac{\log k}{k-1} \frac 1{(1-\frac{1-x}{k-1})^2})\\
			& = (\log \frac{k}{k-1}) (\frac 1{(1-x)^2} - \frac 1{x^2}) + \frac {\log k}{k-1} (\frac 1{(k-2+x)^2} - \frac 1{(1-x)^2}) \\
			& = \frac 1{(1-x)^2} ((\log \frac{k}{k-1}) (1 - \frac {(1-x)^2}{x^2}) + \frac {\log k}{k-1} (\frac {(1-x)^2}{(k-2+x)^2}-1))\\
			& =: \frac 1{(1-x)^2} (s(x) + t(x)).
		\end{align*}
		We have
		\begin{enumerate}
			\item $s(x) < 0$ for $x<\frac 12$, $s(x) > 0$ for $x> \frac 12$;
			\item $t(x) < 0$ for $x\in (0, 1)$;
			\item $s(x)$ is increasing for $x\in (0, 1)$;
			\item $t(x)$ is decreasing for $x\in (0, 1)$.
		\end{enumerate}
		So $h^\ppp(x) < 0$ for $x\le \frac 12$.
		For $x \ge \frac 12$, we have
		\begin{align*}
			s(x) + t(x) < s(1) + t(\frac 12) = \log \frac{k}{k-1} + \frac{\log k}{k-1} (\frac 1{(2k-3)^2}-1).
		\end{align*}
		It is not hard to verify that the last term is $<0$ for $k\ge 3$.
	\end{proof}

	By Claim \ref{ClaimhpppNeg}, $h^\p(x)$ is strictly concave.
	Because $h^\p(\frac 1k) = 0$, $h(\frac 1k) = h(1) = 0$, we get that $h(x) > 0$ for $x\in (1/k, 1)$.
	This finishes the proof.
\end{proof}
Theorem \ref{ThmPotts} and Proposition \ref{PropColorSDPI} together imply non-reconstruction for $$\br(T) < \frac{\log k}{\log k - \log(k-1)} = (1-o(1)) k\log k.$$
This result was proved for regular trees by Sly~\cite{Sly09b} and by Bhatnagar et al.~\cite{BVVW11}. Sly had more accurate lower order terms, and his proof works for Galton-Watson trees with Poisson offspring distribution.
Efthymiou \cite{Eft15} generalized the result to general Galton-Watson trees with weak assumptions on the offspring distribution.
Our result does not assume any conditions on the degree distribution other than the expected offspring.

\begin{rmk}\label{RmkColorComparison}
	We show that previous methods based on information contraction do not give the threshold $(1-o(1)) k\log k$.
	The Evans-Schulman method is based on $\eta_{\KL}(\Col_k)$.
	For $P_\ep = (\frac 12 - \ep, \frac 12 + \ep, 0, \ldots, 0)$ and $Q_\ep = (\frac 12 + \ep, \frac 12 - \ep, 0, \ldots, 0)$, we can compute that
	\begin{align*}
		\eta_{\KL}(\Col_k) \ge \lim_{\ep \to 0} \frac{D(\Col_k \circ P_\ep || \Col_k \circ Q_\ep)}{D(P_\ep || Q_\ep)} = \frac 1{k-1}.
	\end{align*}
	On the other hand, by comparing with $\TV$ contraction coefficient, we have
	$$\eta_{\KL}(\Col_k) \le \eta_{\TV}(\Col_k) = \frac 1{k-1}.$$
	Therefore $\eta_{\KL}(\Col_k) = \frac 1{k-1}$. The Evans-Schulman method gives non-reconstruction for $d < k-1$.

	The Formentin-K{\"u}lske method is based on $\eta_{\SKL}(\Col_k, \pi)$.
	If we let $P_\ep = (1-\ep, \frac {\ep}{k-1}, \ldots, \frac{\ep}{k-1})$, then
	\begin{align*}
		\eta_{\SKL}(\Col_k, \pi) \ge \lim_{\ep \to 0} \frac{D_{\SKL}(\Col_k \circ P_\ep || \pi)}{D_{\SKL}(P_\ep || \pi)} = \frac 1{k-1}.
	\end{align*}
	Therefore the Formentin-K{\"u}lske method cannot give non-reconstruction results better than for $d < k-1$.
\end{rmk}

\section{Stochastic block model} \label{SecSBM}
In this section we study the problem of weak recovery of the stochastic block model. In this model, there are $n$ vertices, each independently and uniformly randomly assigned one of $k$ spins.
Say vertex $v\in [n]$ has spin $\sm_v \in [k]$. Generate a random graph,
where two vertices $u$ and $v$ have an edge between them with probability
$$\left\{\begin{array}{ll} \frac an & \text{if }\sm_u = \sm_v,\\ \frac bn & \text{if } \sm_u\ne \sm_v,\end{array}\right.$$
for some absolute constants $a$ and $b$.
We say the model has weak recovery, if given the graph (without knowing the spins $\sm$), we can construct an assignment $\hat \sm$ of spins of the vertices, such that
$$\limsup_{n\to \infty} \bE[d(\sm, \hat \sm)] < 1 - \frac 1k,$$
where $$d(\sm, \hat \sm) = \frac 1n \min_{\tau \in S_k} \sum_{i\in [n]} \bbl\{\sm_i\ne \hat \sm_{\tau(i)}\}.$$
This definition of distance $d$ is meaningful because $\hat \sm$ is in fact defined only up to a permutation of the spins.
If $\hat \sm$ is uniformly random, then the limit of $d(\sm, \hat \sm)$ is $1-\frac 1k$. So the notion of weak recovery indicates whether we can recover the spin groups better than purely random guessing.

In the following, we show that SDPI-based non-reconstruction results for the Potts model lead to improved impossibility results for the stochastic block model.

\subsection{Impossibility of weak recovery via information percolation}
We first give an impossibility result via the information percolation method of Polyanskiy and Wu \cite{PW18}. In op.~cit., they proved the following statement.
\begin{prop}[{\cite[Proposition 8]{PW18}}]\label{PropPW18TreeModel}
	Weak recovery for the stochastic block model is impossible, if the following tree model has non-reconstruction:

	Consider a Galton-Watson tree with offspring distribution Poisson with mean $d = (\sqrt a - \sqrt b)^2$. For each vertex, we independently and uniformly randomly choose a spin.
	Say vertex $v$ has spin $\sm_v$.
	We observe $Y_{u,v} = \bbl\{\sm_u = \sm_v\}$ for each edge $(u,v)$.

	Let $\rho$ denote the root and $L_h$ denote the set of vertices of distance $h$ to $\rho$.
	Let $Y$ denote the set of all observations.
	We say the model has reconstruction, if
	$$ \lim_{h\to \infty} \bE I(\sm_\rho; \sm_{L_h} | Y) = 0.$$
\end{prop}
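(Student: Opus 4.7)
The plan is to establish the SBM-to-tree reduction via a local neighborhood analysis. In the sparse regime ($a,b = O(1)$), the radius-$r$ neighborhood of a typical vertex $v$ in the SBM graph looks locally like a Galton-Watson tree with high probability, and this convergence can be made quantitative in total variation. Weak recovery is equivalent to showing that $P_{\sm_v\mid G}$ stays bounded away from uniform in total variation for a positive fraction of vertices (by a standard reduction using the symmetry of the $k$ spin classes and Fano-type arguments). So the goal will be to argue that if the claimed tree model has non-reconstruction, then $I(\sm_v; G) \to 0$ for a typical $v$, and hence $P_{\sm_v\mid G}\to\Unif([k])$.

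The main steps I would carry out are: (i) Formulate weak recovery as the requirement that $\limsup_{n\to\infty} I(\sm_v; G)/\log k > 0$ for a typical vertex $v$. (ii) Truncate to a local ball $B_r(v)$ and absorb the boundary by adding external spin observations $\sm_{\partial B_r(v)}$; this gives $I(\sm_v; G) \le I(\sm_v; B_r(v), \sm_{\partial B_r(v)}) + \epsilon_r$ with $\epsilon_r\to 0$ as $r,n\to\infty$ in the right order. (iii) Couple $(B_r(v), \sm_{B_r(v)})$ with the Galton-Watson tree described in the statement, using a local-limit/Stein-type coupling valid in the sparse regime. (iv) Observe that in the SBM we only see the graph, not the spins of the leaves; passing from edge-type observations to leaf spins via data processing gives precisely the tree model in the statement (root spin, iid uniform spins on other vertices, and the observations $Y_{u,v}=\bbl\{\sm_u=\sm_v\}$ on edges).

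The step I expect to be the real obstacle is identifying the correct offspring mean $d=(\sqrt a-\sqrt b)^2$, which is strictly smaller than the natural expected degree $\bar d = (a+(k-1)b)/k$. The reason is that the mere presence or absence of an edge between two vertices carries only Hellinger-scale information about whether their spins agree: the per-pair Hellinger squared is $H^2(\Ber(a/n),\Ber(b/n)) = (1+o(1))(\sqrt a-\sqrt b)^2/n$, so the total Hellinger information collected by $v$ from its $n-1$ potential neighbors is $(\sqrt a-\sqrt b)^2$ asymptotically. Converting this Hellinger budget into a Poisson branching structure requires a two-type Galton-Watson / contiguity argument: one compares the SBM edge distribution with an ``uninformative'' Erd\H os--R\'enyi null $G(n,\bar d/n)$, shows that the likelihood ratio restricted to $B_r(v)$ is a product of per-edge factors, and expresses the log-likelihood as a Poissonized tree functional whose effective offspring distribution is $\Po((\sqrt a-\sqrt b)^2)$ with edge labels $Y_{u,v}$.

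With these ingredients in place the conclusion is routine: if the tree model has $\lim_{h\to\infty}\bE I(\sm_\rho;\sm_{L_h}\mid Y) = 0$, then by (iii) and (iv) we get $I(\sm_v; G) \to 0$ along a subsequence $r,n\to\infty$, hence by (i) weak recovery fails. The only other care needed is to verify that all the approximation errors ($o(1)$ in the local-limit step, $\epsilon_r$ in the truncation step, and the Hellinger approximation) can be made simultaneously vanishing, which is a standard diagonal argument.
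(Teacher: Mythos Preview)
The paper does not prove this proposition; it is quoted as \cite[Proposition~8]{PW18} and used as a black box (the paper only proves the companion Proposition~\ref{PropPW18TreeModel_2} about when the tree model itself has non-reconstruction). So there is no in-paper argument to compare against, and what follows concerns the actual PW18 proof.

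Your outline conflates two distinct reductions. Steps (i)--(iii) with a direct local-neighborhood coupling are the Mossel--Neeman--Sly route (Theorem~\ref{ThmSBMPotts} here), and that coupling produces a Galton--Watson tree with offspring mean $\bar d=(a+(k-1)b)/k$ and \emph{noisy} edge observations (the Potts channel), not mean $(\sqrt a-\sqrt b)^2$ with exact observations $Y_{u,v}=\bbl\{\sm_u=\sm_v\}$. Your step (iv) then tries to pass from the SBM's edge-presence observations to the exact $Y_{u,v}$ observations ``via data processing,'' but this goes the wrong way: $Y_{u,v}$ is \emph{more} informative than the edge indicator, so DPI cannot take you there.

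The actual PW18 mechanism is a per-pair \emph{channel degradation}: the binary channel $\bbl\{\sm_u=\sm_v\}\mapsto\bbl\{(u,v)\in E\}$ is shown to be a degraded version of an erasure-type channel that outputs $\bbl\{\sm_u=\sm_v\}$ with probability $p_n\sim(\sqrt a-\sqrt b)^2/n$ and $\perp$ otherwise. Replacing every pair's observation by this strictly more informative channel, the non-erased pairs form $G(n,p_n)$, whose local limit is exactly the $\Po((\sqrt a-\sqrt b)^2)$ tree in the statement. Your Hellinger heuristic correctly predicts the constant, but the sentence ``converting this Hellinger budget into a Poisson branching structure via a two-type GW / contiguity argument'' is not a substitute for this degradation lemma; without it the reduction does not go through.
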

In op.~cit., it was proven that the tree model has non-reconstruction when $d < \frac k2$, using a coupling argument.
Here we improve it to $$d < \frac {1}{\frac{\log k-\log (k-1)}{\log k} \frac {k-1}k + \frac 1k } = k - (1+o(1)) k/\log k.$$
\begin{prop}\label{PropPW18TreeModel_2}
	The tree model in Proposition \ref{PropPW18TreeModel} has non-reconstruction if
	$$d < \frac {1}{\frac{\log k-\log (k-1)}{\log k} \frac {k-1}k + \frac 1k }.$$
\end{prop}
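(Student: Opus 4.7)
The plan is to apply the argument behind Theorem \ref{ThmTreeGeneral} to the broadcast model after conditioning on the edge observations $Y$. The crucial observation is that, conditional on $Y_{(u,v)}$, the effective channel from $\sm_u$ to $\sm_v$ is deterministic: it is the identity when $Y_{(u,v)}=1$ (which happens with probability $1/k$) and the coloring channel $\Col_k$ when $Y_{(u,v)}=0$ (probability $(k-1)/k$). Both are reversible with uniform stationary distribution, so the input-restricted SDPI of Theorem \ref{ThmSDPIPotts} applies to each effective edge channel, and Proposition \ref{PropColorSDPI} supplies the coefficient for $\Col_k$. Averaging over $Y_{(\rho,v)}$ should therefore replace the per-edge SDPI coefficient by the mixture
\[
\eta_*:=\tfrac{1}{k}\cdot 1+\tfrac{k-1}{k}\cdot\tfrac{\log k-\log(k-1)}{\log k},
\]
which is precisely the reciprocal of the threshold in the statement.

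I would then set $J_h:=\bE[I(\sm_\rho;\sm_{L_h}\mid Y)]$, averaged over the Galton--Watson tree and $Y$, and establish the recursion $J_h\le d\,\eta_*\,J_{h-1}$ in three substeps. First, verify that conditional on $Y$ the marginal of $\sm_\rho$ is uniform (using $S_k$-permutation invariance of the joint law of spins and edge observations) and that $\sm_\rho\to\sm_v\to\sm_{L_{v,h}}$ remains a Markov chain whose first step is the effective channel $M_{Y_{(\rho,v)}}$. Second, use conditional independence of the subtrees rooted at the children of $\rho$ given $(\sm_\rho,Y)$ together with subadditivity of mutual information for conditionally independent components to get $I(\sm_\rho;\sm_{L_h}\mid Y)\le\sum_{v\in c(\rho)} I(\sm_\rho;\sm_{L_{v,h}}\mid Y)$, and apply the input-restricted SDPI pointwise in $y$ to each term:
\[
I(\sm_\rho;\sm_{L_{v,h}}\mid Y=y)\le \eta_{\KL}(M_{y_{(\rho,v)}},\pi)\cdot I(\sm_v;\sm_{L_{v,h}}\mid Y=y).
\]
Third, take expectation and factor using $Y_{(\rho,v)}\perp Y_{T_v}$ to replace the random coefficient $\eta_{\KL}(M_{Y_{(\rho,v)}},\pi)$ by its mean $\eta_*$.

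The recursion then closes using the Galton--Watson structure: given $v\in c(\rho)$, the subtree $T_v$ is an independent copy of the same process rooted one level lower, so $\bE[I(\sm_v;\sm_{L_{v,h}}\mid Y_{T_v})]=J_{h-1}$; combined with $\bE|c(\rho)|=d$ this yields $J_h\le d\,\eta_*\,J_{h-1}$, and iterating from $J_0\le\log k$ gives $J_h\to 0$ whenever $d\eta_*<1$, matching the claimed threshold.

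The main technical obstacle I anticipate is the third substep, namely making rigorous the independence $Y_{(\rho,v)}\perp Y_{T_v}$ together with the factorization of the expectation. Heuristically this holds because $Y_{(\rho,v)}=\bbl\{\sm_\rho=\sm_v\}$ with $\sm_\rho$ uniform and independent of every spin inside $T_v$, so conditioning on $Y_{T_v}$ (a function of spins in $T_v$ alone) leaves $\sm_\rho$ uniform and hence leaves $Y_{(\rho,v)}\sim\Ber(1/k)$ unchanged; but one has to track the conditioning carefully because $Y_{T_v}$ itself is a function of $\sm_v$. Once this is nailed down, the remainder follows the proof of Theorem \ref{ThmTreeGeneral} essentially verbatim, and the quantitative improvement over the $d<k/2$ bound of~\cite{PW18} comes entirely from replacing the trivial worst-case edge coefficient $\eta_{\KL}\le 1$ by the average $\eta_*$.
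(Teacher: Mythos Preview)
Your proposal is correct and follows essentially the same approach as the paper: condition on $Y$, identify the effective edge channel as $I_k$ or $\Col_k$ according to $Y_{(u,v)}$, apply the input-restricted SDPI pointwise, and then average. The independence concern you flag is real but resolves exactly as you sketch (since $Y_{(\rho,v)}=\bbl\{\sm_\rho=\sm_v\}$ with $\sm_\rho$ uniform and independent of all spins in $T_v$, conditioning on $Y_{T_v}$ leaves $Y_{(\rho,v)}\sim\Ber(1/k)$ and hence the factorization goes through); the paper simply asserts the averaged inequality without comment. One cosmetic difference: you close via the direct recursion $J_h\le d\eta_* J_{h-1}$ using the Galton--Watson branching, whereas the paper defers to the branching-number/flow argument of Theorem~\ref{ThmTreeGeneral}; for $\Po(d)$ offspring these are equivalent and your version is arguably cleaner here.
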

\begin{proof}
	The tree model is equivalent to the following top-down process:
	\begin{enumerate}
		\item Choose $\sm_\rho$ uniformly randomly over $[k]$.
		\item For an edge $(u,v)$ where $u$ is the parent of $v$, we randomly choose the transition matrix $M$, which is the identity $I_k$ with probability $\frac 1k$, and $\Col_k$ with probability $1-\frac 1k$. Then generate the spin of $v$ according to $\bP(\sm_v=j| \sm_u=i) = M_{i,j}$.
	\end{enumerate}

	For any vertex $u$, define $L_{u,h}$ to be the set of descendants of $u$ that have distance $h$ to $\rho$.
	Let $v$ be a child of $v$.

	Note that $q^* = \Unif([k])$ is an invariant distribution for both $I_k$ and $\Col_k$.
	We have $\eta_{\KL}(I_k^\vee, q^*) = 1$, and by Proposition \ref{PropColorSDPI}, $\eta_{\KL}(\Col_k^\vee, q^*) = \frac{\log k - \log(k-1)}{\log k}$.
	So if $Y_{u,v}=1$, we have
	\begin{align*}
		I(\sm_u; \sm_{L_{v,h}} | Y) \le \eta_{\KL}(I_k^\vee, q^*) I(\sm_v; \sm_{L_{v,h}} | Y)
		= I(\sm_v; \sm_{L_{v,h}} | Y)
	\end{align*}
	and if $Y_{u,v}=0$, we have
	\begin{align*}
		I(\sm_u; \sm_{L_{v,h}} | Y) &\le \eta_{\KL}(\Col_k^\vee, q^*) I(\sm_v; \sm_{L_{v,h}} | Y) \\
		&= \frac{\log k - \log (k-1)}{\log k} I(\sm_v; \sm_{L_{v,h}} | Y).
	\end{align*}
	Taking expectation, we get
	\begin{align*}
		\bE I(\sm_u; \sm_{L_{v,h}} | Y) \le (\frac{\log k - \log (k-1)}{\log k} \frac{k-1}k + \frac 1k) \bE I(\sm_v; \sm_{L_{v,h}} | Y).
	\end{align*}

	Rest of the proof is the same as Theorem \ref{ThmTreeGeneral}.
\end{proof}
The last two propositions together show that the stochastic block model does not have weak recovery when
\begin{equation}\label{EqnSBMInfoPerc}
	(\sqrt a - \sqrt b)^2 < \frac {1}{\frac{\log k-\log (k-1)}{\log k} \frac {k-1}k + \frac 1k}.
\end{equation}
As shown in Figure \ref{FigSBM}, for certain parameters, \eqref{EqnSBMInfoPerc} leads to slight improvement over \cite{BMNN16}.


\subsection{Impossibility of weak recovery via Potts channel}
In the last section we have seen that the information percolation method together with our tree recursion gives a simple yet strong impossibility result for weak recovery of the stochastic block model.
The information percolation method can be understood as comparison with the erasure channel. However, the stochastic block model is more closely related to the Potts channel.
In this section we show an even better impossibility result via the Potts model on a tree.

Let $d = \frac{a + (k-1)b}{k}$ and $\lm = \frac{a-b}{a+(k-1)b}$.
We compare the stochastic block model to the Potts model (with Potts channel parameter $\lm$) on a Galton-Watson tree with offspring distribution $\Po(d)$.

\begin{thm}[Mossel et al.~\cite{MNS15}]\label{ThmSBMPotts}
	If the Potts model with parameter $\lm$ on a Galton-Watson tree with offspring distribution $\Po(d)$ has non-reconstruction, the corresponding stochastic block model does not have weak recovery.
\end{thm}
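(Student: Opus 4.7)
The plan is to contrapose: assuming weak recovery of the stochastic block model, I will construct a nontrivial estimator of the root spin from the leaves of a Poisson$(d)$ Galton-Watson tree under the Potts$(\lm)$ broadcast model, contradicting non-reconstruction. The crucial ingredient is \emph{local weak convergence} of the SBM to the Potts branching process: for a uniformly chosen vertex $v$ and a fixed radius $R$, the pair $(B_R(v), \sm|_{B_R(v)})$ converges in total variation to $(T_R, \sm|_{T_R})$, where $T_R$ is the depth-$R$ truncation of a Galton-Watson tree with Poisson$(d)$ offspring, with spins generated by broadcasting the root spin $\sm_\rho \sim \Unif([k])$ through $\PC_\lm$. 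This is because any edge in $G$ is present with probability $(a + (k-1)b)/(kn) = d/n$ averaged over spins, and conditional on the spin of the endpoint, a neighbor has matching spin with probability proportional to $a$ and mismatching spin with probability proportional to $b$, yielding exactly the Potts transition kernel $\PC_\lm$ in the limit.

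The first step is to make this coupling precise: for any fixed $R$, there exists a coupling such that with probability $1-o(1)$ over $v$, the rooted neighborhood $B_R(v)$ together with its spin configuration is equal to an independent draw of the depth-$R$ broadcast process. One standard way is via an exploration of the breadth-first tree from $v$: at depth $\le R$ only $O(d^R)$ vertices are revealed, and the probability of encountering a previously-seen vertex or of deviating from a Poisson degree distribution is $O(d^{2R}/n) = o(1)$.

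Next, I would use the assumed weak recovery to produce a \emph{vertex-level} estimator $\hat \sm_v$ of the spin at $v$ which beats uniform guessing by a constant margin. Standard arguments (as in Mossel-Neeman-Sly) show that weak recovery as defined with overlap $\ge 1/k + \epsilon$ is equivalent to the existence of a measurable function $\hat \sm_v = f_v(G)$ such that $\bP[\hat \sm_v = \sm_v]  \ge 1/k + \epsilon'$ (after fixing the global permutation symmetry by conditioning on, e.g., the majority spin of a reference set). This in turn implies $I(\sm_v; G, \sm_{\sim v}) \ge c > 0$ for a constant $c$ independent of $n$, where $\sm_{\sim v}$ denotes the spins of all other vertices.

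The key step, and likely main obstacle, is to turn this global information-theoretic gain into a \emph{local} gain that survives the tree coupling. The standard route is: split the graph into $B_R(v)$ and its complement $G \setminus B_R(v)$, argue that the spins of vertices outside $B_R(v)$ can be well-approximated by an independent observation process whose only channel to $\sm_v$ passes through $\sm_{\der B_R(v)}$, and therefore the optimal estimator factors through $\sm_{\der B_R(v)}$ together with the graph structure, up to error vanishing as $R$ is chosen large after $n$. Formally, one shows $I(\sm_v; G) \le I(\sm_v; \sm_{\der B_R(v)}) + o_R(1) + o_n(1)$. Transferring to the tree via the local coupling of the first step yields $I(\sm_\rho; \sm_{L_R}) \ge c - o_R(1) - o_n(1)$ for all large enough $n$ and all fixed $R$, hence $\liminf_{R\to \infty} I(\sm_\rho; \sm_{L_R}) > 0$, contradicting non-reconstruction on the Potts broadcast tree. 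The hard part is justifying that the ``outside'' information cannot do better than going through the boundary, which requires either a planting/contiguity argument or the small-graph conditioning / second-moment reasoning of~\cite{MNS15}; this step is where all the honest work lies.
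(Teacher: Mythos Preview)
Your proposal identifies the same two structural ingredients as the paper's sketch: (i) a local coupling of the radius-$R$ neighborhood in the SBM (with spins) to the depth-$R$ Potts broadcast tree, and (ii) an approximate conditional-independence statement saying that the inside of a ball depends on the outside only through the boundary spins. That is indeed where the work lies, and you flag this correctly.

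Where your write-up diverges from the paper is in the direction of the argument and, more importantly, in the information-theoretic quantity you track. The paper proceeds \emph{directly} rather than by contraposition: assuming tree non-reconstruction, it shows that for any fixed $m$ and any vertices $v_0,\ldots,v_m$ one has $I(\sm_{v_0};\sm_{v_1},\ldots,\sm_{v_m}\mid G)=o(1)$, and then deduces impossibility of weak recovery. The quantity $I(\sm_{v_0};\sm_{v_1},\ldots,\sm_{v_m}\mid G)$---mutual information between \emph{spins}, conditional on the graph---is the right object because it is invariant under the global $S_k$ relabeling symmetry. Your contrapositive route runs into exactly this symmetry: in the balanced SBM the posterior of $\sm$ given $G$ is $S_k$-invariant, so the marginal of $\sm_v$ given $G$ is uniform and $I(\sm_v;G)=0$ identically, regardless of whether weak recovery holds. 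Thus your displayed inequality $I(\sm_v;G)\le I(\sm_v;\sm_{\partial B_R(v)})+o(1)$ is vacuous, and the earlier switch from $I(\sm_v;G,\sm_{\sim v})$ (which is trivially bounded away from zero whenever $v$ has a neighbor) to $I(\sm_v;G)$ is where the argument breaks. The fix is precisely what the paper does: work with $I(\sm_{v_0};\sm_{v_1},\ldots,\sm_{v_m}\mid G)$ for far-apart reference vertices $v_1,\ldots,v_m$, couple the ball around $v_0$ to the tree, and use the boundary conditional-independence statement $\bP(\sm_A\mid \sm_{B\cup C},G)=(1+o(1))\bP(\sm_A\mid \sm_B,G)$ to reduce to $I(\sm_\rho;\sm_{L_R})$ on the tree. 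A secondary difference is that the paper takes the radius to be $c\log n$ rather than a fixed $R$, so that the coupling and the tree non-reconstruction statement are invoked at a single common scale.
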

Mossel et al.~\cite{MNS15} proved Theorem \ref{ThmSBMPotts} for $k=2$, and the proof works for general $k$ with little change. Below we give a sketch of the proof.
\begin{proof}[Proof Sketch]
	The proof is in two parts.
	In the first part, we show that for some absolute constant $c>0$, there exists a coupling between the $c\log n$-neighborhood of a vertex in the SBM, and the $c\log n$-neighborhood of the root in the Potts model on a Galton-Watson tree with offspring distribution $\Po(d)$, such that the total variation distancebetween the two neighborhood (containing spins) is $o(1)$.
	Proof of this part is by observing that the $c\log n$-neighborhood in SBM has no cycle with high probability, and can be constructed using a sequence of binomial random variables; on the other hand, the Potts model can be constructed using a sequence of Poisson variables. Then we compare the two sequences of random variables and find that they have very small total variation distance.

	In the second part, we show that in the SBM, conditioned on the spins on the boundary of the $c\log n$-neighborhood, the spins inside and the spins outside are approximately independent.
	More specifically, if $A, B, C$ is a partition of $V$ such that $\# (A\cup B) = o(n)$ and $B$ separates $A$ and $C$, then
	$$\bP(\sm_A | \sm_{B\cup C}, G) = (1+o(1)) \bP(\sm_A | \sm_B, G)$$
	for $G$ and $\sm$ with probability $1-o(1)$.
	The proof is by writing out the partition function and removing exponents on $1-\frac an$ and $1-\frac bn$ that have negligible effect.

	Combining the two parts, one can prove that for any constant $m$ and any vertices $v_0, \ldots, v_m$, $$I(\sm_{v_0} ; \sm_{v_1}, \ldots, \sm_{v_m} | G) = o(1).$$
	This implies impossibility of weak recovery.
\end{proof}

Theorem \ref{ThmSBMPotts} and Theorem \ref{ThmPotts} imply Theorem \ref{ThmSBM} immediately.
Figure \ref{FigSBM} shows a comparison between the impossibility results for $k=5$.
\begin{figure}[ht]
	\centering
\includegraphics[scale=0.5]{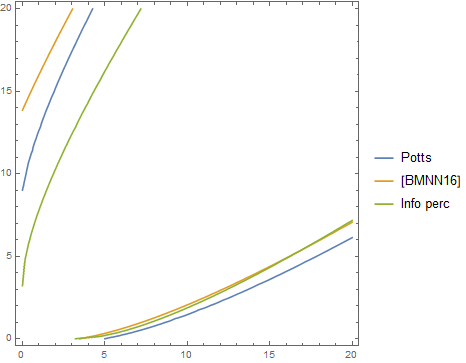}
\caption{Impossibility of weak recovery results for SBM for $k=5$. Horizontal axis is $a$, and vertical axis is $b$. In the assortative regime, \eqref{EqnSBMInfoPerc} gives better results than \cite{BMNN16} for certain parameters, and Theorem \ref{ThmSBM} gives the best results among the three.}
\label{FigSBM}
\end{figure}

Note that \eqref{EqnSBMBMNN} is equivalent to $d \frac{\lm^2(k-1)}{2\log(k-1)} < 1$.
In Appendix \ref{SecUpbd}, we prove that
\begin{align}
	\eta_{\KL}(\PC_\lm, \pi) < \frac{\lm^2(k-1)}{2 \log(k-1)} \label{eq:eta_vs_bmnn}
\end{align}
holds for all $k\ge 3$ and $\lm \in (0, 1]$.
Therefore Theorem \ref{ThmSBM} strictly improves over \eqref{EqnSBMBMNN} in the assortative regime.

\appendix\normalsize

\section{Input-unrestricted contraction coefficient for Potts channels}\label{SecPottsEtaUnres}
Computation of (input-restricted or input-unrestricted) contraction coefficients is often a daunting task.
Previously, Makur and Polyanskiy \cite{MP18} obtained lower and upper bounds of input-unrestricted KL divergence contraction coefficients for Potts channels.
In this appendix we compute the exact value of these contraction coefficients.
\begin{prop}\label{PropConPotts}
	$$\eta_{\KL} (\PC_\lm) = \frac{k\lm^2}{(k-2)\lm+2}.$$
\end{prop}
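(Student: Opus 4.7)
My plan is to reduce to a $\chi^2$ contraction calculation via the identity
\begin{align*}
\eta_{\KL}(W) = \eta_{\chi^2}(W) = \sup_Q \eta_{\chi^2}(W, Q),
\end{align*}
valid for any channel $W$: one direction is the second-order expansion $D(Q + tg \| Q) = (t^2/2)\chi^2(g/Q) + o(t^2)$ applied in the definitions, and the other is the Cohen--Kemperman--Zbaganu inequality $\eta_{\KL}(W) \le \eta_{\chi^2}(W)$ for the operator-convex $f$-divergence KL. The key structural feature of the Potts channel is $\PC_\lm = \lm I + (1-\lm) U$, where $U$ sends every input to the uniform distribution; this gives $(P - Q)\PC_\lm = \lm (P - Q)$ for any signed measures with $\sum_y(P_y - Q_y) = 0$.

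Setting $\mu := (1-\lm)/k$ and $\gamma(y) := Q(y)/(\lm Q(y) + \mu)$, a short computation reduces the problem to
\begin{align*}
\eta_{\chi^2}(\PC_\lm, Q) = \lm^2 \max\Big\{\sum_y v_y^2 \gamma(y) : \|v\|_2 = 1,\ \sum_y v_y \sqrt{Q(y)} = 0\Big\}.
\end{align*}
For the lower bound, take $Q = (1/2, 1/2, 0, \dots, 0)$: both nonzero values of $\gamma$ equal $k/((k-2)\lm + 2)$, and $v = (1, -1, 0, \dots, 0)/\sqrt{2}$ is admissible and attains this value, giving $\eta_{\KL}(\PC_\lm) \ge k\lm^2/((k-2)\lm + 2)$.

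For the matching upper bound, sort $Q(1) \ge Q(2) \ge \cdots$. One checks algebraically that $\gamma(y) \le k/((k-2)\lm+2)$ iff $Q(y) \le 1/2$, which settles the case $Q(1) \le 1/2$. When $Q(1) > 1/2$, Cauchy--Schwarz applied to the orthogonality constraint gives $v_1^2 \le 1 - Q(1)$, and $\gamma(y) \le \gamma(2)$ for $y \ge 2$ (monotonicity of $\gamma$ in $Q(y)$) leads to
\begin{align*}
\sum_y v_y^2 \gamma(y) \le v_1^2 \gamma(1) + (1 - v_1^2)\gamma(2) \le (1 - Q(1))\gamma(1) + Q(1)\gamma(2) \le f(Q(1)),
\end{align*}
where $f(a) := a(1-a)\bigl[(\lm a + \mu)^{-1} + (\lm(1-a)+\mu)^{-1}\bigr]$ and the last step uses the worst-case choice $Q(2) = 1 - Q(1)$ (two-point support), justified because $\gamma(2)$ is increasing in $Q(2)$.

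The final step is to show $f(a) \le f(1/2) = 1/(\lm + 2\mu) = k/((k-2)\lm+2)$. A direct manipulation using $u := \lm a + \mu$, $w := \lm(1-a)+\mu$ with $u + w = \lm + 2\mu$ yields
\begin{align*}
f(a) = \frac{\lm + 2\mu}{\lm^2}\left(1 - \frac{\mu(\lm+\mu)}{uw}\right),
\end{align*}
which is strictly increasing in $uw$; since $uw = (\lm a + \mu)(\lm(1-a)+\mu)$ is a concave quadratic in $a$ maximized at $a = 1/2$, so is $f$, and the bound follows. The main obstacle is organizing the upper bound so as to isolate the one-variable optimization cleanly, but once that is done the algebraic identity above makes the maximum transparent.
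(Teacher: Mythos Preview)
Your proof is correct, and both you and the paper start from the same reduction: the identity $\eta_{\KL}(W)=\eta_{\chi^2}(W)$, which is exactly what the paper invokes via the Sarmanov/R\'enyi maximal-correlation characterization~\eqref{EqnEta}. The lower bound construction is also identical (the paper's choice $P=(\tfrac12,\tfrac12,0,\dots,0)$, $f=(1,-1,0,\dots)$, $g=(u,-u,0,\dots)$ is your $Q=(\tfrac12,\tfrac12,0,\dots,0)$ and $v=(1,-1,0,\dots)/\sqrt2$ in different coordinates).

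The upper bound arguments genuinely diverge. The paper fixes $P$ and runs KKT on the joint $(f,g)$-optimization, deducing an affine relation $g_i=A+Bf_i$, and then reduces to the auxiliary inequality $\sum f_i^2-\tfrac1{k-1}(\sum f_i)^2\ge 2$ (Claim~\ref{ClaimIneq}), which it proves somewhat indirectly by first handling two-point $f$ and then arguing by contradiction via the extreme antiferromagnetic channel. Your route is more direct: the case split on $\max_y Q(y)\le\tfrac12$ handles the easy regime in one line, and in the remaining regime Cauchy--Schwarz on the orthogonality constraint plus monotonicity of $\gamma$ collapses everything to the one-variable function $f(a)$, whose maximum at $a=\tfrac12$ drops out of the identity $f(a)=\frac{\lm+2\mu}{\lm^2}\bigl(1-\frac{\mu(\lm+\mu)}{uw}\bigr)$ with $uw$ concave in $a$. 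This buys you a self-contained argument that avoids Lagrange multipliers and the circuitous proof of Claim~\ref{ClaimIneq}; the paper's approach, on the other hand, makes the structure of the optimizer (the affine link between $f$ and $g$) more visible. One minor point: the inequality $\eta_{\KL}(W)\le\eta_{\chi^2}(W)$ is usually attributed to the operator-convexity literature (e.g.\ Choi--Ruskai--Seneta, cited in the paper as~\cite{CRS94}) rather than to Cohen--Kemperman--Zbăganu, though the result you use is of course correct.
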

\begin{proof}
	The result is obvious for $\lm \in \{0, 1\}$. In the following, assume that $\lm\not \in \{0, 1\}$.

	We use the following characterization of contraction coefficient using R\'enyi correlation \cite{Ren59} (see e.g.~Sarmanov~\cite{Sar58}).
	For any channel $M$, we have
	\begin{align}
		\eta_{\KL}(M) = (\sup_P \sup_{f, g} \bE[f(X) g(Y)])^2 \label{EqnEta}
	\end{align}
	where $P$ is a distribution on $[k]$, $X\sim P$, $Y\sim M\circ P$, $f: \cX\to \bR$ satisfies $\bE_X[f] = 0$ and $\bE_X[f^2]=1$,
	and $g: \cY\to \bR$ satisfies $\bE_Y[g]=0$ and $\bE_Y[g^2]=1$.

	Specialize to $M = \PC_\lm$. Write $P = (p_1, \ldots, p_k)$, $f = (f_1, \ldots, f_k)$ and $g = (g_1, \ldots, g_k)$.
	Then
	\begin{align}
		\bE[f(X) g(Y)] = \sum_{i,j} f_i p_i g_j \bP[Y=j|X=i] = \lm \sum f_i p_i g_i. \label{EqnCorr}
	\end{align}
	When $\lm > 0$, we need to maximize $\sum f_i g_i p_i$.
	When $\lm < 0$, we make the transform $f_i \leftarrow -f_i$, and still maximize $\sum f_i g_i p_i$.
	So we get the following optimization problem.
	\begin{align}
		&\max \sum f_i g_i p_i\nonumber \\
		\text{s.t.} \quad & \sum f_i p_i = 0, \label{Eqnf1}\\
		& \sum f_i^2 p_i = 1, \label{Eqnf2}\\
		& \sum g_i (\lm p_i + \frac {1-\lm}{k}) = 0,\label{Eqng1}\\
		& \sum g_i^2 (\lm p_i + \frac {1-\lm}{k}) = 1, \label{Eqng2}\\
		& p_i \ge 0, \sum p_i = 1.
	\end{align}

	\textbf{Lower bound.}
	Take
	\begin{align*}
		P &= (\frac 12, \frac 12, 0, \ldots, 0),\\
		f &= (1, -1, 0, \ldots, 0),\\
		g &= (u, -u, 0, \ldots, 0)
	\end{align*}
	where $$u = \sqrt{\frac{k}{(k-2)\lm+2}}.$$
	Then $$\sum f_i g_i p_i = u.$$
	So $$\eta_{\KL}(\PC_\lm) \ge (\lm u)^2 = \frac{k\lm^2}{(k-2)\lm+2}.$$

	\textbf{Upper bound.}
	Let us fix $P$ and maximize over $f$ and $g$.
	Assume for the sake of contrary that $\sum f_i g_i p_i > u$.
	The set of possible $g$ is bounded; some coordinates of $f$ may be unbounded, but their values do not affect the objective function.
	So the maximum value of $\sum f_i g_i p_i$ is achieved at some point $f$ and $g$.
	Let us compute the derivatives.
	\begin{align*}
		\Na_f \sum f_i g_i p_i &= (g_i p_i)_{i\in [k]},\\
		\Na_f \sum f_i p_i &= (p_i)_{i\in [k]},\\
		\Na_f \sum f_i^2 p_i &= (2 f_i p_i)_{i\in [k]},\\
		\Na_g \sum f_i g_i p_i &= (f_i p_i)_{i\in [k]},\\
		\Na_g \sum g_i (\lm p_i + \frac {1-\lm}{k}) &= (\lm p_i + \frac{1-\lm}k)_{i\in [k]},\\
		\Na_g \sum g_i^2 (\lm p_i + \frac {1-\lm}{k}) &= (2 g_i(\lm p_i + \frac{1-\lm}k))_{i\in [k]}.
	\end{align*}
	By maximality in $f$, there exists some constants $A$ and $B$ such that
	\begin{align}
		g_i p_i = A p_i + B f_i p_i \label{EqnAB}
	\end{align} for all $i$.
	By maximality in $g$, there exists some constants $C$ and $D$ such that
	\begin{align}
		f_i p_i = C (\lm p_i + \frac{1-\lm}k) + D g_i (\lm p_i + \frac{1-\lm}k) \label{EqnCD}
	\end{align}
	for all $i$.

	By \eqref{EqnAB},
	\begin{align}
		\sum f_i g_i p_i = \sum f_i (A p_i + B f_i p_i) = B.
	\end{align}
	By \eqref{EqnCD},
	\begin{align}
		\sum f_i g_i p_i = \sum g_i (C (\lm p_i + \frac{1-\lm}k) + D g_i (\lm p_i + \frac{1-\lm}k)) = D.
	\end{align}
	So $B=D > u > 0$.

	For $p_i\ne 0$, we have
	$g_i = A + B f_i$
	by \eqref{EqnAB}.

	If for some $i$, $p_i=0$, then
	$$\frac {1-\lm} k(C+D g_i) = 0.$$
	This means $$\#\{g_i : p_i=0\} = 1.$$
	So we can choose $f_i$ for such $i$ such that
	\begin{align}
		g_i = A + B f_i \label{EqnABM}
	\end{align}
	for all $i$.

	From \eqref{Eqng1}, we get
	\begin{align*}
		0 & = \sum g_i (\lm p_i + \frac {1-\lm}k) \\
		& = \sum (A + B f_i) (\lm p_i + \frac {1-\lm}k)\\
		& = A + B \frac {1-\lm}k \sum f_i.
	\end{align*}
	From \eqref{Eqng2}, we get
	\begin{align*}
		1 &= \sum g_i^2 (\lm p_i + \frac {1-\lm}k) \\
		& = \sum (A^2 + 2AB f_i + B^2 f_i^2) (\lm p_i + \frac {1-\lm}k) \\
		& = A^2 + 2AB \frac{1-\lm} k \sum f_i + B^2 \lm + B^2 \frac{1-\lm} k \sum f_i^2 \\
		& = B^2 ( \lm + \frac {1-\lm}k \sum f_i^2 - (\frac {1-\lm}k \sum f_i)^2 ).
	\end{align*}

	The result then follows from Claim \ref{ClaimIneq} because we have
	\begin{align*}
		B & = \frac 1{\sqrt{\lm + \frac {1-\lm} k (\sum f_i^2 - \frac {1-\lm}k (\sum f_i)^2)}} \\
		& \le \frac 1{\sqrt{\lm + \frac {1-\lm} k (\sum f_i^2 - \frac 1{k-1} (\sum f_i)^2)}} \\
		& \le \frac 1{\sqrt{\lm + \frac {1-\lm} k \cdot 2}} = u.
	\end{align*}
\end{proof}

\begin{claim}\label{ClaimIneq}
	For any distribution $P$ and any $f$ satisfying \eqref{Eqnf1} and \eqref{Eqnf2},
	we have
	\begin{align*}
		\sum f_i^2 - \frac 1{k-1} (\sum f_i)^2 \ge 2.
	\end{align*}
\end{claim}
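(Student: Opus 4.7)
The plan is to exploit the constraints $\sum p_i f_i = 0$ and $\sum p_i f_i^2 = 1$ to force the largest and smallest values of $f$ to be well-separated in sign, and then bound the contribution of the remaining coordinates via Cauchy--Schwarz. The whole argument will reduce to a purely algebraic inequality in three real variables.

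First I would permute indices (which leaves both constraints and the objective invariant) so that $f_1 \ge f_2 \ge \cdots \ge f_k$, and then establish the sharp extremal bound $f_1 f_k \le -1$. The idea is to consider the quadratic $\phi(t) := (t - f_k)(f_1 - t)$, which is non-negative on the interval $[f_k, f_1]$ containing every $f_i$. Hence $\sum_i p_i \phi(f_i) \ge 0$. Expanding $\phi(t) = -t^2 + (f_1 + f_k) t - f_1 f_k$ and using $\sum p_i = 1$, $\sum p_i f_i = 0$, and $\sum p_i f_i^2 = 1$ collapses this to $-1 - f_1 f_k \ge 0$. The case $k = 2$ is now immediate, since $\sum f_i^2 - (\sum f_i)^2 = -2 f_1 f_2 \ge 2$.

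For $k \ge 3$, I would set $a = f_1$, $b = f_k$, $d = \sum_{i=2}^{k-1} f_i$ and apply Cauchy--Schwarz to the middle coordinates to get $\sum_{i=2}^{k-1} f_i^2 \ge d^2/(k-2)$. It then suffices to establish the purely algebraic inequality
\[
a^2 + b^2 + \frac{d^2}{k-2} - \frac{(a+b+d)^2}{k-1} \ge 2 \qquad \text{whenever } ab \le -1.
\]
Clearing denominators by multiplying through by $(k-1)(k-2)$ and completing the square in $d$ (treating $a, b$ as fixed), this would rewrite as
\[
(d - (k-2)(a+b))^2 \ge 2(k-1)(k-2)(1 + ab),
\]
which holds trivially because $ab \le -1$ makes the right-hand side non-positive while the left-hand side is a square.

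The one step that requires care is the last algebraic manipulation: one needs to spot that after clearing denominators and grouping, the ``mixed'' quadratic in $(a,b,d)$ collapses exactly into a square in $d$ plus the factor $-2(k-1)(k-2)(1+ab)$, which matches the threshold provided by Step~1. Once the extremal bound $f_1 f_k \le -1$ is in hand, no further input from the probability measure $P$ is needed, which is why the inequality is independent of $P$ and is sharp already on any two-point support.
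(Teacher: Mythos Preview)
Your proof is correct and takes a genuinely different route from the paper's own argument. The paper first handles the two-point support case by direct computation, and then for the general case argues by contradiction: assuming some $f$ with $U(f)<2$ exists, it constructs a feasible $g$ at the endpoint $\lambda=-\frac1{k-1}$ and deduces $\eta_{\KL}(\Col_k)>\frac1{k-1}$, contradicting the Dobrushin bound $\eta_{\KL}\le\eta_{\TV}=\frac1{k-1}$. That is an indirect argument which borrows an external inequality ($\eta_{\KL}\le\eta_{\TV}$) to finish.

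Your approach is direct and fully self-contained. The key extra idea is the Bhatia--Davis-type bound $f_{\max}f_{\min}\le -1$, obtained by integrating the nonnegative quadratic $(t-f_{\min})(f_{\max}-t)$ against $P$; this is precisely what converts the probabilistic constraints into a purely algebraic constraint on the extreme values. After that, Cauchy--Schwarz on the middle $k-2$ coordinates and a completion of the square yield $(d-(k-2)(a+b))^2\ge 2(k-1)(k-2)(1+ab)$, which is immediate from $ab\le -1$. Your algebra checks out exactly as written. The advantage of your argument is that it is elementary, avoids any appeal to contraction coefficients, and makes the equality cases transparent (two-point support with $f_1 f_k=-1$ and all middle $f_i$ equal to $(k-2)(a+b)/(k-2)=a+b$, forcing the remaining mass to vanish or the middle values to coincide). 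The paper's route, while shorter to write once the surrounding machinery is in place, is less portable as a stand-alone proof of the claim.
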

\begin{proof}
	Let us first prove the result for $f$ with support size two.
	WLOG assume that $f_1>0$, $f_2<0$, $f_3 = \cdots = f_k=0$.
	One can compute that
	\begin{align*}
		f_1 = \sqrt{\frac{p_2}{p_1(p_1+p_2)}}, \quad
		f_2 = -\sqrt{\frac{p_1}{p_1(p_1+p_2)}}.
	\end{align*}
	Then
	\begin{align*}
		&f_1^2+f_2^2-\frac 1{k-1} (f_1+f_2)^2 \\
		& \ge f_1^2+f_2^2-(f_1+f_2)^2\\
		&= \frac 1{p_1+p_2} (\frac{p_2}{p_1}+\frac{p_1}{p_2} - (\sqrt{\frac{p_2}{p_1}}-\sqrt{\frac{p_1}{p_2}})^2) \\
		&= \frac 2{p_1+p_2} \ge 2.
	\end{align*}

	Let us define
	\begin{align*}
		S(P) &:= \{f : \sum f_ip_i=0, \sum f_i^2 p_i = 1\}\\
		U(f) &:= \sum f_i^2 - \frac 1{k-1} (\sum f_i)^2.
	\end{align*}

	Now suppose that for some $P$ and $f \in S(P)$ we have $U(f) < 2$.
	The set $S(P) / \{\pm\}$ is continuous,
	and there exists $f\in S(P)$ with $U(f) \ge 2$ (e.g., $f$ with support size two),
	so for sufficiently small $\ep>0$ there exists $f\in S(P)$ such that $U(f) \in (2-\ep, 2)$.

	Let $\lm = -\frac 1{k-1}$.
	Take $\ep$ small enough so that $\lm + \frac {1-\lm} k (2-\ep) > 0$ and choose $f\in S(P)$ with $U(f) \in (2-\ep, 2)$.
	Define
	\begin{align*}
		B &= \frac 1{\sqrt{\lm + \frac {1-\lm} k U(f)}} > u,\\
		A &= - B \frac {1-\lm}k \sum f_i,\\
		g_i &= A + B f_i \forall i.
	\end{align*}
	One can check that $g$ satisfies \eqref{Eqng1} and \eqref{Eqng2},
	and
	\begin{align*}
		\sum f_i g_i p_i = B > u.
	\end{align*}

	By \eqref{EqnEta} and \eqref{EqnCorr}, this implies
	\begin{align*}
		\eta_{\KL}(\PC_{-\frac 1{k-1}}) > \frac 1{k-1}.
	\end{align*}
	However, we have
	\begin{align*}
		\eta_{\KL}(\PC_{-\frac 1{k-1}}) \le \eta_{\TV} (\PC_{-\frac 1{k-1}}) = \frac 1{k-1}.
	\end{align*}
	Contradiction.
\end{proof}

\section{An upper bound for input-restricted contraction coefficient for Potts channels}\label{SecUpbd}
In this appendix we prove an upper bound for the input-restricted KL divergence contraction coefficient for ferromagnetic Potts channels.
\begin{prop}\label{PropUpbd}
	Fix $k\ge 3$.
	For all $\lm \in [0, 1]$, we have
	\begin{equation}\label{eq:eta_upbd_ferro}
		\eta_{\KL}(\PC_\lm, \pi) \le \frac{\lm^2}{(1-\lm) \frac{2(k-1)\log(k-1)}{k(k-2)}+ \lm}.
	\end{equation}
	For all $\lm \in [-\frac 1{k-1}, 0]$, we have
	\begin{equation}\label{eq:eta_upbd_antiferro}
		\eta_{\KL}(\PC_\lm, \pi) \le \frac{\lm^2}{(1+ (k-1)\lm)\frac{2(k-1)\log(k-1)}{k(k-2)} - \lm \frac{\log k}{(k-1)(\log k - \log (k-1))}}.
	\end{equation}
\end{prop}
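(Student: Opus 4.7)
The starting point is the exact formula of Theorem~\ref{ThmSDPIPotts}, namely
\[
\eta_{\KL}(\PC_\lm,\pi) \;=\; \sup_{x\in(1/k,\,1]} \frac{\psi(\lm x + (1-\lm)/k)}{\psi(x)}.
\]
Thus both bounds reduce to pointwise inequalities: with $y = \lm x + (1-\lm)/k$, the ferromagnetic claim becomes
\[
\bigl((1-\lm)c + \lm\bigr)\,\psi(y) \;\le\; \lm^{2}\,\psi(x), \qquad x\in(1/k,1],\ \lm\in[0,1],
\]
where $c = \frac{2(k-1)\log(k-1)}{k(k-2)} = \frac{2}{k\alpha_{2}}$ is tied to the $2$-LSI constant \eqref{eq:2lsi_potts}. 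Both sides vanish at $x=1/k$, since $\psi(1/k)=0$ and $y(1/k,\lm)=1/k$, so a natural strategy is to verify the inequality by differentiating in $x$.

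For the ferromagnetic case, set $F(x,\lm) = \lm^{2}\psi(x) - ((1-\lm)c+\lm)\psi(y)$ and aim to show $\partial_{x}F \ge 0$ on $[1/k,1]$. This reduces to $\lm\,\psi'(x) \ge ((1-\lm)c+\lm)\psi'(y)$, which by the mean value theorem and $\psi''(s)=1/s+1/(1-s)$ can be rewritten as a concrete inequality involving ratios of $\psi''$. An alternative, cleaner route is to check that the extremum in $x$ is attained at $x=1$ (where $\psi(x)=\log k$) by showing the ratio $\psi(y)/\psi(x)$ is monotone in $x\in(1/k,1]$; then the claim becomes a one-variable inequality in $\lm$, with $F(0)=F(1)=0$, amenable to direct calculus (computing $\Lambda''(0) = 2\log k - c(k-1) > 0$ gives the correct local behaviour at $\lm=0$, matching the leading order from $\chi^{2}$-contraction).

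For the anti-ferromagnetic case, the denominator carries an additional $-\lm d$ term with $d = \frac{\log k}{(k-1)(\log k - \log(k-1))}$. The key observation is that at $\lm = -1/(k-1)$, the bound \eqref{eq:eta_upbd_antiferro} collapses to $\frac{\log k - \log(k-1)}{\log k}$, which by Proposition~\ref{PropColorSDPI} equals $\eta_{\KL}(\Col_{k},\pi)$ exactly. Hence the bound is calibrated to be tight at both endpoints $\lm=0$ and $\lm=-1/(k-1)$, and the term $-\lm d$ encodes precisely the contribution of the colouring-channel anchor. The proof strategy is again to reduce to the pointwise inequality in $x$ and show positivity of the analogous $F$ by differentiation, but now the estimate interpolates between the $2$-LSI anchor at $\lm=0$ (the $c$ term) and the $\Col_{k}$ anchor at $\lm = -1/(k-1)$ (the $d$ term).

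The hard part will be the calculus book-keeping, especially in the anti-ferromagnetic case where two competing anchor inequalities need to be combined without degrading the bound in between. In particular, showing $\partial_{x}F\ge 0$ uniformly in $x\in[1/k,1]$ requires careful comparison of $\psi'(y)$ and $\psi'(x)$; near $x = 1/k$ one uses the quadratic expansion $\psi(s) \approx \frac{k^{2}}{2(k-1)}(s-1/k)^{2}$ so that the ratio tends to $\lm^{2}$, while for $x$ near $1$ the singularity of $\psi'$ at the endpoint must be controlled. I would expect the monotonicity-in-$x$ reduction to be the cleanest path, leaving a single one-dimensional inequality in $\lm$ where the tightness at the chosen anchor point(s) guarantees the correct constants.
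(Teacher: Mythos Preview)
Your setup is right and you have correctly identified the three anchor points $\lm=1$, $\lm=0$, $\lm=-\frac{1}{k-1}$ together with the exact values of $\eta_{\KL}$ (or its limiting behaviour) there. What is missing is the organizing idea that turns these anchors into the stated bounds: the paper does \emph{not} vary $x$ at all. Instead, for each fixed $x\in(\frac1k,1]$ it studies
\[
f_x(\lm)\;=\;\frac{\lm^2\,\psi(x)}{\psi(\lm x+\tfrac{1-\lm}{k})}
\]
as a function of $\lm$, and shows that $f_x$ is \emph{concave} on $[-\frac1{k-1},1]$. Writing $f_x(\lm)=\frac{\psi(x)}{(kx-1)^2}\cdot \frac{(ky-1)^2}{\psi(y)}$ with $y=\lm x+\frac{1-\lm}{k}$ linear in $\lm$, this concavity is exactly the statement that $g(y)=\frac{(ky-1)^2}{\psi(y)}$ is concave in $y$---a one-variable lemma proved directly. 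Once $f_x$ is concave, it lies above every secant: between $\lm=0$ and $\lm=1$ this gives $f_x(\lm)\ge (1-\lm)f_x(0)+\lm f_x(1)$, and since $f_x(1)=1$ while $f_x(0)=\frac{2(k-1)\psi(x)}{(kx-1)^2}\ge \frac{2(k-1)\log(k-1)}{k(k-2)}$ (again by the concavity of $g$, whose maximum sits at $x=1-\frac1k$), the ferromagnetic bound follows immediately by taking reciprocals and a sup over $x$. The anti-ferromagnetic bound is the secant between $\lm=0$ and $\lm=-\frac1{k-1}$, using Proposition~\ref{PropColorSDPI} for the latter anchor.

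Your two proposed routes both run into trouble. Differentiating $F(x,\lm)$ in $x$ requires comparing $\lm\psi'(x)$ with $((1-\lm)c+\lm)\psi'(y)$ uniformly in $x$ and $\lm$, and there is no obvious monotonicity or convexity structure in $x$ that makes this tractable---you would effectively be re-proving concavity of $g$ in a much more entangled form. The alternative, showing that the supremum over $x$ is attained at $x=1$, is not what the proposition asserts: even when that monotonicity holds (it does at $\lm=-\frac1{k-1}$, cf.\ Proposition~\ref{PropColorSDPI}), the resulting value $\psi(\lm+\frac{1-\lm}{k})/\log k$ is not the bound in~\eqref{eq:eta_upbd_ferro}, so you would still owe a separate one-variable inequality in $\lm$---and there is no reason to expect the maximizer in $x$ to be independent of $\lm$ in general. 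The paper sidesteps both difficulties by interpolating in $\lm$ rather than in $x$.
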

We first prove a lemma.
\begin{lemma}\label{LemmaConcTech}
	$\frac{(kx-1)^2}{\psi(x)}$ is concave in $x\in [0, 1]$.
\end{lemma}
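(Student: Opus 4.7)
The plan is to prove concavity of $\phi(x) := (kx-1)^2/\psi(x)$ by a direct second-derivative computation, reducing to a one-variable inequality.

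First, I would set $f(x) = (kx-1)^2$ so that $\phi = f/\psi$, and apply the quotient rule twice. Noting that $f'' = 2k^2$, $f'(x) = 2k(kx-1)$, $\psi'(x) = \log\frac{(k-1)x}{1-x}$, and $\psi''(x) = 1/(x(1-x))$, the numerator of $\phi''(x)$ (after multiplying by $\psi^3$) becomes
\begin{equation*}
f''\psi^2 - f\psi\psi'' - 2f'\psi\psi' + 2f(\psi')^2
\;=\; 2\bigl[k\psi(x) - (kx-1)\psi'(x)\bigr]^2 - \frac{(kx-1)^2\,\psi(x)}{x(1-x)},
\end{equation*}
where the square is recognized via $2k^2 f = 2[k(kx-1)]^2$ and the identity $2a^2 - 4ab + 2b^2 = 2(a-b)^2$.

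Next, I would verify by direct substitution of the explicit formulas for $\psi$ and $\psi'$ that
\begin{equation*}
\eta(x) := (kx-1)\psi'(x) - k\psi(x) = -\log x - (k-1)\log(1-x) + (k-1)\log(k-1) - k\log k,
\end{equation*}
so that $\eta(x) \ge 0$ on $(0,1)$ with equality iff $x = 1/k$ (a consequence of the AM-GM bound $x(1-x)^{k-1} \le (k-1)^{k-1}/k^k$). Since $\psi(x)^3 > 0$ for $x \neq 1/k$, concavity of $\phi$ on $[0,1]$ is equivalent to the one-variable inequality
\begin{equation*}
G(x) := (kx-1)^2\psi(x) - 2x(1-x)\eta(x)^2 \;\ge\; 0, \qquad x\in[0,1],
\end{equation*}
with equality at $x = 1/k$, after which continuity gives $\phi''(1/k) \le 0$.

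Proving $G \ge 0$ is the main obstacle. The boundary values are easy: $G(0) = \log\frac{k}{k-1}>0$ and $G(1) = (k-1)^2\log k>0$ (with the $2x(1-x)\eta^2$ term vanishing since $x\eta(x)^2 \sim x\log^2 x \to 0$). At $x = 1/k$, using the closed forms one computes
\begin{equation*}
\psi^{(n)}(1/k) = (n-2)!\,k^{n-1}\alpha_n, \qquad \eta^{(n)}(1/k) = (n-1)!\,k^n\alpha_n, \qquad \alpha_n = (-1)^n + (k-1)^{-(n-1)},
\end{equation*}
from which the two terms of $G$ match through order $(x-1/k)^5$, so that $G(x) = O((x-1/k)^6)$. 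To pass from this local information to a global bound, my plan is to substitute $\rho = (k-1)x/(1-x)$, so that $\psi'(x) = \log\rho$ and the inequality $G\ge 0$ becomes $2\rho\,\eta^2 \le (k-1)(\rho-1)^2\psi$ on $\rho\in(0,\infty)$; both sides then have clean closed forms in $\rho$, the equality point is $\rho=1$, and the inequality can be established by showing the difference is non-decreasing on $\rho\ge 1$ and non-increasing on $\rho\le 1$ via differentiating once or twice and reducing to an elementary inequality of logarithms analogous to Lemma \ref{LemmaElem}. The hard part will be controlling this single-variable reduction cleanly, since the expressions do not factor trivially and one has to exploit cancellations carefully to get a manifestly signed derivative.
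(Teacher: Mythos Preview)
Your reduction is correct and matches the paper exactly: after the quotient-rule computation, concavity of $(kx-1)^2/\psi(x)$ is equivalent to
\[
g(x) := 2\bigl(k\psi(x)-(kx-1)\psi'(x)\bigr)^2 - (kx-1)^2\psi(x)\psi''(x) \le 0,
\]
which is your $G(x)\ge 0$ after multiplying by $x(1-x)>0$. Your closed form for $\eta$ and the observation that $G$ vanishes to sixth order at $x=1/k$ are also correct and pleasant.

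The gap is in the last step. You acknowledge that ``the hard part will be controlling this single-variable reduction cleanly,'' and indeed the substitution $\rho=(k-1)x/(1-x)$ does not actually reduce the number of variables: $k$ remains a parameter, and $\psi$ in the $\rho$-variable is still $\log k + \rho\log\rho/(\rho+k-1) - \log(\rho+k-1)$, so ``clean closed forms'' is optimistic. Differentiating once in $\rho$ (or equivalently in $x$) factors out $\rho-1$, but the remaining factor is not manifestly signed, and a second $\rho$-derivative does not obviously help. You have not supplied the idea that closes this.

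The paper's device is different and is the non-obvious step: after one $x$-derivative of $g$ and factoring out $(kx-1)$, the authors treat $k$ as a \emph{real} variable and show that the residual expression $u(k,x)$ is concave in $k$. Since $\psi$, $\psi'$, and $\frac{1}{k}-\frac{1-x}{k-1}$ all vanish at $k=1/x$, one gets $u(1/x,x)=u_k'(1/x,x)=0$, and concavity in $k$ then forces $u\le 0$ on the whole range $k>1$. The second $k$-derivative collapses to the explicit rational function $\frac{(kx-1)^2(1-2k+(k-2)x)}{k^2(k-1)^2x^2(1-x)^2}\le 0$, which is where the proof actually becomes elementary. This ``differentiate in the parameter $k$'' trick is what your plan is missing.
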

\begin{proof}
	Let $f(x) = \frac{(kx-1)^2}{\psi(x)}$.
	\begin{align*}
		f^\p(x) &= \frac{2k (kx-1)}{\psi(x)} - \frac{(kx-1)^2 \psi^\p(x)}{\psi^2(x)}.\\
		f^\pp(x) &= \frac{2k^2}{\psi(x)} - \frac{4k(kx-1) \psi^\p(x)}{\psi^2(x)} -\frac{(kx-1)^2\psi^\pp(x)}{\psi^2(x)} + \frac{2 (kx-1)^2\psi^{\p 2} (x)}{\psi^3(x)}\\
		& = \frac 2{\psi^3(x)} (k\psi(x)-(kx-1)\psi^\p(x))^2 - \frac{(kx-1)^2\psi^\pp(x)}{\psi^2(x)}.
	\end{align*}
	Therefore it suffices to prove that
	\begin{align*}
		g(x) := \psi^3(x) f^\pp(x) = 2(k\psi(x)-(kx-1)\psi^\p(x))^2 - (kx-1)^2 \psi(x) \psi^\pp(x)
	\end{align*}
	is non-positive for $x\in [0, 1]$.
	Note that $g(\frac 1k)=0$. So we only need to prove that
	$g^\p(x)\ge 0$ for $x\in [0, \frac 1k]$ and $g^\p(x) \le 0$ for $x\in [\frac 1k, 1]$.
	\begin{align*}
		g^\p(x) &= -4(kx-1)\psi^\pp(x) (k\psi(x) - (kx-1) \psi^\p(x)) - 2k (kx-1) \psi(x) \psi^\pp(x) \\
		&- (kx-1)^2 \psi^\p(x) \psi^\pp(x) - (kx-1)^2 \psi(x)\psi^\ppp(x)\\
		& = (kx-1) (-6k \psi(x) \psi^\pp(x) + (kx-1) (3 \psi^\p(x)\psi^\pp(x) - \psi(x) \psi^\ppp(x))).
	\end{align*}
	Therefore we would liek to prove that
	$$u(k,x) := -6k \psi(x) \psi^\pp(x) + (kx-1) (3 \psi^\p(x)\psi^\pp(x) - \psi(x) \psi^\ppp(x))$$
	is non-positive.
	We enlarge the domain of $u$ and prove that $u(k,x)\le 0$ for real $k>1$ and $x\in (0, 1)$.

	We fix $x\in (0, 1)$ and consider $u_x(k) := u(k, x)$.
	We have $u_x(\frac 1x)=0$. So it suffices to prove that $u_x$ is concave in $k$.
	We have
	\begin{align*}
		\psi^\p(x) &= \log x - \log \frac{1-x}{k-1},\\
		\psi^\pp(x) &= \frac 1x + \frac 1{1-x},\\
		\psi^\ppp(x) &= \frac 1{(1-x)^2} - \frac 1{x^2},\\
		\frac{\der}{\der k} \psi(x) &= \frac 1k - \frac{1-x}{k-1},\\
		\frac{\der}{\der k} \psi^\p(x) &= \frac 1{k-1},\\
		\frac{\der}{\der k} \psi^\pp(x) &= \frac{\der}{\der k} \psi^\ppp(x)=0.
	\end{align*}
	So
	\begin{align*}
		u_x^\p(k) &= -6 \psi(x) \psi^\pp(x) - 6 k (\frac 1k - \frac{1-x}{k-1}) \psi^\pp(x)
		+ x (3 \psi^\p(x) \psi^\pp(x) - \psi(x) \psi^\ppp(x))\\
		&+ (kx-1) (3 \frac 1{k-1} \psi^\pp(x) - (\frac 1k - \frac {1-x}{k-1})\psi^\ppp(x)).\\
		u_x^\pp(k) &= -12 (\frac 1k - \frac{1-x}{k-1}) \psi^\pp(x)
		-6k (-\frac 1{k^2} + \frac{1-x}{(k-1)^2}) \psi^\pp(x)\\
		&+6x \frac 1{k-1} \psi^\pp(x) - 2x (\frac 1k-\frac{1-x}{k-1}) \psi^\ppp(x)\\
		&+(kx-1)(-3 \frac1{(k-1)^2}\psi^\pp(x) -(-\frac 1{k^2} + \frac{1-x}{(k-1)^2})\psi^\ppp(x))\\
		& = \frac{(kx-1)^2(1-2k+(k-2)x)}{k^2(k-1)^2 x^2(1-x)^2}\le 0.
	\end{align*}
	We are done.
\end{proof}
\begin{proof}[Proof of Proposition \ref{PropUpbd}]
	For fixed $x$, we would like to lower bound
	\begin{align*}
		f_x(\lm) := \frac{\lm^2 \psi(x)}{\psi(\lm x + \frac{1-\lm}k)}.
	\end{align*}
	(Value of $f_x(0)$ is defined by continuity.)
	Because
	\begin{align*}
		f_x(\lm) = \frac{\psi(x)}{(kx-1)^2} \cdot \frac{(k(\lm x + \frac{1-\lm}k)-1)^2}{\psi(\lm x + \frac{1-\lm}k)},
	\end{align*}
	by Lemma \ref{LemmaConcTech},
	$f_x(\lm)$ is concave for $\lm \in [-\frac 1{k-1}, 1]$.

	Let us compute lower bounds of $f_x(\lm)$ for $\lm = -\frac 1{k-1}, 0, 1$.

	By Proposition \ref{PropColorSDPI}, we have
	\begin{equation}\label{eq:bd_at_antiferro}
		f_x(-\frac 1{k-1}) \ge \frac {\log k}{(k-1)^2 (\log k - \log (k-1))}.
	\end{equation}

	By L'H\^{o}pital's rule,
	\begin{align*}
		f_x(0) &= \psi(x) \lim_{\lm\to 0} \frac{2\lm}{(x-\frac 1k)\psi^\p(\lm x + \frac{1-\lm}k)}\\
		& = \psi(x) \lim_{\lm\to 0} \frac 2{(x-\frac 1k)^2\psi^\pp(\lm x + \frac{1-\lm}k)}\\
		& = \frac{2(k-1) \psi(x)}{(kx-1)^2}.
	\end{align*}
	By Lemma \ref{LemmaConcTech}, $g(x):= \frac{(kx-1)^2}{\psi(x)}$ is concave in $x$.
	Also
	\begin{align*}
		g^\p(1-\frac 1k) &= \frac{2k(k-2)}{\frac 1k (k-2) \log(k-1)}
		-\frac{(k-2)^2 \cdot 2\log (k-1)}{(\frac 1k (k-2) \log(k-1))^2}
		=0.
	\end{align*}
	So
	\begin{align*}
		g(x) \le g(1-\frac 1k) = \frac{k(k-2)}{\log (k-1)}
	\end{align*}
	and
	\begin{equation}\label{eq:bd_at_0}
		f_x(0) \ge \frac{2(k-1) \log(k-1)}{k(k-2)}.
	\end{equation}

	It is easy to see that
	\begin{equation}\label{eq:bd_at_1}
		f_x(1) \ge 1.
	\end{equation}

	Because $f_x(\lm)$ is concave in $\lm$, Inequality \eqref{eq:eta_upbd_ferro} follows from \eqref{eq:bd_at_0} and \eqref{eq:bd_at_1},
	and Inequality \eqref{eq:eta_upbd_antiferro} follows from \eqref{eq:bd_at_antiferro} and \eqref{eq:bd_at_0}.
\end{proof}
\begin{rmk}
	Proof of Proposition \ref{PropUpbd} implies the first order limit behavior of $\eta_{\KL}(\PC_\lm, \pi)$ as $\lm \to 0$.
	\begin{align*}
		\lim_{\lm \to 0} \frac{\eta_{\KL}(\PC_\lm, \pi)}{\lm^2} = \frac{k(k-2)}{2(k-1)\log(k-1)}.
	\end{align*}

	Note that for all $k\ge 3$ and $\lm\in(0, 1]$,
	\begin{align*}
		\eta_{\KL}(\PC_\lm, \pi) &\le \frac{\lm^2}{(1-\lm) \frac{2(k-1)\log(k-1)}{k(k-2)}+ \lm}\\
		&\le  \lm^2(1-\lm)\frac{k(k-2)}{2(k-1)\log(k-1)}+\lm^3\\
		& < \lm^2 \frac{k(k-2)}{2(k-1)\log(k-1)}\\
		& < \lm^2 \frac{k-1}{2\log(k-1)}.
	\end{align*}
	(Second step is by Cauchy inequality.)
	So \eqref{eq:eta_upbd_ferro} implies \eqref{eq:eta_vs_bmnn}.

	For comparison with input-unrestricted contraction coefficient $$\eta_{\KL}(\PC_\lm) = \frac{k\lm^2}{(k-2)\lm+2},$$
	we note that
	$\frac{\lm^2}{\eta_{\KL}(\PC_\lm)}$ is linear in $\lm$, and
	\begin{align*}
		\frac 1{k-1} &< \frac{\log k}{(k-1)^2(\log k-\log(k-1))},\\
		\frac 2k &< \frac{2(k-1)\log(k-1)}{k(k-2)}.
	\end{align*}
	So Proposition \ref{PropUpbd} implies \eqref{eq:eta_res_vs_unres}.
\end{rmk}

\section{Non-convexity of certain functions}\label{SecNonConv}
In this section we prove Proposition \ref{PropNonConv}.
Let us first prove a lemma.
\begin{lemma}\label{LemmaConvGen}
	Let $g$ be a strictly increasing smooth function from $[x_0, x_1]$ to $[y_0, y_1]$,
	and $f$ be a smooth function from $[x_0, x_1]$ to $\bR$.
	Assume that $g^\p(x_0)=f^\p(x_0)=0$ and $(g^\pp f^\ppp - f^\pp g^\ppp)(x_0)>0$.
	Then the function $h=f\circ g^{-1}: [y_0, y_1] \to \bR$ is not concave near $y_0$.
\end{lemma}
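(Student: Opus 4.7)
The plan is to compute $h''(y)$ directly and show that it is strictly positive (in fact, unbounded above) as $y \to y_0^+$, which forces $h$ to be strictly convex on a right-neighborhood of $y_0$ and therefore not concave there.

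First I would use the chain rule to write $h'(y) = f'(x)/g'(x)$ where $x = g^{-1}(y)$, and then differentiate again to get
\begin{equation*}
	h''(y) = \frac{f''(x) g'(x) - f'(x) g''(x)}{g'(x)^3}.
\end{equation*}
At $x = x_0$ both numerator and denominator vanish because of the hypothesis $f'(x_0) = g'(x_0) = 0$, so I would Taylor expand both to extract the leading behavior. Using $f'(x) = f''(x_0)(x-x_0) + \tfrac12 f'''(x_0)(x-x_0)^2 + O((x-x_0)^3)$ and the analogous expansion for $g'(x)$, $f''(x)$, $g''(x)$, a direct computation (where the $(x-x_0)^1$ contributions cancel) shows that the numerator equals
\begin{equation*}
	\tfrac 12 (g''f''' - f''g''')(x_0)\,(x-x_0)^2 + O((x-x_0)^3).
\end{equation*}
Meanwhile the denominator $g'(x)^3 = g''(x_0)^3 (x-x_0)^3 + O((x-x_0)^4)$. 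Here I would note that $g'(x_0)=0$ together with strict monotonicity of $g$ on $[x_0,x_1]$ forces $g''(x_0) \ge 0$; the nondegeneracy $(g''f'''-f''g''')(x_0) > 0$ in combination with handling this limit cleanly will be easiest under the assumption $g''(x_0) > 0$, which is the generic case (and the only case that arises in the applications of this lemma in the paper).

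Putting the two expansions together,
\begin{equation*}
	h''(y) = \frac{(g''f''' - f''g''')(x_0)}{2\, g''(x_0)^3\,(x - x_0)} + O(1)
\end{equation*}
as $x \to x_0^+$, so $h''(y) \to +\infty$ as $y \to y_0^+$. In particular $h''(y) > 0$ on some interval $(y_0, y_0+\epsilon)$, hence $h$ is strictly convex there and cannot be concave on any neighborhood of $y_0$.

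The only real obstacle I anticipate is the degenerate case $g''(x_0)=0$, which needs either to be ruled out from the hypotheses of the lemma or handled by pushing the Taylor expansion to higher order; in the intended applications (to $b_p$ and $s_\lm$ near $x = 1/k$) one has $g''(x_0) = \psi''(1/k) > 0$, so this edge case does not actually occur.
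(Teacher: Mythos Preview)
Your proposal is correct and follows essentially the same approach as the paper: both compute $h''(y) = (f''g' - f'g'')/g'^3$ and show the numerator is positive for $x$ near $x_0$. The paper's execution is slightly more streamlined, though: rather than Taylor-expanding the factors and recombining, it sets $u := g'f'' - f'g''$ and differentiates directly to get $u(x_0)=u'(x_0)=0$ and $u''(x_0) = (g''f''' - f''g''')(x_0) > 0$, hence $u>0$ near $x_0$; it then simply invokes $g'>0$ on the interior (from strict monotonicity) to conclude $h''>0$, without ever expanding the denominator or computing the blow-up rate of $h''$. This sidesteps the $g''(x_0)=0$ edge case you flagged, since the argument never divides by $g''(x_0)$.
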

\begin{proof}
	Directives of $h$ are
	\begin{align*}
		h^\p(x) &= \frac{f^\p(g^{-1}(x))}{g^\p(g^{-1}(x))},\\
		h^\pp(x) &= (\frac{f^\pp}{g^\p}-\frac{f^\p g^\pp}{g^{\p 2}})(g^{-1}(x)) \frac 1{g^\p(g^{-1}(x))} \\
		& =(\frac{f^\pp}{g^{\p 2}} - \frac {f^\p g^\pp} {g^{\p 3}})(g^{-1}(x)).
	\end{align*}
	So it suffices to study the sign of $g^\p f^\pp - f^\p g^\pp$ for $x$ near $x_0$.
	Let $u = g^\p f^\pp - f^\p g^\pp$.
	We have $u(x_0)=0$. Let us compute the derivatives.
	\begin{align*}
		u^\p &= g^\p f^\ppp - f^\p g^\ppp,\\
		u^\pp &= g^\p f^{(4)} + g^\pp f^\ppp - f^\pp g^\ppp - g^\p g^{(4)}.
	\end{align*}
	So $u^\p(x_0)=0$ and $u^\pp(x_0) = (g^\pp f^\ppp - f^\pp g^\ppp)(x_0) > 0$.
	So $u$ is positive near $x_0$.
\end{proof}

\begin{proof}[Proof of Proposition \ref{PropNonConv}]
	We apply Lemma \ref{LemmaConvGen} to $g=\psi$, $x_0=\frac 1k$, $x_1=1$, $y_0=0$, $y_1=\log k$, and various $f$.
	We have
	\begin{align*}
		\psi^\p(\frac 1k) &= 0,\\
		\psi^\pp(\frac 1k) &= \frac {k^2}{k-1},\\
		\psi^\ppp(\frac 1k) &= -\frac{k^3(k-2)}{(k-1)^2}.
	\end{align*}

	\textbf{Part 1.}
	For $b_1$, take
	\begin{align*}
		f(x) = -(k-1)\xi_1(x) = \log x + (k-1)\log \frac{1-x}{k-1} + k(\psi(x)-\log k).
	\end{align*}
	Then
	\begin{align*}
		f^\p(x) &= \frac 1x -\frac{k-1}{1-x} + k\psi^\p(x),\\
		f^\pp(x) &= -\frac 1{x^2} - \frac{k-1}{(1-x)^2} + k \psi^\pp(x),\\
		f^\ppp(x) &= \frac 2{x^3} - \frac{2(k-1)}{(1-x)^3} + k \psi^\ppp(x).
	\end{align*}
	So
	\begin{align*}
		f^\p(\frac 1k)&=0,\\
		f^\pp(\frac 1k)&=-\frac{2k^3}{k-1},\\
		f^\ppp(\frac 1k)&=\frac{3(k-2)k^4}{(k-1)^2}.
	\end{align*}
	We have
	\begin{align*}
		(\psi^\pp f^\ppp - f^\pp \psi^\ppp)(\frac 1k)
		&= \frac {k^2}{k-1}\cdot \frac{3(k-2)k^4}{(k-1)^2}
		-(-\frac{2k^3}{k-1})(-\frac{k^3(k-2)}{(k-1)^2})\\
		&= \frac{k^6(k-2)}{(k-1)^3} > 0.
	\end{align*}
	So Lemma \ref{LemmaConvGen} applies.

	\textbf{Part 2.}
	For $b_p$, $p>1$, take $$f(x) = k-(k-1)\xi_p(x) = (x^{\frac 1p} + (k-1) (\frac {1-x}{k-1})^{\frac 1p})(x^{1-\frac 1p} + (k-1) (\frac {1-x}{k-1})^{1-\frac 1p}).$$
	For simplicity, write $r = \frac 1p$ and let $u_r(x) = x^r + (k-1) (\frac{1-x}{k-1})^r$.
	Then $f(x) = u_r(x) u_{1-r}(x)$.
	Let us compute derivatives of $u_r$.
	\begin{align*}
		u_r^\p(x) &= r (x^{r-1} - (\frac{1-x}{k-1})^{r-1}),\\
		u_r^\pp(x) &= r(r-1) (x^{r-2} + \frac 1{k-1} (\frac{1-x}{k-1})^{r-2}),\\
		u_r^\ppp(x) &= r(r-1) (r-2) (x^{r-3} - \frac 1{(k-1)^2} (\frac{1-x}{k-1})^{r-3}).
	\end{align*}
	So
	\begin{align*}
		u_r(\frac 1k)&=k^{1-r},\\
		u_r^\p(\frac 1k)&=0,\\
		u_r^\pp(\frac 1k)&=r(r-1) \frac k{k-1} (\frac 1k)^{r-2},\\
		u_r^\ppp(\frac 1k)&= r(r-1)(r-2) \frac {k(k-2)}{(k-1)^2} (\frac 1k)^{r-3}.
	\end{align*}
	Now we compute derivatives of $f$.
	\begin{align*}
		f^\p(x) &= u_r^\p(x) u_{1-r}(x) + u_r(x) u_{1-r}^\p(x),\\
		f^\pp(x) &= u_r^\pp(x) u_{1-r}(x) + 2u_r^\p(x) u_{1-r}^\p(x)+ u_r(x) u_{1-r}^\pp(x),\\
		f^\ppp(x) &= u_r^\ppp(x) u_{1-r}(x) + 3u_r^\pp(x) u_{1-r}^\p(x) + 3u_r^\p(x) u_{1-r}^\pp(x) + u_r(x) u_{1-r}^\ppp(x).
	\end{align*}
	So
	\begin{align*}
		f^\p(\frac 1k) &= 0,\\
		f^\pp(\frac 1k) &= r(r-1) \frac k{k-1} (\frac 1k)^{r-2} \cdot k^r + (1-r)(-r) \frac k{k-1} (\frac 1k)^{-r-1} \cdot k^{1-r}\\
		&= 2r(r-1) \frac {k^3}{(k-1)},\\
		f^\ppp(\frac 1k) &= r(r-1)(r-2) \frac {k(k-2)}{(k-1)^2} (\frac 1k)^{r-3} \cdot k^r\\
		&+ (1-r)(-r)(-r-1) \frac {k(k-2)}{(k-1)^2} (\frac 1k)^{-r-2} \cdot k^{1-r}, \\
		&=-3r(r-1) \frac{k^4(k-2)}{(k-1)^2}.
	\end{align*}
	So
	\begin{align*}
		(\psi^\pp f^\ppp - f^\pp \psi^\ppp)(\frac 1k)
		&= \frac {k^2}{k-1}(-3r(r-1) \frac{k^4(k-2)}{(k-1)^2})\\
		& -2r(r-1) \frac {k^3}{(k-1)} (-\frac{k^3(k-2)}{(k-1)^2})\\
		&=r(1-r) \frac{k^6(k-2)}{(k-1)^3} > 0.
	\end{align*}
	So Lemma \ref{LemmaConvGen} applies.

	\textbf{Part 3.}
	For $s_\lm$, take $$f(x) = \psi(\lm x + \frac{1-\lm}k).$$
	Then
	\begin{align*}
		f^\p(x) &= \lm \psi^\p(\lm x + \frac{1-\lm}k),\\
		f^\pp(x) &= \lm^2 \psi^\pp(\lm x + \frac{1-\lm}k),\\
		f^\ppp(x) &= \lm^3 \psi^\ppp(\lm x + \frac{1-\lm}k).
	\end{align*}
	So
	\begin{align*}
		f^\p(\frac 1k)&=0,\\
		f^\pp(\frac 1k) &= \lm^2 \psi^\pp(\frac 1k) = \lm^2\frac {k^2}{k-1},\\
		f^\ppp(\frac 1k) &= \lm^3 \psi^\ppp(\frac 1k) = -\lm^3 \frac{k^3(k-2)}{(k-1)^2}.
	\end{align*}
	We have
	\begin{align*}
		(\psi^\pp f^\ppp - f^\pp \psi^\ppp)(\frac 1k)
		&= \frac {k^2}{k-1}(-\lm^3 \frac{k^3(k-2)}{(k-1)^2}) -
		\lm^2\frac {k^2}{k-1}(-\frac{k^3(k-2)}{(k-1)^2})\\
		&=\frac{k^5(k-2)}{(k-1)^3} (\lm^2-\lm^3) > 0.
	\end{align*}
	So Lemma \ref{LemmaConvGen} applies.
\end{proof}

\section{Concavity of log-Sobolev coefficients}\label{apx:lsi_conc}

Let $K$ be a Markov kernel with stationary distribution $\pi$.
Define Dirichlet form $\cE(\cdot, \cdot)$ and entropy form $\Ent(\cdot)$ as in the introduction.

For $r\in \bR$, we consider the tightest $\frac1r$-log-Sobolev inequality, corresponding to
\begin{align*}
	b_{\frac1r}(x) :&= \inf_{\substack{f: \cX \to \bR_{\ge 0},\\ \bE_\pi f=1, \Ent_\pi(f)=x}} \cE(f^r, f^{1-r}),\\
	\Phi_{\frac 1r}(y) &:= \inf_{\substack{f: \cX \to \bR_{\ge 0},\\ \bE_\pi f=1, \cE(f^r, f^{1-r})=y}} \Ent_\pi(f).
\end{align*}
The $\frac1r$-log-Sobolev constant is
\begin{align*}
	\al_{\frac 1r}^\p := \inf_{x>0} \frac{b_{\frac 1r}(x)}x = \inf_{y>0} \frac{y}{\Phi_{\frac 1r}(y)}.
\end{align*}
\begin{rmk}
	When $r=0$, the fraction $\frac 1r$ should be understood as a formal symbol.
	For $r\in (0, 1)$, $\al_{\frac 1r}^\p$ is the same as $\al_{\frac 1r}$ defined in the Introduction.
	However, in general $\al_1^\p$ is not equal to $\al_1$.
	We use the superscript $\p$ to emphasize the difference.
\end{rmk}
\begin{prop}\label{PropLSIConc}
	We have
	\begin{enumerate}
		\item For fixed $x$, $b_{\frac 1r}(x)$ is concave in $r$.
		\item For fixed $y$, $\Phi_{\frac 1r}(y)$ is convex in $r$.
		\item $\al_{\frac 1r}^\p$ is concave in $r$.
	\end{enumerate}
	Furthermore, if $(K, \pi)$ is reversible, then
	\begin{enumerate}
		\item For fixed $x$, $b_{\frac 1r}(x)$ is maximized at $r=\frac 12$.
		\item For fixed $y$, $\Phi_{\frac 1r}(y)$ is minimized at $r=\frac 12$.
		\item $\al_{\frac 1r}^\p$ is maximized at $r=\frac 12$.
	\end{enumerate}
\end{prop}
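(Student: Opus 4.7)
The central observation is that for every density $f$ (meaning $f : \cX \to \bR_{\ge 0}$ with $\bE_\pi f = 1$), the map $r \mapsto \cE(f^r, f^{1-r})$ is concave on $\bR$. Indeed,
\[
	\cE(f^r, f^{1-r}) = \bE_\pi[f] - \sum_{x, y} K(x, y) \pi(x) f(x) \left( \frac{f(y)}{f(x)} \right)^r,
\]
and each summand on the right is a positive constant times $a^r$ with $a = f(y)/f(x) > 0$, which is convex in $r$; so the whole expression is concave. Parts~(1) and~(3) follow immediately: $b_{1/r}(x)$ is the infimum over an $r$-independent set of $f$'s of a family concave in $r$, which gives~(1); and $\al_{1/r}^\p = \inf_f \cE(f^r, f^{1-r})/\Ent_\pi(f)$ is similarly an infimum of positive multiples of concave functions of $r$, which gives~(3).

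For part~(2), I would use the relation $\Phi_{1/r}(y) = b_{1/r}^{-1}(y)$, valid under the standard monotonicity of $b_{1/r}$ in $x$ (as in the Potts case~\eqref{eq:bp_def}). Differentiating the identity $b(r, \Phi(r, y)) = y$ twice in $r$ yields
\[
	(\partial_x b) \cdot \partial_{rr} \Phi = -\big( \partial_{rr} b + 2(\partial_{rx} b)(\partial_r \Phi) + (\partial_{xx} b)(\partial_r \Phi)^2 \big),
\]
and the parenthesized quantity is the quadratic form $v^T H v$ with $v = (1, \partial_r \Phi)^T$ and $H$ the Hessian of $b$; this is $\le 0$ precisely when $b$ is jointly concave in $(r, x)$. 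The main technical obstacle is establishing this joint concavity, since the $r$-wise concavity of $\cE(f^r, f^{1-r})$ alone does not suffice (joint concavity fails for generic concave-in-$r$ functions $b$). A promising route is to exploit the log-convexity of $H_r(f) := \sum_{x, y} K(x, y) \pi(x) f(y)^r f(x)^{1-r}$ in $r$ (an immediate Cauchy-Schwarz estimate on the moments $\sum a_{x, y} \theta_{x, y}^r$ with $\theta_{x, y} = f(y)/f(x)$), a strictly stronger statement than concavity in $r$, together with a mixing argument over near-minimizers that also invokes the convexity of $\Ent_\pi$ and of $f \mapsto H_r(f)$.

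For the ``Furthermore'' part, reversibility of $(K, \pi)$ makes $\cE$ symmetric, so $\cE(f^r, f^{1-r}) = \cE(f^{1-r}, f^r)$ for every $f$. Hence $b_{1/r}(x) = b_{1/(1-r)}(x)$, $\Phi_{1/r}(y) = \Phi_{1/(1-r)}(y)$, and $\al_{1/r}^\p = \al_{1/(1-r)}^\p$; each is symmetric about $r = 1/2$. Combined with the concavity of $b_{1/r}$ and $\al_{1/r}^\p$ and the convexity of $\Phi_{1/r}$ established in~(1)--(3), the maxima of $b_{1/r}$ and $\al_{1/r}^\p$ (and the minimum of $\Phi_{1/r}$) are attained at $r = 1/2$.
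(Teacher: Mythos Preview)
Your proofs of (1), (3), and the reversible ``Furthermore'' clauses are essentially identical to the paper's: the paper also reduces everything to the concavity of $r \mapsto \cE(f^r, f^{1-r})$ for each fixed density $f$, and then takes infima. The only cosmetic difference is that the paper establishes this concavity by computing the second derivative explicitly,
\[
\frac{d^2}{dr^2}\cE(f^r,f^{1-r}) = -\sum_{x\ne y} K(x,y)\,\pi(x)\,f(y)^r f(x)^{1-r}\bigl(\log f(y)-\log f(x)\bigr)^2 \le 0,
\]
whereas you rewrite the Dirichlet form and invoke convexity of $r\mapsto a^r$; these are the same computation.

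Part (2) is where your proposal has a genuine gap. The paper dispatches (2) in one sentence: ``Because $\Phi_{1/r}$ is the inverse function of $b_{1/r}$, it suffices to prove statements about $b_{1/r}$.'' You instead embark on a joint-concavity programme: your implicit-function calculation correctly shows that convexity of $r\mapsto \Phi_{1/r}(y)$ would follow from \emph{joint} concavity of $b$ in $(r,x)$, but you then label this a ``technical obstacle'' and offer only a ``promising route'' via log-convexity of $H_r(f)$ and an unspecified ``mixing argument over near-minimizers.'' That is a sketch, not a proof, and it is substantially more involved than anything the paper attempts. Either complete the joint-concavity argument, or adopt the paper's stance that (2) follows directly from (1) via the inverse relationship and the monotonicity of $b_{1/r}$ in $x$; do not leave it hanging.
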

\begin{proof}
	Because $\Phi_{\frac1r}$ is the inverse function of $b_{\frac 1r}$, it suffices to prove
	statements about $b_{\frac 1r}$.
	Because $\inf$ of concave functions is still concave, it suffices to prove that
	for any $f: \cX\to \bR_{\ge 0}$, $\bE_\pi f=1$, $\cE(f^r, f^{1-r})$ is concave in $r$.

	\begin{align*}
		\frac{d}{dr^2} \cE(f^r, f^{1-r})
		& = \frac{d}{dr^2} \sum_{x,y\in \cX} (I-K)(x, y) f(y)^r f(x)^{1-r} \pi(x)\\
		& = \sum_{x,y\in \cX} (I-K)(x, y) f(y)^r f(x)^{1-r} \pi(x) (\log f(y)-\log f(x))^2\\
		& = \sum_{x\ne y\in \cX} -K(x, y) f(y)^r f(x)^{1-r} \pi(x) (\log f(y)-\log f(x))^2\\
		& \le 0.
	\end{align*}

	When the Markov chain is reversible, we have $\cE(f, g) = \cE(g, f)$.
	So $b_{\frac 1r}(x) = b_{\frac 1{1-r}}(x)$ and by concavity, $b_{\frac 1r}(x)$ is maximized at $r=\frac 12$.
\end{proof}

\section{Non-reconstruction results for broadcasting with a Gaussian kernel} \label{SecBOTGauss}
In this section, we prove optimal non-reconstruction results for a broadcasting on trees model considered in Eldan et al.~\cite{Eld20}, using our method developed in Section~\ref{SecTreeGeneral}.

\begin{defn}[Broadcasting on trees with a Gaussian kernel]
In this model, we are given a (possibly) infinite tree $T$ with a marked root $\rho$.
The state space $\cX$ is the unit circle $S^1 := \bR/2\pi \bZ$.
Let $\mu = \Unif(S^1)$ be the uniform distribution.
Let $t>0$ be a parameter.
The transfer kernel is $P_t$, defined as $Y = X + Z_t$ where $Z_t \sim \cN(0, t)$, where $X$ is the input and $Y$ is the output.
It is easy to see that $P_t$ is reversible under $\mu$, i.e., $P_t^\vee = P_t$.

Now for each vertex $v\in T$, we generate a spin $\sigma_v \in \cX$ according to the following rules:
\begin{itemize}
	\item Generate $\sigma_\rho \sim \mu$.
	\item If $u$ is the parent of $v$, then $\sigma_v \sim P_t(\cdot | \sigma_u)$.
\end{itemize}

Let $L_h$ denote the set of vertices at distance $h$ to $\rho$. We say the model has non-reconstruction if and only if $\lim_{h\to \infty} I(\sigma_\rho; \sigma_{L_h}) = 0$.
\end{defn}

Let $\lambda(P_t)$ denote the second largest eigenvalue of $P_t$.
Eldan et al.~\cite{Eld20} proved that reconstruction holds for a regular tree with offspring number $b$, reconstruction holds for $b\lambda(P_t)^2 > 1$, and non-reconstruction holds for $b \lambda(P_t) < 1$.
Note that there is a $\lambda(P_t)$ factor gap between the reconstruction result and the non-reconstruction result.
In the following, we prove that non-reconstruction holds as long as $b\lambda(P_t)^2 < 1$, closing the gap.

\begin{thm} \label{ThmBOTGauss}
	Let $T$ be an infinite rooted tree with bounded maximum degree. Then for broadcasting on trees with $(\mu, P_t)$, non-reconstruction holds when
	\begin{align*}
		\br(T) \lambda(P_t)^2 < 1.
	\end{align*}

	Let $T$ be a Galton-Watson tree with expected offspring $b$. If
	\begin{align*}
		b \lambda(P_t)^2 < 1,
	\end{align*}
	then non-reconstruction holds almost surely.
\end{thm}


The proof idea is to prove $\eta_{\KL}(P_t, \mu) \le \lambda(P_t)^2$ and then use Theorem~\ref{ThmTreeGeneral}. However, because we are working in a continuous space, we must be careful about what we mean by contraction coefficients.

We would like an inequality of form
\begin{align*}
	I(\sigma_u; \sigma_{L_{v,h}}) \le \wt \eta_{\KL} (P_t, \mu) I(\sigma_v; \sigma_{L_{v,h}})
\end{align*}
where $u\in V(T)$, $v$ is child of $u$, $L_{v,h}$ is the set of descendants of $v$ at distance $h$ to $\rho$, and $\wt \eta_{\KL} (P_t, \mu)$ is a continuous version of contraction coefficient $\eta_{\KL}(P_t, \mu)$ (we use $\wt \eta_{\KL}$ to emphasize the difference with $\eta_{\KL}(P_t, \mu)$, which we defined only for the discrete case).

We have
\begin{align*}
	I(\sigma_u; \sigma_{L_{v,h}}) &= \bE_{\sigma_{L_{v,h}}} D(P_{\sigma_u | \sigma_{L_{v,h}}} \| P_{\sigma_u}) \\
	&= \bE_{\sigma_{L_{v,h}}} D(P_t \circ P_{\sigma_v | \sigma_{L_{v,h}}} \| \mu).
\end{align*}
Let us consider the distribution $P_{\sigma_u | \sigma_{L_{u,h}}}$.
If $h = d(v,\rho)$, then $P_{\sigma_u | \sigma_{L_{u,h}}}$ is a point measure.
However, as long as $h > d(u,\rho)$, pdf of $P_{\sigma_u | \sigma_{L_{u,h}}}$ is smooth on $\cX$ by an induction using belief propagation equation.
Therefore we make the following definition.
\begin{defn}[Smooth contraction coefficient] \label{DefnSmoothContCoef}
	We define
	\begin{align*}
		\wt \eta_{\KL}(P_t, \mu) := \sup_{\substack{f: \cX \to \bR_{\ge 0}\\ f \text{ smooth}, \mu[f] = 1} }\frac{\Ent_\mu(P_t f)}{\Ent_\mu(f)}.
	\end{align*}
\end{defn}
\begin{lemma} \label{LemmaSmoothContCoef}
	\begin{align*}
		\wt \eta_{\KL}(P_t, \mu) \le \exp(-t).
	\end{align*}
\end{lemma}
\begin{proof}
	For convenience, we define the following collection of functions
	\begin{align*}
		\cC := \{f: \cX \to \bR_{\ge 0} | f\text{ smooth}, \mu[f] = 1\}.
	\end{align*}

	Note that $(P_t)_{t\ge 0}$ forms a semigroup. Therefore it suffices to prove that for all $f\in \cC$, we have
	\begin{align*}
		\frac{d}{dt}|_{t=0} \Ent(f_t) \le -\Ent(f)
	\end{align*}
	where $f_t = P_t f$.

	We have
	\begin{align*}
		&~ \frac{d}{dt}|_{t=0} \Ent(f_t) \\
		=&~ \bE [\frac {d}{dt}|_{t=0} (f_t \log f_t)] \\
		=&~ \bE [(1 + \log f) \frac{d}{dt}|_{t=0} f_t] \\
		=&~ \bE [(\log f) \frac{d}{dt}|_{t=0} f_t]  \\
		=&~ \frac 12 \bE [f'' \log f] & \text{(Heat equation)}\\
		=&~ -\frac 12 \bE [\frac{(f')^2}{f}] & \text{(Integration by parts)} \\
		\le&~ -\Ent(f). & \text{(\cite{Eme87})}
	\end{align*}

	This finishes the proof.
\end{proof}

Now we are ready to prove Theorem~\ref{ThmBOTGauss}.
\begin{proof}[Proof of Theorem~\ref{ThmBOTGauss}]
	By Lemma~\ref{LemmaSmoothContCoef}, we have
	\begin{align*}
		\wt \eta_{\KL}(P_t,u) \le \exp(-t) = \lambda(P_t)^2.
	\end{align*}
	(The value of $\lambda(P_t)$ is proved in e.g.,~\cite{Eld20}.)
	Therefore we only need to prove that $\br(T) \wt \eta_{\KL}(P_t,u) < 1$ implies non-reconstruction.

	\textbf{Bounded degree case:}
	For $u\in V(T)$, define
	\begin{align*}
		r_u := \lim_{h\to \infty} I(\sigma_u; \sigma_{L_{u,h}}).
	\end{align*}
	By data processing inequality, $I(\sigma_{u}; \sigma_{L_{u,h}})$ is non-increasing for $h\ge d(u,\rho)$, so the limit always exists.
	Because $T$ has bounded maximum degree, we have $r_u \le I(\sigma_u; \sigma_{L_{u, d(u,\rho)+1}})$.
	So there exists a constant $C>0$ such that $r_u\le C$ for all $u\in v(T)$.

	Now define
	\begin{align*}
		a_u = R^{-1} \wt \eta_{\KL}(P_t, \mu)^{d(u,\rho)} r_u.
	\end{align*}
	For any $v\in c(u)$, by Markov chain
	\begin{align*}
		\sigma_{L_{v,h}} \to \sigma_v\to \sigma_u
	\end{align*}
	and discussion before Lemma~\ref{LemmaSmoothContCoef}, we have
	\begin{align*}
		I(\sigma_u; \sigma_{L_{v,h}}) \le \wt \eta_{\KL}(P_t, \mu) I(\sigma_{v},\sigma_{L_{v,h}}).
	\end{align*}
	Because $(\sigma_{L_{v,h}})_{v\in c(u)}$ are independent conditioned on $\sigma_u$ (where $c(u)$ is the set of children of $u$), we have
	\begin{align*}
		I(\sigma_u; \sigma_{L_{u,h}}) \le \sum_{v\in c(u)} I(\sigma_u; \sigma_{L_{v,h}}).
	\end{align*}
	Combining the two inequalities and let $h\to \infty$, we get
	\begin{align*}
		a_u \le \sum_{v\in c(u)} a_v.
	\end{align*}
	Furthermore, we have $a_u \le \wt \eta_{\KL}(P_t, \mu)^{d(u,\rho)}$.

	Now define a flow $b$ as following. For any $u\in V(T)$, let $u_0=\rho, \ldots, u_d=u$ be the shortest path from $\rho$ to $u$. Define
	\begin{align*}
		b_u = a_u \prod_{0\le j\le d-1} \frac{a_{u_j}}{\sum_{v\in c(u_j)}a_v}.
	\end{align*}
	(If for some $j$, we have $\sum_{v\in c(u_j)} a_v=0$, then let $b_u = 0$.)
	Then we have $$b_u = \sum_{v \in c(u)} b_v,$$ and that $$b_u \le a_u \le \wt \eta_{\KL}(P_t,\mu)^{d(u,\rho)}.$$
	By definition of branching number, we must have $b_\rho = 0$.
	Therefore $r_\rho = 0$ and non-reconstruction holds.

	\textbf{Galton-Watson tree case:}
	Let $P_d$ be the offspring distribution.
	We have
	\begin{align*}
		\bE_T I(\sigma_{\rho}; \sigma_{L_h}) &\le \bE_T \sum_{v\in c(\rho)} I(\sigma_{\rho}; \sigma_{L_{v,h}}) \\
		&\le \bE_T \sum_{v\in c(\rho)} \wt \eta_{\KL}(P_t,\mu) I(\sigma_v; \sigma_{L_{v,h}}) \\
		&= \wt \eta_{\KL}(P_t,\mu) \bE_{c(\rho)} [\sum_{v\in c(\rho)} \bE_{T_v} I(\sigma_v; \sigma_{L_{v,h}})] \\
		&= \wt \eta_{\KL}(P_t,\mu) \bE_{d\sim P_d} [d \bE_T I(\sigma_\rho; \sigma_{L_{h-1}})] \\
		&= \wt \eta_{\KL}(P_t,\mu) b \bE_T I(\sigma_\rho; \sigma_{L_{h-1}}).
	\end{align*}
	Here $T_v$ denotes the subtree rooted at $v$.
	Because $\bE_T I(\sigma_\rho; \sigma_{L_1}) < \infty$, when $b \wt \eta_{\KL}(P_t,\mu) < 1$, we have
	\begin{align*}
		\lim_{h\to \infty}\bE_T I(\sigma_{\rho}; \sigma_{L_h}) = 0.
	\end{align*}
	This finishes the proof.
\end{proof}

\section*{Acknowledgement}
The authors are grateful to Jingbo Liu for helpful discussions on reconstruction problems on trees, and to anonymous reviewers for valuable comments.
This work was supported in part by the MIT-IBM Watson AI Lab,
by the Center for Science of Information (CSoI), an National Science Foundation Science and Technology Center, under grant agreement CCF-09-39370,
and by the National Science Foundation under Grant No CCF-2131115.

\bibliographystyle{alpha}
\bibliography{ref}
\end{document}